\definecolor{dgreen}{HTML}{006600}
\providecommand{\U}[1]{\protect\rule{.1in}{.1in}}
\newtheorem{theorem}{Theorem}
\newtheorem{corollary}{Corollary}
\newtheorem{definition}{Definition}
\newtheorem{example}{Example}
\newtheorem{lemma}{Lemma}
\newtheorem{proposition}{Proposition}
\newtheorem{remark}{Remark}
\def\Tr{\operatorname{Tr}}
\def\d{\operatorname{d}}
\def\kex{\operatorname{EXT_{k}}}
\def\SEP{\operatorname{SEP}}
\def\EXT{\operatorname{EXT}}
\def\Ent{\operatorname{Ent}}
\def\supp{\operatorname{supp}}
\def\T{\operatorname{T}}
\def\({\left(}
\def\){\right)}
\def\[{\left[}
\def\]{\right]}
\def\V{\Vert}
\def\id{\operatorname{id}}
\let\emptyset\varnothing
\newcommand{\bb}[1]{\mathbb{#1}}
\newcommand{\mc}[1]{\mathcal{#1}}
\newcommand{\wt}[1]{\widetilde{#1}}
\newcommand{\tf}[1]{\mathbf{#1}}
\newenvironment{proof}[1][Proof]{\noindent\textbf{#1.} }{\ \rule{0.5em}{0.5em}}
\begin{document}
\preprint{ }
\title[]{Resource theory of unextendibility and non-asymptotic quantum capacity}

\author{Eneet Kaur}
\email{ekaur1@lsu.edu}
\affiliation{Hearne Institute for Theoretical Physics, Department of Physics and Astronomy, and Center for Computation and Technology,
Louisiana State University, Baton Rouge, Louisiana 70803, USA}
\affiliation{Institute for Quantum Computing and Department of Physics and Astronomy, University of Waterloo, Waterloo, Ontario N2L 3G1, Canada}
\author{Siddhartha Das}
\email{sidddas@ulb.ac.be}
\affiliation{Hearne Institute for Theoretical Physics, Department of Physics and Astronomy, and Center for Computation and Technology,
Louisiana State University, Baton Rouge, Louisiana 70803, USA}
\affiliation{Centre for Quantum Information \& Communication (QuIC), \'{E}cole polytechnique de Bruxelles,   Universit\'{e} libre de Bruxelles, Brussels, B-1050, Belgium}
\author{Mark M. Wilde}
\email{mwilde@lsu.edu}
\affiliation{Hearne Institute for Theoretical Physics, Department of Physics and Astronomy, and Center for Computation and Technology,
Louisiana State University, Baton Rouge, Louisiana 70803, USA}
\author{Andreas Winter}
\email{andreas.winter@uab.cat}
\affiliation{ICREA \&{} F\'{\i}sica 
 Te\`{o}rica: Informaci\'{o} i Fen\`{o}mens Qu\`{a}ntics, 
 Departament de F\'{\i}sica, Universitat Aut\`{o}noma de Barcelona, 
 ES-08193 Bellaterra (Barcelona), Spain}
\keywords{one two three}
\pacs{PACS number}

\begin{abstract}
In this paper, we introduce the resource theory of unextendibility as a relaxation of the resource theory of entanglement. The free states in this resource theory are the $k$-extendible states, associated with the inability to extend quantum entanglement in a given quantum state to multiple parties. The free channels are $k$-extendible channels, which preserve the class of $k$-extendible states. We define several quantifiers of unextendibility by means of generalized divergences and establish their properties. By utilizing this resource theory, we obtain non-asymptotic upper bounds on the rate at which quantum communication or entanglement preservation is possible over a finite number of uses of an arbitrary quantum channel assisted by $k$-extendible channels at no cost. These bounds are significantly tighter than previously known bounds for both the depolarizing and erasure channels. Finally, we revisit the pretty strong converse for the quantum capacity of antidegradable channels and establish an upper bound on the non-asymptotic quantum capacity of these channels.

\end{abstract}
\volumeyear{year}
\volumenumber{number}
\issuenumber{number}
\eid{identifier}
\date{\today}
\startpage{1}
\endpage{10}

\maketitle
\tableofcontents 

\pagebreak

\section{Introduction}

In quantum information theory, an important task is to quantify the amount of entanglement that a sender Alice and a receiver Bob can share after using a quantum channel $\mathcal{N}$ a large number of times. That is, if Alice sends one share of a bipartite state $\rho_{A^nA'^n}$ over $n$ uses of a quantum channel, then what is the amount of entanglement that can be transmitted from Alice to Bob? One then considers three variations of the above task depending on the classical communication that can be employed by Alice and Bob to assist their task. In the first one, Alice and Bob are not allowed to employ classical communication (the unassisted case). This is referred to as unassisted entanglement transmission. In the second case, Alice is allowed to communicate classically with Bob for free. In the third variation, Alice and Bob are allowed two-way classical communication for free. In the asymptotic regime of many channel uses, the entanglement transmission capacity of a channel assisted by one-way classical communication is equal to its unassisted entanglement transmission capacity \cite{BDSW96,BKN98}.

Since obtaining the exact capacities for these tasks can be challenging, one important goal is to obtain tight upper bounds on the rates for these tasks, in order to understand the basic limitations of quantum communication. In this context, \cite{WFD17,TBR15} have obtained upper bounds for finite $n$; however, these hold for entanglement transmission assisted by two-way classical communication. Therefore, we do not expect them to be tight for the unassisted entanglement transmission or entanglement transmission assisted by one-way classical communication (1W-LOCC).

In this paper, we develop the details of the resource theory of unextendibility, which was proposed in our earlier companion paper \cite{KDWW19}). As mentioned previously, this resource theory is a semi-definite relaxation of the resource theory of entanglement and thus is connected to fundamental aspects of quantum mechanics. Furthermore, we put the resource theory of unextendibility to use by obtaining bounds on the rates at which entanglement can be transmitted over a quantum channel  assisted by 1W-LOCC. We obtain these upper bounds by defining and employing monotones in the resource theory of unextendibility. What we find here is that these bounds are significantly tighter than bounds previously obtained in \cite{WFD17,TBR15}, primarily because they are tailored to hold for entanglement transmission with the assistance of 1W-LOCC.

For every integer $k \geq 2$, there is a resource theory of $k$-unextendibility, and each of these can be understood as a relaxation of the resource theory of entanglement \cite{BDSW96,HHHH09}. The free states in the resource theory of $k$-unextendibility are the $k$-extendible states \cite{W89a,DPS02,DPS04}, and the free channels are the $k$-extendible channels, which we define in Section~\ref{sec:resource-theory}. These $k$-unextendible resource theories have a hierarchical structure, with the $k$-unextendible resource theory being contained in the $(k-1)$-unextendible resource theory. By ``contained in the resource theory,'' we mean that the free states in the $k$-unextendible resource theory are free states in the $(k-1)$-unextendible resource theory. This implies that the separable states are free states for all $k$-unextendible resource theories. A similar structure is observed for the free channels. The resource theories of $k$-unextendibility are relaxations alternative to the resource theory of negative partial transpose states from \cite{Rai99,Rai01}, in which the free states are the positive partial transpose (PPT) states and the free channels are the PPT-preserving channels. 

The main application of the resource theory of unextendibility reported here is that we obtain tighter upper bounds on the non-asymptotic quantum capacity of a quantum channel. We can get a sense of this by considering the following example: if we send one share of the maximally entangled state $\Phi_{AB} \coloneqq  \frac{1}{2} \sum_{i,j \in \{0,1\}} |i\rangle \! \langle j |_A \otimes |i\rangle \! \langle j |_B $ through a 50\% erasure channel with erasure symbol $\vert e\rangle \! \langle e \vert_B$, then the resulting state $\frac{1}{2}(\Phi_{AB} + I_A/2 \otimes \vert e\rangle \! \langle e \vert_B)$ is a two-extendible state, and is thus free in the resource theory of unextendibility for $k=2$. However, this state has distillable entanglement via two-way LOCC \cite{PhysRevLett.78.3217}, and so it is not free in the resource theory of entanglement. Thus, by relaxing the resource theory of entanglement, and as a consequence expanding the set of free states, we show in what follows how to obtain tighter, non-asymptotic upper bounds on the entanglement transmission rates of a quantum channel. 

The paper is organized as follows. In Section~\ref{sec:review}, we establish some notation and some definitions required for the proofs of our results. In Section~\ref{sec:resource-theory}, we introduce the resource theory of $k$-unextendibility. We also define quantifiers of unextendibility based on generalized divergences, and we establish their properties. In Section~\ref{sec:communications}, we obtain upper bounds on the non-asymptotic quantum capacity and one-way distillable entanglement. In Section~\ref{sec:example}, we showcase our bounds for depolarizing channels and erasure channels. In Section~\ref{sec:pretty-strong}, we revisit the pretty strong converse for the quantum capacity of antidegradable channels, and we employ the resource theory of unextendibility to obtain tighter bounds on their non-asymptotic quantum capacity. We finally conclude with some open questions in Section~\ref{sec:conclusion}.

\textit{Note on related work}: The relation of this paper to our previous one \cite{KDWW19} is that, in this paper, we go into far more detail on the resource theory and many of the proofs of the claims in \cite{KDWW19} are presented here.
There is also another paper \cite{RSTEDW18} that uses $k$-extendibility to place bounds on entanglement distillation protocols, but the kinds of protocols they consider and the particular way that they use $k$-extendibility are different from our approach in \cite{KDWW19} and in the present paper. Another paper \cite{BBFS18} employed $k$-extendibility in the context of placing bounds on the error in quantum communication protocols. They also introduced a  definition of $k$-extendible channels that is slightly different from that  given in \cite{KDWW19}.

\section{Preliminaries}

\label{sec:review}

\subsection{States, channels, isometries, and k-extendibility}

The Hilbert space of a quantum system $A$ is denoted by $\mc{H}_A$. The state of system $A$ is represented by a density operator $\rho_A$, which is a positive semi-definite operator with unit trace. The set of density operators is denoted by $\mc{D}(\mc{H}_A)$. The density operator of a composite system $RA$ is defined as $\rho_{RA}\in \mc{D}(\mc{H}_{RA})$, where $\mc{H}_{RA}=\mc{H}_R\otimes\mc{H}_A$. The notation $A^n\coloneqq  A_1A_2\cdots A_n$ indicates a composite system consisting of $n$ subsystems, each of which is isomorphic to Hilbert space $\mc{H}_A$. The fidelity of $\tau,\sigma\in\mc{D}(\mc{H}_A)$ is defined as $F(\tau,\sigma)=\norm{\sqrt{\tau}\sqrt{\sigma}}_1^2$ \cite{U76}, where $\norm{\cdot}_1$ denotes the trace norm.

A quantum channel is a completely positive trace preserving map (CPTP) map. Let $\mc{M}_{A\to B}$ be a quantum channel, and let $|\Gamma\rangle_{RA}$ denote the following maximally entangled vector:
\begin{equation}\label{eq:basis}
|\Gamma\rangle _{RA}\coloneqq \sum_{i}|i\rangle _R|i\rangle _A ,
\end{equation}
where $\dim(\mc{H}_R)=\dim(\mc{H}_A)$ and $\{|i\rangle _R\}_i$ and $\{|i\rangle _A\}_i$ are fixed orthonormal bases. We extend this notation to multiple parties with a given bipartite cut as
\begin{equation}
|\Gamma\rangle _{R_AR_B:AB}\coloneqq |\Gamma\rangle _{R_A:A}\otimes |\Gamma\rangle _{R_B:B}.
\end{equation}
The maximally entangled state $\Phi_{RA}$ is denoted as
\begin{equation}
\Phi_{RA}=\frac{1}{|A|}| \Gamma \rangle\!\langle \Gamma |_{RA},
\end{equation}
where $|A|=\dim(\mc{H}_A)$.
The Choi operator for a channel $\mc{M}_{A\to B}$ is defined as
\begin{equation}
\Gamma^\mc{M}_{RA}=(\id_R\otimes\mc{M}_{A\to B})\(|\Gamma\rangle \! \langle \Gamma|_{RA}\),
 \end{equation}
where $\id_R$ denotes the identity map on $R$. 

Let $\SEP(A\!:\!B)$ denote the set of all separable states $\sigma_{AB}\in\mc{D}(\mc{H}_A\otimes\mc{H}_B)$, which are states that can be written as
\begin{equation}
\sigma_{AB}=\sum_{x}p(x)\omega^x_A\otimes\tau^x_B,
\end{equation}
where $p(x)$ is a probability distribution, $\omega^x_A \in \mc{D}(\mc{H}_A)$, and $\tau^x_B\in\mc{D}(\mc{H}_B)$ for all $x$. These are the free states in the resource theory of entanglement \cite{HHHH09,CG19}.

A local operations and classical communication (LOCC) channel $\mathcal{L}_{AB\rightarrow A^{\prime
}B^{\prime}}$ can be written as%
\begin{equation}
\mathcal{L}_{AB\rightarrow A^{\prime}B^{\prime}}=\sum_{y}\mathcal{E}%
_{A\rightarrow A^{\prime}}^{y}\otimes\mathcal{F}_{B\rightarrow B^{\prime}}%
^{y},\label{eq-sup:LOCC-channel}%
\end{equation}
where $\{\mathcal{E}_{A\rightarrow A^{\prime}}^{y}\}_{y}$ and $\{\mathcal{F}%
_{B\rightarrow B^{\prime}}^{y}\}_{y}$ are sets of completely positive
maps such that $\mathcal{L}_{AB\rightarrow A^{\prime}B^{\prime}}$ is trace
preserving. However, note that there exist separable channels that can be written in the form in \eqref{eq-sup:LOCC-channel} but are not realizable by LOCC \cite{BDFMRSSW99,CLM+14}. 

A special kind of LOCC channel is a one-way (1W-) LOCC channel from $A$
to $B$, in which Alice performs a quantum instrument, sends the classical
outcome to Bob, who then performs a quantum channel conditioned on the
classical outcome received from Alice. As such, any 1W-LOCC channel takes the
form in \eqref{eq-sup:LOCC-channel}, except that $\{\mathcal{E}_{A\rightarrow
A^{\prime}}^{y}\}_{y}$ is a set of CP\ maps such that the sum map $\sum
_{y}\mathcal{E}_{A\rightarrow A^{\prime}}^{y}$ is trace preserving, while
$\{\mathcal{F}_{B\rightarrow B^{\prime}}^{y}\}_{y}$ is a set of quantum channels.


\subsection{Entropies and information}

The quantum entropy of a density operator $\rho_A$ is defined as \cite{Neu32}
\begin{equation}
S(A)_\rho\coloneqq  S(\rho_A)= -\Tr[\rho_A\log_2\rho_A].
\end{equation}

The quantum relative entropy of two quantum states is a measure of their distinguishability. For $\rho\in\mc{D}(\mc{H})$ and $\sigma\in\mc{B}_+(\mc{H})$, where $\mc{B}_{+}(\mc{H})$ is the set of positive semi-definite operators on $\mc{H}$, it is defined as~\cite{Ume62} 
\begin{equation}
D(\rho\V \sigma)\coloneqq  \left\{ 
\begin{tabular}{c c}
$\Tr\{\rho[\log_2\rho-\log_2\sigma]\}$, & $\supp(\rho)\subseteq\supp(\sigma)$\\
$+\infty$, &  otherwise.
\end{tabular} 
\right.
\end{equation}
The quantum relative entropy is non-increasing under the action of positive trace-preserving maps \cite{MR15}, that is $D(\rho\V\sigma)\geq D(\mc{M}(\rho)\V\mc{M}{(\sigma)})$ for any two density operators $\rho$ and $\sigma$ and a positive trace-preserving map $\mc{M}$.

\subsection{Generalized divergence and  relative entropies}

\label{sec:gen-div}

Let $\mathbf{D}$ be a function from $\mathcal{D}(\mathcal{H}) \times
\mc{B}_{+}(\mathcal{H})$ to $\mathbb{R}$.
Then $\mathbf{D}$ is called a generalized divergence \cite{PV10,SW12} if it satisfies the following data-processing inequality:
\begin{equation}\label{eq:gen-div-mono}
\mathbf{D}(\rho\Vert \sigma)\geq \mathbf{D}(\mathcal{N}(\rho)\Vert \mc{N}(\sigma)),
\end{equation}
 where $\rho \in \mathcal{D}(\mathcal{H})$ and $\sigma \in \mc{B}_{+}(\mathcal{H})$  and $\mc{N}$ is a quantum channel. 
Specific generalized divergences of relevance to this work are the sandwiched R\'enyi relative entropy \cite{MDSFT13,WWY14}, quantum relative entropy \cite{Ume62}, and $\varepsilon$-hypothesis testing relative entropy \cite{BD10,WR12}. 

The sandwiched R\'enyi relative entropy \cite{MDSFT13,WWY14} is denoted as $\wt{D}_\alpha(\rho\V\sigma)$  and defined for
$\rho\in\mc{D}(\mc{H})$, $\sigma\in\mc{B}_+(\mc{H})$ and  $ \alpha\in (0,1)\cup(1,\infty)$ as
\begin{equation}\label{eq:def_sre}
\wt{D}_\alpha(\rho\V \sigma)\coloneqq  \frac{1}{\alpha-1}\log_2 \Tr\left\{\left(\sigma^{\frac{1-\alpha}{2\alpha}}\rho\sigma^{\frac{1-\alpha}{2\alpha}}\right)^\alpha \right\}.
\end{equation}
It is set to $+\infty$ for $\alpha\in(1,\infty)$ if $\supp(\rho)\nsubseteq \supp(\sigma)$.
The sandwiched R\'enyi relative entropy is monotone non-decreasing in $\alpha$ \cite{MDSFT13}:
\begin{equation}\label{eq:mono_sre}
\wt{D}_\alpha(\rho\V\sigma)\leq \wt{D}_\beta(\rho\V\sigma) ,
\end{equation}
 if   $\alpha\leq \beta$,  for  $\alpha,\beta\in(0,1)\cup(1,\infty)$.
For certain values of $\alpha$, the sandwiched R\'enyi relative entropy $\wt{D}_\alpha(\rho\V\sigma)$ is a particular kind of generalized divergence:
\begin{lemma}[\cite{FL13,Bei13}]
Let $\mc{N}:\mc{B}_+(\mc{H}_A)\to \mc{B}_+(\mc{H}_B)$ be a quantum channel   and let $\rho_A\in\mc{D}(\mc{H}_A)$ and $\sigma_A\in \mc{B}_+(\mc{H}_A)$. Then, for all $ \alpha\in \[1/2,1\)\cup (1,\infty)$,
\begin{equation}
\wt{D}_\alpha(\rho\V\sigma)\geq \wt{D}_\alpha(\mc{N}(\rho)\V\mc{N}(\sigma)) ,
\end{equation} 
\end{lemma}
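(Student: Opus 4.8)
The plan is to derive this data-processing inequality by the familiar two-step reduction through the Stinespring dilation, isolating the one genuinely analytic ingredient (a complex-interpolation estimate, equivalently the Frank--Lieb convexity/concavity of the underlying trace functional) and leaving everything else routine. First I would write $\mc{N}_{A\to B}(\cdot)=\Tr_E\{U(\cdot)U^\dagger\}$ for an isometry $U:\mc{H}_A\to\mc{H}_B\otimes\mc{H}_E$. Directly from the defining formula \eqref{eq:def_sre}, $\wt{D}_\alpha$ is invariant under conjugation by an isometry: $(U\sigma U^\dagger)^{\frac{1-\alpha}{2\alpha}}$ restricted to the range of $U$ equals $U\sigma^{\frac{1-\alpha}{2\alpha}}U^\dagger$, the support condition transfers, and $\Tr\{(UXU^\dagger)^\alpha\}=\Tr\{X^\alpha\}$ for $X\geq0$. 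Hence it suffices to prove monotonicity under the partial trace, $\wt{D}_\alpha(\omega_{AE}\V\tau_{AE})\geq\wt{D}_\alpha(\omega_A\V\tau_A)$ for all $\omega_{AE}\in\mc{D}(\mc{H}_{AE})$ and $\tau_{AE}\in\mc{B}_+(\mc{H}_{AE})$; combined with isometric invariance this yields $\wt{D}_\alpha(\mc{N}(\rho)\V\mc{N}(\sigma))=\wt{D}_\alpha(\Tr_E\{U\rho U^\dagger\}\V\Tr_E\{U\sigma U^\dagger\})\leq\wt{D}_\alpha(U\rho U^\dagger\V U\sigma U^\dagger)=\wt{D}_\alpha(\rho\V\sigma)$.

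For the partial-trace step I would rewrite $\wt{D}_\alpha(\rho\V\sigma)=\frac{\alpha}{\alpha-1}\log_2 \V\sigma^{\frac{1-\alpha}{2\alpha}}\rho\,\sigma^{\frac{1-\alpha}{2\alpha}}\V_\alpha$, where for $\alpha\in[1/2,1)$ the functional $(\Tr\{(\cdot)^\alpha\})^{1/\alpha}$ is meant and the prefactor $\frac{\alpha}{\alpha-1}$ is negative, while for $\alpha>1$ it is positive. The claim then becomes a Schatten-type inequality comparing $\tau_{AE}^{\frac{1-\alpha}{2\alpha}}\omega_{AE}\tau_{AE}^{\frac{1-\alpha}{2\alpha}}$ with its reduced counterpart built from $\tau_A,\omega_A$, the direction of the inequality flipping between the two ranges of $\alpha$ exactly so that $\wt{D}_\alpha$ ends up non-increasing in both cases. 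I would establish this by complex interpolation in the manner of Beigi: express the reduced quantity via the isometry implicit in a dilation, embed the relevant operator into an analytic family on the strip $\{0\leq\operatorname{Re}z\leq1\}$ (parametrized so that $1/\alpha$ ranges over a bounded interval), and apply a Stein--Hirschman three-line theorem whose two boundary cases are the easy ones: at $\alpha=1/2$, where the statement is monotonicity of the fidelity $F$ under channels (Uhlmann's theorem, since $\wt{D}_{1/2}(\rho\V\sigma)=-\log_2 F(\rho,\sigma)$), and at $\alpha\to\infty$, where it is monotonicity of $D_{\max}$ (immediate, since $\omega\leq 2^\lambda\tau$ implies $\mc{N}(\omega)\leq 2^\lambda\mc{N}(\tau)$ by positivity of $\mc{N}$). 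As an alternative covering $\alpha\in[1/2,2]$, one can instead invoke the Frank--Lieb result that $(\rho,\sigma)\mapsto\Tr\{(\sigma^{\frac{1-\alpha}{2\alpha}}\rho\,\sigma^{\frac{1-\alpha}{2\alpha}})^\alpha\}$ is jointly concave for $\alpha\in[1/2,1)$ and jointly convex for $\alpha\in(1,2]$, together with the observation that $\Tr_E$ is a uniform mixture of conjugations by Heisenberg--Weyl unitaries on $E$, which upgrades joint concavity/convexity plus unitary invariance to the required monotonicity; the remaining value $\alpha=1$ then follows by continuity.

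The only real obstacle is this analytic estimate itself --- the interpolation bound, or equivalently the Frank--Lieb convexity/concavity of the trace functional; the Stinespring reduction, the isometric invariance, and the Heisenberg--Weyl averaging are all routine. I would expect essentially all the effort to go into verifying that the analytic operator family is bounded and continuous on the closed strip and holomorphic in its interior, and that its boundary norms are controlled by the two endpoint inequalities above.
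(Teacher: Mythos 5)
First, note that the paper does not prove this lemma at all---it is imported verbatim from \cite{FL13,Bei13}---so there is no internal proof to compare against; your sketch can only be assessed on its own terms. Your overall architecture (Stinespring dilation, isometric invariance of $\wt{D}_\alpha$ read off from the definition, reduction to monotonicity under partial trace, and then a single hard analytic input) is exactly the standard one and is correct in outline; the Heisenberg--Weyl averaging step also works once you add that it produces $\omega_A\otimes\pi_E$ rather than $\omega_A$ itself, which is harmless because $\wt{D}_\alpha(\omega\otimes\pi\V\tau\otimes\pi)=\wt{D}_\alpha(\omega\V\tau)$.

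The gap is in the central analytic step as you describe it. A single Stein--Hirschman three-line argument with boundary cases at $\alpha=1/2$ (Uhlmann) and $\alpha\to\infty$ ($D_{\max}$) cannot deliver the result on all of $[1/2,\infty)$, for two reasons. First, the required trace inequality flips direction at $\alpha=1$: since $\tfrac{1}{\alpha-1}<0$ for $\alpha<1$, monotonicity of $\wt{D}_\alpha$ demands that $\Tr\{(\sigma^{\frac{1-\alpha}{2\alpha}}\rho\,\sigma^{\frac{1-\alpha}{2\alpha}})^\alpha\}$ \emph{increase} under the channel for $\alpha\in[1/2,1)$ and \emph{decrease} for $\alpha>1$, and one three-lines bound cannot produce opposite inequalities on two subintervals of the same interpolation strip. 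Second, for $\alpha\in(1/2,1)$ the functional $(\Tr\{(\cdot)^\alpha\})^{1/\alpha}$ is only a quasi-norm, and the Riesz--Thorin/Stein machinery for analytic families requires genuine $L^p$ endpoints with $p\geq 1$; this is precisely why Beigi's interpolation proof covers only $\alpha\in(1,\infty)$, and his endpoints are $p=1$ (trace preservation of $\mc{N}$) and $p=\infty$ (positivity and unitality of $\mc{N}^\dagger$) of a $\sigma$-weighted norm, not $\alpha=1/2$ and $\alpha=\infty$. The repair is the route you already name as an ``alternative'': the Frank--Lieb joint concavity (for $\alpha\in[1/2,1)$) and joint convexity (for $\alpha\in(1,\infty)$) of $(\rho,\sigma)\mapsto\Tr\{(\sigma^{\frac{1-\alpha}{2\alpha}}\rho\,\sigma^{\frac{1-\alpha}{2\alpha}})^\alpha\}$ holds on the \emph{entire} range $[1/2,1)\cup(1,\infty)$, not merely $[1/2,2]$, and together with the averaging trick it proves the lemma in full; the interpolation argument should be kept, if at all, only for $\alpha>1$ with the correct endpoints.
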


In the limit $\alpha\to 1$, the sandwiched R\'enyi relative entropy $\wt{D}_\alpha(\rho\V\sigma)$ converges to the quantum relative entropy \cite{MDSFT13,WWY14}.
In the limit $\alpha\to \infty$, the sandwiched R\'enyi relative entropy $\wt{D}_\alpha(\rho\V\sigma)$ converges to the max-relative entropy \cite{MDSFT13}, which is defined as \cite{D09,Dat09}
\begin{equation}\label{eq:max-rel}
D_{\max}(\rho\V\sigma) \coloneqq \inf\{\lambda:\ \rho \leq 2^\lambda\sigma\},
\end{equation}
with $D_{\max}(\rho\V\sigma)=\infty$ if $\supp(\rho)\nsubseteq\supp(\sigma)$.
Another generalized divergence of interest is the $\varepsilon$-hypothesis-testing divergence \cite{BD10,WR12},  defined as
\begin{multline}
D^\varepsilon_h\!\(\rho\Vert\sigma\)\coloneqq \\-\log_2\inf_{\Lambda}\{\Tr\{\Lambda\sigma\}:  0\leq\Lambda\leq I \wedge\Tr\{\Lambda\rho\}\geq 1-\varepsilon\},\label{eq:hypo-test-div}
\end{multline}
for $\varepsilon\in[0,1]$, $\rho\in\mc{D}(\mc{H})$, and $\sigma\in\mc{B}_+(\mc{H})$.

\subsection{Channels with symmetry}

\label{sec:symmetry}
Consider a finite group $G$. For every $g\in G$, let $g\to U_A(g)$ and $g\to V_B(g)$ be projective unitary representations of $g$ acting on the input space $\mc{H}_A$ and the output space $\mc{H}_B$ of a quantum channel $\mc{N}_{A\to B}$, respectively. A quantum channel $\mc{N}_{A\to B}$ is covariant with respect to these representations if the following relation is satisfied \cite{Hol02,Hol07,H13book}:
\begin{equation}
\label{eq:cov-condition}
\mc{N}_{A\to B}\!(U_A(g)\rho_A U_A^\dagger(g)) = V_B(g)\mc{N}_{A\to B}\(\rho_A\)V_B^\dagger(g) .
\end{equation}

In our paper, we define covariant channels in the following way:
\begin{definition}[Covariant channel]\label{def:covariant}
A quantum channel is covariant if it is covariant with respect to a group $G$ for which each $g\in G$ has a unitary representation $U(g)$ acting on $\mc{H}_A$, such that $\{ U(g)\}_{g \in G}$ is a unitary one-design; i.e., the map  $(\cdot)\to \frac{1}{|G|}\sum_{g\in G}U(g)(\cdot)U^\dagger(g)$ always outputs the maximally mixed state for all input states. 
\end{definition}

The notion of teleportation simulation of a quantum channel first appeared in \cite{BDSW96}, and it was subsequently generalized in \cite[Eq.~(11)]{HHH99} to include general LOCC channels in the simulation. It was developed in more detail in \cite{Mul12} and used in the context of private communication in \cite{PLOB17} and \cite{WTB17,PhysRevLett.119.150501}.

\begin{definition}[Teleportation-simulable channel]\label{def:tel-sim}
A channel $\mc{N}_{A\to B}$ is teleportation-simulable if  there exists a resource state $\omega_{RB}\in\mc{D}\(\mc{H}_{RB}\)$ such that
for all $\rho_{A}\in\mc{D}\(\mc{H}_{A}\)$
\begin{equation}
\mc{N}_{A\to B}\(\rho_A\)=\mc{L}_{RA B\to B}\(\rho_{A}\otimes\omega_{RB}\),
\label{eq:TP-simul}
\end{equation}
where $\mc{L}_{RAB\to B}$ is an LOCC channel
(a particular example of an LOCC channel could be  a generalized teleportation protocol \cite{Wer01}). 
\end{definition}

\begin{lemma}[\cite{CDP09}]\label{thm:cov-tel-sim-channel}
All covariant channels (Definition~\ref{def:covariant}) are teleportation-simulable with respect to the resource state $\mathcal{N}_{A\to B}(\Phi_{RA})$.
\end{lemma}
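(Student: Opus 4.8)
The plan is to show that for a covariant channel $\mc{N}_{A\to B}$ with unitary one-design representation $\{U(g)\}_{g\in G}$, the simulation identity \eqref{eq:TP-simul} holds with resource state $\omega_{RB} = J^\mc{N}_{RB}/|A| = \mc{N}_{A\to B}(\Phi_{RA})$, where the LOCC channel $\mc{L}_{RAB\to B}$ is the generalized teleportation protocol of \cite{Wer01}. First I would recall the structure of that protocol: Alice holds $A$ (to be teleported) and $R$ (her half of the resource); she performs a Bell-type measurement on $AR$ with outcomes labeled by group elements $g\in G$, obtaining outcome $g$ with probability $1/|G|$ (this is where the one-design/twirl property enters, guaranteeing uniformity); she sends $g$ to Bob, who applies the correction unitary $U(g)$ (or a suitable companion of it) to his system $B$. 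The key algebraic fact is the post-selected teleportation identity \eqref{eq:choi-sim}, namely $\<\Upsilon|_{A':R}\,\rho_{A'}\otimes J^\mc{N}_{RB}\,|\Upsilon\>_{A':R} = \mc{N}_{A\to B}(\rho_A)$, together with its twirled versions obtained by inserting Pauli-like corrections $U_A(g)$ on the $A'$ register.

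The main steps, in order: (1) Write the output of the generalized teleportation protocol on input $\rho_A\otimes\omega_{RB}$ as $\frac{1}{|G|}\sum_{g} V_B(g)\,\mc{N}_{A\to B}(U_A^\dagger(g)\rho_A U_A(g))\,V_B^\dagger(g)$, using the branch-by-branch action of the Bell measurement and the correction. This requires identifying the measurement operators with the shift operators appearing in \eqref{eq:choi-sim} and its generalizations, and matching Bob's correction unitary to the representation $V_B(g)$. (2) Invoke the covariance relation \eqref{eq:cov-condition} to rewrite each summand: $V_B(g)\,\mc{N}_{A\to B}(U_A^\dagger(g)\rho_A U_A(g))\,V_B^\dagger(g) = \mc{N}_{A\to B}(\rho_A)$ for every $g$. (3) Conclude that the sum collapses: $\frac{1}{|G|}\sum_g \mc{N}_{A\to B}(\rho_A) = \mc{N}_{A\to B}(\rho_A)$, which is exactly \eqref{eq:TP-simul}. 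Finally, verify that $\mc{L}_{RAB\to B}$ as constructed is a genuine LOCC channel — this is immediate, since it consists of a local measurement on Alice's side, one-way classical communication of $g$, and a conditional local unitary on Bob's side.

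The main obstacle I anticipate is step (1): carefully choosing the right generalized Bell measurement and the matching correction operations so that the branch corresponding to outcome $g$ implements precisely the twirl element $U_A^\dagger(g)(\cdot)U_A(g)$ followed by $V_B(g)(\cdot)V_B^\dagger(g)$. For a general finite group with a projective one-design representation (as opposed to the discrete Heisenberg-Weyl group underlying standard teleportation), one must be a little careful about projective phases and about whether the measurement outcomes are indexed cleanly by $G$; the cleanest route is probably to use the fact that a unitary one-design suffices to build a measurement whose post-measurement effect on the teleported system is exactly conjugation by $U_A(g)^{\dagger}$ with uniform probability, appealing to the twirl $\frac{1}{|G|}\sum_g U(g)(\cdot)U^\dagger(g)$ outputting the maximally mixed state — equivalently, that the operators $\{U(g)/\sqrt{|G|}\}$ form a (generalized) resolution sufficient for teleportation. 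Once that branch-wise identity is in hand, steps (2) and (3) are a one-line application of covariance, and the LOCC verification is routine. I would cite \cite{CDP09} and \cite{Wer01} for the detailed construction of the measurement and corrections rather than reproducing it in full.
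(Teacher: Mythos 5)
Your proposal is correct and is essentially the argument the paper relies on: the paper supplies no proof of this lemma, deferring to \cite{CDP09}, and the proof there is exactly the generalized-teleportation construction you describe --- Alice measures $AR$ with the POVM $\bigl\{\tfrac{|A|^2}{|G|}\,(I_A\otimes U_R^\ast(g))\,\Phi_{AR}\,(I_A\otimes U_R^\ast(g))^\dagger\bigr\}_{g\in G}$, whose completeness and uniform outcome probability $1/|G|$ follow from the one-design property (applied to the conjugate representation) together with trace preservation of $\mathcal{N}$, and then the post-selected teleportation identity \eqref{eq:choi-sim} plus the transpose trick show that the branch for outcome $g$ yields $\tfrac{1}{|G|}\mathcal{N}_{A\to B}(U_A(g)\rho_A U_A^\dagger(g))$, which Bob's correction $V_B^\dagger(g)(\cdot)V_B(g)$ collapses to $\tfrac{1}{|G|}\mathcal{N}_{A\to B}(\rho_A)$ by covariance. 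The projective-phase worry you raise is harmless, since only the conjugations $U(g)(\cdot)U^\dagger(g)$ and $V(g)(\cdot)V^\dagger(g)$ enter and all phases cancel there.
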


\section{Framework for the resource theory of \texorpdfstring{$k$}{k}-unextendibility}\label{sec:resource-theory}

Any quantum resource theory consists of three ingredients~\cite{BG15,CG19}: the resourceful states, the free states, and the restricted set of free channels. The resource states by definition are those that are not free; they are useful and needed to carry out a given task. These states cannot be obtained by the action of the free channels on the free states. Also, free channels are incapable of increasing the amount of resourcefulness of a given state, whereas free states can be generated for free. 

\subsection{$k$-extendible states}
To develop a framework for the quantum resource theory of $k$-unextendibility, specified with respect to a fixed subsystem ($B$) of a bipartite system ($AB$), let us first recall the definition of a $k$-extendible state \cite{W89a,DPS02,DPS04}:
\begin{definition}[$k$-extendible state]\label{def:kex-state}
For integer $k\geq 2$, a state $\rho_{AB}\in\mc{D}(\mc{H}_{AB})$ is  $k$-extendible if there exists a state $\sigma_{AB^k}\coloneqq \sigma_{AB_1B_2\cdots B_k}\in\mc{D}(\mc{H}_{AB_1B_2\cdots B_k})$ that satisfies the following two criteria:
\begin{enumerate}
\item 
The state $\sigma_{AB_1B_2\cdots B_k}$ is permutation invariant with respect to the $B$ systems, in the sense that
for all $\pi\in S_{k}$,
\begin{equation}
\sigma_{AB_1B_2\cdots B_k} = 
\mathcal{W}_{B_{1}\cdots B_{k}}^{\pi}(\sigma_{AB_1B_2\cdots B_k}),
\end{equation}
where $\mc{W}^\pi$ is the unitary permutation channel associated with $\pi$ and $S_k$ is the symmetric group defined over a finite set of $k$ symbols. 
\item 
The state $\rho_{AB}$ is the marginal of
$\sigma_{AB_1 \cdots B_k}$, i.e.,
\begin{equation}
\rho_{AB} = \Tr_{B_2 \cdots  B_k}\{\sigma_{AB_1\ldots B_k}\}.
\end{equation}
\end{enumerate}
\end{definition}

Determining whether a bipartite state is separable or not is a computationally hard task \cite{G03,Gharibian10}. The $k$-extendible states, introduced in \cite{W89a,DPS04}, provide a systematic way of testing the entanglement of a state. If a state is entangled, it is not $k$-extendible for at least some $k$; furthermore, it is not $k'$-extendible for all $k' \geq k$. However, if the state is separable, then it is  $k$-extendible for all $k$. Then the question regarding the separability of the state can be reformulated as the verification of $k$-extendibility of a state, which is a semidefinite program (SDP). The size of the SDP increases with increase in $k$, because the number of constraints that need to be specified increases. Nevertheless, checking for $k$-extendibility of a state provides a hierarchy of SDPs in the sense discussed above, which can be insightful in understanding the entanglement of a bipartite state. 

To give some physical context to the definition of a $k$-extendible state, suppose that Alice and Bob share a bipartite state and that Bob subsequently mixes his system and the vacuum state at a 50:50 beamsplitter. Then the resulting state of Alice's system and one of the outputs of the beamsplitter is a two-extendible state by construction. As a generalization of this, suppose that Bob sends his system through the $N$-splitter of \cite[Eq.~(10)]{vLB00}, with the other input ports set to the vacuum state. Then the state of Alice's system and one of the outputs of the $N$-splitter is $N$-extendible by construction. One could also physically realize $k$-extendible states in a similar way by means of quantum cloning machines \cite{RevModPhys.77.1225}.

Although the following definition might be obvious, we nevertheless state it explicitly for clarity:

\begin{definition}[Unextendible state]
A state that is not $k$-extendible according to Definition~\ref{def:kex-state} is  called $k$-unextendible. 
\end{definition}

For simplicity and throughout this work, if we mention  ``extendibility,'' ``extendible,'' ``unextendibility,'' or ``extendible,'' then these terms should be understood as $k$-extendibility, $k$-extendible, $k$-unextendibility, or $k$-unextendible, respectively, with an implicit dependence on $k$.

Let $\kex(A\!:\!B)$ denote the set of all states $\sigma_{AB}\in\mc{D}(\mc{H}_{AB})$ that are $k$-extendible with respect to system~$B$. A $k$-extendible state is also $\ell$-extendible, where $\ell\leq k$. This follows trivially from the definition.

\subsection{$k$-extendible channels}

\label{sec:ext-channel}

In order to define $k$-extendible channels, we need to generalize the notions of permutation invariance and marginals of quantum states to quantum channels. First, permutation invariance of a state gets generalized to permutation covariance of a channel. Next, the marginal of a state gets generalized to the marginal of a channel, which includes a no-signaling constraint, in the following sense:

\begin{definition}[$k$-extendible channel]\label{def:ext-channel}
A bipartite channel $\mathcal{N}_{AB\rightarrow
A^{\prime}B^{\prime}}$ is  $k$-extendible if there exists a quantum channel
$\mathcal{M}_{AB_{1}\cdots B_{k}\rightarrow A^{\prime}B_{1}^{\prime}\cdots
B_{k}^{\prime}}$ that satisfies the following two criteria:
\begin{enumerate}
\item The channel
$\mathcal{M}_{AB_{1}\cdots B_{k}\rightarrow A^{\prime}B_{1}^{\prime}\cdots
B_{k}^{\prime}}$ 
 is permutation covariant with respect to the $B$
systems. That is, for all $\pi\in S_{k}$ and for all states $\rho_{AB_{1}\cdots B_{k}}
$, the following equality holds%
\begin{multline}
  \mathcal{M}_{AB_{1}\cdots B_{k}\rightarrow A^{\prime}B_{1}^{\prime}\cdots
B_{k}^{\prime}}(\mathcal{W}_{B_{1}\cdots B_{k}}^{\pi}(\rho_{AB_{1}\cdots
B_{k}}))\\=\mathcal{W}_{B_{1}^{\prime}\cdots B_{k}^{\prime}}^{\pi}%
(\mathcal{M}_{AB_{1}\cdots B_{k}\rightarrow A^{\prime}B_{1}^{\prime}\cdots
B_{k}^{\prime}}(\rho_{AB_{1}\cdots B_{k}})),\label{eq:perm-cov-channel}%
\end{multline}
where $\mc{W}^\pi$ is the unitary permutation channel associated with the permutation~$\pi$.

\item The channel $\mathcal{N}_{AB\rightarrow A^{\prime}B^{\prime}}$ is 
the marginal of $\mathcal{M}_{AB_{1}\cdots B_{k}\rightarrow A^{\prime}%
B_{1}^{\prime}\cdots B_{k}^{\prime}}$ in the following sense: for every state $\rho_{AB_{1}\cdots B_{k}}$,
\begin{multline}
  \mathcal{N}_{AB\rightarrow A^{\prime
}B^{\prime}}(\rho_{AB_1})\\=\operatorname{Tr}_{B_{2}^{\prime}\cdots B_{k}^{\prime
}}\{\mathcal{M}_{AB_{1}\cdots B_{k}\rightarrow A^{\prime}B_{1}^{\prime}\cdots
B_{k}^{\prime}}(\rho_{AB_{1}\cdots B_{k}})\}.\label{eq:marginal-channel}%
\end{multline}
We can alternatively write \eqref{eq:marginal-channel} as
\begin{multline}
    \operatorname{Tr}_{B_{2}^{\prime}\cdots B_{k}^{\prime
}} \circ \mathcal{M}_{AB_{1}\cdots B_{k}\rightarrow A^{\prime}B_{1}^{\prime}\cdots
B_{k}^{\prime}} \\
= \mathcal{N}_{AB\rightarrow A^{\prime
}B^{\prime}} \circ \operatorname{Tr}_{B_{2}\cdots B_{k}}
\end{multline}

\end{enumerate}
A channel $\mathcal{M}_{AB_{1}\cdots B_{k}\rightarrow A^{\prime}%
B_{1}^{\prime}\cdots B_{k}^{\prime}}$ satisfying the above conditions is called a $k$-extension of $\mathcal{N}%
_{AB\rightarrow A^{\prime}B^{\prime}}$. 

\end{definition}

 \begin{figure}
 \begin{centering}
 \includegraphics[width=1\columnwidth]{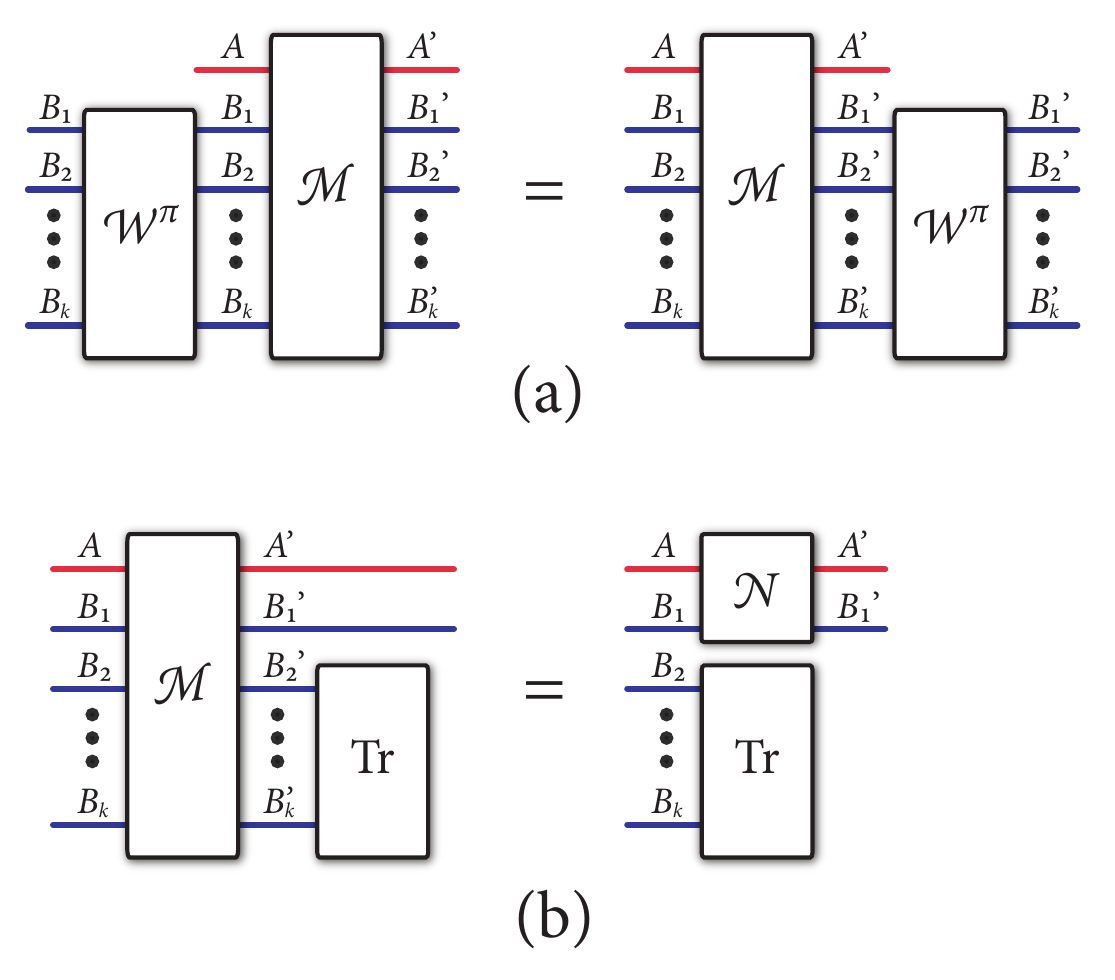}
 \caption{A visual depiction of the conditions for the channel $\mathcal{M}_{AB_{1}\cdots B_{k}\rightarrow A^{\prime}B_{1}^{\prime}\cdots
B_{k}^{\prime}}$ to be a $k$-extension of $\mathcal{N}_{AB\rightarrow A^{\prime
}B^{\prime}}$. (a) The extension channel $\mathcal{M}_{AB_{1}\cdots B_{k}\rightarrow A^{\prime}B_{1}^{\prime}\cdots
B_{k}^{\prime}}$ should be permutation covariant with respect to Bob's systems. (b) The extension channel $\mathcal{M}_{AB_{1}\cdots B_{k}\rightarrow A^{\prime}B_{1}^{\prime}\cdots
B_{k}^{\prime}}$ should reduce to the original channel $\mathcal{N}_{AB\rightarrow A^{\prime
}B^{\prime}}$ when tracing out the output systems $B'_2 \cdots B'_k $ of $\mathcal{M}_{AB_{1}\cdots B_{k}\rightarrow A^{\prime}B_{1}^{\prime}\cdots
B_{k}^{\prime}}$.}
 \label{fig:k-ext-conditions}
 \end{centering}
 \end{figure}

The conditions in Definition~\ref{def:ext-channel} are depicted in Figure~\ref{fig:k-ext-conditions}.
The condition in \eqref{eq:marginal-channel} 
corresponds to a one-way no-signaling (semi-causal) constraint on the extended $(k-1)$ subsystems $B^{k-1}\coloneqq  B^k\setminus B_i$ to $A'B'_i$ for all $i\in[k]$ (cf., \cite[Proposition~7]{DW16}). This condition can be reformulated as \cite{PHH2006}
\begin{multline}
    \operatorname{Tr}_{B_{2}^{\prime}\cdots B_{k}^{\prime
}}\{\mathcal{M}_{AB_{1}\cdots B_{k}\rightarrow A^{\prime}B_{1}^{\prime}\cdots
B_{k}^{\prime}}(\rho_{AB_1\cdots B_K})\}=\\ \operatorname{Tr}_{B_{2}^{\prime}\cdots B_{k}^{\prime
}}\{\mathcal{M}_{AB_{1}\cdots B_{k}\rightarrow A^{\prime}B_{1}^{\prime}\cdots
B_{k}^{\prime}}(\mathcal{R}^{\pi}_{B_2\cdots B_k}(\rho_{AB_1\cdots B_K}))\},
\end{multline}
where $\mathcal{R}^{\pi}_{B_2\cdots B_k}$ is a channel that replaces the state in systems $B_2\cdots B_k$ with a mixed state $\pi_{B_2\cdots B_k}$ (or any other arbitrary state).  
Equivalently, the condition
in \eqref{eq:marginal-channel}
can also be expressed as
\begin{align}
\operatorname{Tr}_{B_{2}^{\prime}\cdots B_{k}^{\prime
}}\{\mathcal{M}_{AB_{1}\cdots B_{k}\rightarrow A^{\prime}B_{1}^{\prime}\cdots
B_{k}^{\prime}}(X_{AB_1}\otimes Y_{B_2\cdots B_k})\} =0 
\end{align}
for all $X_{AB_1}, Y_{B_2\cdots B_k}$ such that $\operatorname{Tr}\{Y_{B_2\cdots B_k}\}=0$ \cite{DW16}.

Classical $k$-extendible channels were defined in a somewhat similar way in
\cite{BH17}, and so our definition above represents a quantum generalization
of the classical notion. We also note here that $k$-extendible channels were
defined in a different way in \cite{PBHS13}, but our definitions
reduce to the same class of channels in the case that the input systems
$B_{1}$ through $B_{k}$ and the output systems $A^{\prime}$ are trivial.


%



We can reformulate the constraints on the $k$-extendible channels in terms of the Choi operator $\Gamma^{\mc{M}}_{\hat{A}A'\hat{B}^kB'^k}$ of the extension channel $\mathcal{M}_{AB_{1}\cdots B_{k}\rightarrow A^{\prime}B_{1}^{\prime}\cdots
B_{k}^{\prime}}$ of $\mc{N}_{AB\to A'B'}$ as follows:
\begin{align}
&\Gamma^{\mc{M}}_{\hat{A}A'\hat{B}^kB'^k} \geq 0,\\
&\Tr_{A'B'^k}\{\Gamma^{\mc{M}}_{\hat{A}A'\hat{B}^kB'^k}\} =I_{\hat{A}\hat{B}^k},\\
&\left[  W_{\hat{B}_{1}\cdots\hat{B}_{k}}^{\pi }\otimes
W_{B_{1}^{\prime}\cdots B_{k}^{\prime}}^{\pi},\Gamma_{\hat{A}A^{\prime}\hat
{B}_{1}\cdots\hat{B}_{k}B_{1}^{\prime}\cdots B_{k}^{\prime}}^{\mathcal{M}%
}\right]  =0,\ \forall\pi\in S_{k}\\
&\Gamma^{\mc{M}}_{\hat{A}A'\hat{B}_1B'_1\hat{B}_2\cdots \hat{B}_k} = \Gamma^{\mc{N}}_{\hat{A}A'\hat{B}_1\hat{B}'_1}\otimes \pi_{\hat{B}_2\cdots \hat{B}_k}
\end{align}  
The first constraint corresponds to complete positivity of the $k$-extendible channel, while the second constraint corresponds to trace preservation of the channel. The third constraint reflects the permutation covariance property of the channel with respect to the permutation group, and the last constraint corresponds to the no-signaling condition.


The following theorem is the key statement that makes the resource theory of unextendibility, as presented above, a consistent resource theory:

\begin{theorem}\label{thm:main-resource-theory-k-ext}
For a bipartite $k$-extendible channel $\mathcal{N}_{AB\rightarrow A^{\prime}B^{\prime}%
}$ and a $k$-extendible state $\rho_{AB}$, the output state $\mathcal{N}%
_{AB\rightarrow A^{\prime}B^{\prime}}(\rho_{AB})$ is $k$-extendible. 
\end{theorem}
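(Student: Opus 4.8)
The plan is to take the witnessing extensions of both the state and the channel and compose them, then verify that the composite provides a valid $k$-extension of the output state $\mathcal{N}_{AB\to A'B'}(\sigma_{AB})$. Concretely, since $\sigma_{AB}\in\kex(A;B)$, Definition~\ref{def:kex-state} gives a permutation-invariant extension $\omega_{AB_1\cdots B_k}$ with $\Tr_{B_2\cdots B_k}\{\omega_{AB_1\cdots B_k}\}=\sigma_{AB}$; and since $\mathcal{N}_{AB\to A'B'}$ is $k$-extendible, Definition~\ref{def:ext-channel} gives a permutation-covariant extension channel $\mathcal{M}_{AB_1\cdots B_k\to A'B_1'\cdots B_k'}$ whose marginal is $\mathcal{N}$. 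I would then propose the candidate extension state $\Omega_{A'B_1'\cdots B_k'}\coloneqq \mathcal{M}_{AB_1\cdots B_k\to A'B_1'\cdots B_k'}(\omega_{AB_1\cdots B_k})$ and claim it witnesses $k$-extendibility of $\mathcal{N}_{AB\to A'B'}(\sigma_{AB})$ with respect to system $B'$.

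First I would check the marginal (reduction) condition. Using the marginal-channel property~\eqref{eq:marginal-channel} of $\mathcal{M}$ applied to the input $\theta_{AB_1\cdots B_k}=\omega_{AB_1\cdots B_k}$, together with the fact that $\Tr_{B_2\cdots B_k}\{\omega_{AB_1\cdots B_k}\}=\omega_{AB_1}$ equals $\sigma_{AB}$ (under the identification $B_1\simeq B$), I get
\begin{equation}
\Tr_{B_2'\cdots B_k'}\{\Omega_{A'B_1'\cdots B_k'}\}=\mathcal{N}_{AB\to A'B'}(\omega_{AB_1})=\mathcal{N}_{AB\to A'B'}(\sigma_{AB}).
\end{equation}
Next I would check permutation invariance of $\Omega$. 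For any $\pi\in S_k$, apply the permutation-covariance relation~\eqref{eq:perm-cov-channel} for $\mathcal{M}$ to the input $\omega_{AB_1\cdots B_k}$ and then use permutation invariance of $\omega$ from Definition~\ref{def:kex-state}:
\begin{equation}
\mathcal{W}^{\pi}_{B_1'\cdots B_k'}(\Omega_{A'B_1'\cdots B_k'})=\mathcal{M}(\mathcal{W}^{\pi}_{B_1\cdots B_k}(\omega_{AB_1\cdots B_k}))=\mathcal{M}(\omega_{AB_1\cdots B_k})=\Omega_{A'B_1'\cdots B_k'}.
\end{equation}
Since $\Omega$ is a valid quantum state (being the output of the channel $\mathcal{M}$ on a state), both criteria of Definition~\ref{def:kex-state} are met, so $\mathcal{N}_{AB\to A'B'}(\sigma_{AB})\in\kex(A';B')$.

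I do not anticipate a serious obstacle; the argument is essentially a matching-up of definitions, with the only care needed being bookkeeping of the system identifications $B_i\simeq B$ and $B_i'\simeq B'$ and a careful invocation of the marginal condition on the \emph{specific} input state $\omega$ rather than on a product input. The subtlety most worth spelling out is that the marginal condition~\eqref{eq:marginal-channel} is stated for arbitrary inputs $\theta_{AB_1\cdots B_k}$, so it applies in particular to $\omega_{AB_1\cdots B_k}$, and that its right-hand side involves only the reduced input $\theta_{AB_1}$, which here is exactly $\sigma_{AB}$. If one wished, one could instead phrase the whole proof at the level of Choi operators using the reformulated constraints listed after Definition~\ref{def:ext-channel}, but the direct state-level composition above is cleaner. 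This also immediately recovers Theorem~\ref{Thm-1} from the main text as the special case where the input systems are trivial.
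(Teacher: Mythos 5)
Your proposal is correct and follows essentially the same route as the paper's proof: compose the state's $k$-extension with the channel's $k$-extension, then use permutation covariance of $\mathcal{M}$ together with permutation invariance of $\omega$ for the symmetry condition, and the marginal-channel property~\eqref{eq:marginal-channel} for the reduction condition. Your version is, if anything, slightly more explicit than the paper's in spelling out why the marginal condition applies to the specific input $\omega_{AB_1\cdots B_k}$.
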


\begin{proof}
Let $\rho_{AB_{1}\cdots B_{k}}$ be a $k$-extension of $\rho_{AB}$. Let
$\mathcal{M}_{AB_{1}\cdots B_{k}\rightarrow A^{\prime}B_{1}^{\prime}\cdots
B_{k}^{\prime}}$ be a channel that extends $\mathcal{N}_{AB\rightarrow
A^{\prime}B^{\prime}}$. Then the following state is a $k$-extension of
$\mathcal{N}_{AB\rightarrow A^{\prime}B^{\prime}}(\rho_{AB})$:%
\begin{equation}
\mathcal{M}_{AB_{1}\cdots B_{k}\rightarrow A^{\prime}B_{1}^{\prime}\cdots
B_{k}^{\prime}}(\rho_{AB_{1}\cdots B_{k}}).
\end{equation}
To verify this statement, consider that for all $\pi\in S_{k}$, the following
holds by applying \eqref{eq:perm-cov-channel}\ and the fact that $\rho
_{AB_{1}\cdots B_{k}}$ is a $k$-extension of $\rho_{AB}$:%
\begin{align}
& \mathcal{W}_{B_{1}^{\prime}\cdots B_{k}^{\prime}}^{\pi}(\mathcal{M}%
_{AB_{1}\cdots B_{k}\rightarrow A^{\prime}B_{1}^{\prime}\cdots B_{k}^{\prime}%
}(\rho_{AB_{1}\cdots B_{k}}))\nonumber\\
& =\mathcal{M}_{AB_{1}\cdots B_{k}\rightarrow A^{\prime}B_{1}^{\prime}\cdots
B_{k}^{\prime}}(\mathcal{W}_{B_{1}\cdots B_{k}}^{\pi}(\rho_{AB_{1}\cdots
B_{k}}))\\
& =\mathcal{M}_{AB_{1}\cdots B_{k}\rightarrow A^{\prime}B_{1}^{\prime}\cdots
B_{k}^{\prime}}(\rho_{AB_{1}\cdots B_{k}}).
\end{align}
Due to \eqref{eq:marginal-channel}, it follows that
$\mathcal{N}_{AB\rightarrow A^{\prime}B^{\prime}}(\rho_{AB})$ is a marginal of $\mathcal{M}_{AB_{1}\cdots B_{k}\rightarrow A^{\prime}B_{1}^{\prime}\cdots
B_{k}^{\prime}}(\rho_{AB_{1}\cdots B_{k}})$.
\end{proof}

\bigskip
With the above framework in place, we note here that postulates I--V of
\cite{BG15} apply to the resource theory of unextendibility.
The $k$-extendible channels are the free channels, and the $k$-extendible states are the free states.

\begin{example}[1W-LOCC]
An example of a $k$-extendible channel is a one-way local operations and classical communication ($1$W-LOCC) channel.  Consider that a $1$W-LOCC channel $\mathcal{N}_{AB\rightarrow A^{\prime}B^{\prime}}$ can be written as%
\begin{equation}
\mathcal{N}_{AB\rightarrow A^{\prime}B^{\prime}}=\sum_{x}\mathcal{E}%
_{A\rightarrow A^{\prime}}^{x}\otimes\mathcal{F}_{B\rightarrow B^{\prime}}%
^{x},
\end{equation}
where $\{\mathcal{E}_{A\rightarrow A^{\prime}}^{x}\}_{x}$ is a collection of
completely positive maps such that $\sum_{x}\mathcal{E}_{A\rightarrow
A^{\prime}}^{x}$ is a quantum channel and $\{\mathcal{F}_{B\rightarrow
B^{\prime}}^{x}\}_{x}$ is a collection of quantum channels. A $k$-extension
$\mathcal{M}_{AB_{1}\cdots B_{k}\rightarrow A^{\prime}B_{1}^{\prime}\cdots
B_{k}^{\prime}}$\ of the channel $\mathcal{N}_{AB\rightarrow A^{\prime
}B^{\prime}}$ can be taken as follows:%
\begin{multline}
 \mathcal{M}_{AB_{1}\cdots B_{k}\rightarrow A^{\prime}B_{1}^{\prime}\cdots
B_{k}^{\prime}}=\\\sum_{x}\mathcal{E}_{A\rightarrow A^{\prime}}^{x}%
\otimes\mathcal{F}_{B_{1}\rightarrow B_{1}^{\prime}}^{x}\otimes\mathcal{F}%
_{B_{2}\rightarrow B_{2}^{\prime}}^{x}\otimes\cdots\otimes\mathcal{F}%
_{B_{k}\rightarrow B_{k}^{\prime}}^{x}.
\end{multline}

It is then clear that the condition in \eqref{eq:perm-cov-channel} holds for
$\mathcal{M}_{AB_{1}\cdots B_{k}\rightarrow A^{\prime}B_{1}^{\prime}\cdots
B_{k}^{\prime}}$ as chosen above. Furthermore, the condition in
\eqref{eq:marginal-channel} holds because each $\mathcal{F}_{B_{i}\rightarrow
B_{i}^{\prime}}^{x}$ is a channel\ for $i\in\{1,\ldots,k\}$.
\end{example}

We now define a subclass of $k$-extendible channels. These channels are realized as follows: Alice performs a quantum channel $\mc{E}_{A\to A'C}$ on her system $A$ and obtains systems $A'C$. Then, Alice sends $C$ to Bob over a $k$-extendible channel $\mc{A}^k_{C\to C'}$.  The channel $\mc{A}^k_{C\to C'}$ is a special case of the bipartite $k$-extendible channel $\mc{N}_{AB\to A'B'}$ considered in Definition~\ref{def:ext-channel}, in which we identify the input $C$ with $A$ of $\mc{N}_{AB\to A'B'}$, the output $C'$ with $B'$ of $\mc{N}_{AB\to A'B'}$ and the systems $B$ and $A'$ are trivial. Finally, Bob applies the channel $\mc{D}_{C'B\to B'}$ on system $C'$ and his local system $B$ to get $B'$. Denoting the overall channel by $\mathcal{K}_{AB\rightarrow A'B'}^k$, it is realized as follows:
\begin{equation}
\label{eq:subclass_def}
\mathcal{K}_{AB\rightarrow A'B'}^k(\cdot)\coloneqq   \mc{D}_{C'B\to B'}  \circ\mc{A}^k_{C\to C'} \circ\mc{E}_{A\to A'C}(\cdot).
\end{equation}

Due to their structure, we can place an upper bound on the distinguishability of a channel in the subclass described above and the set of 1W-LOCC channels, as quantified by the diamond norm \cite{Kit97}. See  Appendix~\ref{sec:subclass-channels} for the precise statement and for details of the proof.

\subsection{Quantifying \texorpdfstring{$k$}{k}-unextendibility}

\label{sec:unext-gen-div}

In any resource theory, it is pertinent to quantify the resourcefulness of the resource states and the resourceful channels. Based on the resource theory of unextendibility, any measure of the $k$-unextendibility of a state should possess the following two desirable properties:
\begin{enumerate}
\item data processing: non-increasing under the action of $k$-extendible channels,
\item attains minimum value if the state is $k$-extendible.
\end{enumerate}

Here we present a measure of unextendibility that is based on generalized divergence and  satisfies both criteria discussed above:

\begin{definition}[Unextendible generalized divergence]\label{def:unext-gen-div-state}
The $k$-unextendible generalized divergence of a bipartite state $\rho_{AB}$ is defined as
\begin{equation}
\tf{E}_{k}(A;B)_{\rho}=\inf_{\sigma_{AB}\in\kex(A:B)}\tf{D}(\rho_{AB}\Vert\sigma_{AB}),
\end{equation}
where $\tf{D}(\rho\Vert \sigma)$ denotes the generalized divergence from~\eqref{eq:gen-div-mono}.
\end{definition}

We can extend the definition above to obtain an unextendible generalized divergence of a channel, in order to quantify how well a quantum channel can preserve unextendibility. 

\begin{definition}
\label{def:unext-gen-div-channel}
The $k$-unextendible generalized divergence of a quantum channel $\mc{N}_{A\to B}$ is defined as
\begin{multline}
\label{eq:unex-channel-def}
\tf{E}_{k}(\mc{N})\coloneqq \\ \sup_{\psi_{RA}\in \mathcal{D}(\mathcal{H}_{RA})}\inf_{\sigma_{RB}\in\kex(R:B)}\tf{D}\(\mc{N}_{A\to B}(\psi_{RA})\Vert\sigma_{RB}\),
\end{multline}
where $\tf{D}(\cdot\Vert \cdot)$ is a generalized divergence and the optimization is over all pure states $\psi_{RA}\in\mc{D}(\mc{H}_{RA})$ with $\dim(\mc{H}_R) = \dim(\mc{H}_R)$.
\end{definition}

In the  definition above, we could have taken an optimization over all mixed-state inputs with the reference system $R$ arbitrarily large. However, due to purification, data processing, and the Schmidt decomposition theorem, doing so does not result in a larger value of the quantity, so that it suffices to restrict the optimization as we have done above.

In Definitions~\ref{def:unext-gen-div-state} and \ref{def:unext-gen-div-channel}, we can take the generalized divergence to be the quantum relative entropy~$D$, the $\varepsilon$-hypothesis-testing divergence $D^\varepsilon_h$,  the $\alpha$-sandwiched-R\'{e}nyi divergence $\wt{D}_\alpha$, the traditional R\'{e}nyi divergence, the trace distance, etc., in order to have various $k$-unextendible measures of states and channels (see Section~\ref{sec:gen-div} for definitions). 

\subsubsection{\texorpdfstring{$k$}{k}-unextendible divergences for isotropic and  Werner states}

\label{sec:isotropic-werner-states}

In this section, we evaluate some unextendible divergences for two specific classes of states: isotropic and Werner states. In particular, we obtain an analytic form for the $k$-unextendible generalized divergence (Proposition~\ref{prop:generalized-divergence-isotropic}) for isotropic states \cite{HH99} and Werner states \cite{Wer89}, and, as a consequence, we calculate its $k$-unextendible relative entropy and R\'{e}nyi divergence (Proposition~\ref{prop:relative-entropy-isotropic}).

\begin{definition}[Isotropic state
\cite{HH99}]
An isotropic state $\rho^{(t,d)}_{AB}$
is $U\otimes {U}^\ast$-invariant for an arbitrary unitary $U$, where $\dim(\mc{H}_A)=d=\dim(\mc{H}_B)$. Such a state can be written in the following form for $t\in [0,1]$:
\begin{equation}\label{eq:iso-state1}
\rho^{(t,d)}_{AB}=t\Phi^{d}_{AB}+(1-t)\frac{I_{AB}-\Phi^{d}_{AB}}{d^2-1},
\end{equation}
where $\Phi^{d}_{AB}$ denotes a maximally entangled state of Schmidt rank $d$.
\end{definition}

\begin{lemma}[\cite{JV13}]\label{thm:isotropicstates}
An isotropic state $\rho^{(t,d)}_{AB}$ written as in \eqref{eq:iso-state1} is $k$-extendible if and only if $t\in\left[0, \frac{1}{d}\left(1+\frac{d-1}{k}\right)\right]$.
\end{lemma}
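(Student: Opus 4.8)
The plan is to reduce the problem, via a twirling (symmetry) argument, to the computation of the top eigenvalue of a single explicit operator, and then to determine that eigenvalue by a short operator-algebra estimate together with one well-chosen eigenvector.

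\textbf{Reduction.} Throughout, ``$U\otimes\bar U^{\otimes k}$'' means $U$ acting on $A$ and $\bar U$ on each of $B_1,\dots,B_k$. Suppose $\omega_{AB_1\cdots B_k}$ is a $k$-extension of $\rho^{(t,d)}_{AB}$. Since $\rho^{(t,d)}_{AB}$ is $U\otimes\bar U$-invariant, the $U\otimes\bar U^{\otimes k}$-twirl of $\omega$ is again a $k$-extension of $\rho^{(t,d)}_{AB}$ (the twirl commutes with the $B$-permutation unitaries and with $\Tr_{B_2\cdots B_k}$) and is moreover $U\otimes\bar U^{\otimes k}$-invariant, so one may assume $\omega$ has both symmetries. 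Then $\omega_{AB_1}$ is $U\otimes\bar U$-invariant, hence isotropic, hence equals $\rho^{(s,d)}_{AB}$ with $s=\langle\Phi^d|\omega_{AB_1}|\Phi^d\rangle$. Using $|\Upsilon\rangle\langle\Upsilon|_{AB_i}=d\,\Phi^d_{AB_i}$ together with permutation invariance, $s=\Tr[\bar M\,\omega]$ where
\begin{equation*}
\bar M:=\frac{1}{dk}\sum_{i=1}^{k}|\Upsilon\rangle\langle\Upsilon|_{AB_i}\otimes I_{B^{k}\setminus B_i},
\end{equation*}
and $\bar M$ is itself permutation- and $U\otimes\bar U^{\otimes k}$-invariant, so its spectral projections are admissible as extensions. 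Hence the realizable values of $s$ are exactly those in $[\lambda_{\min}(\bar M),\lambda_{\max}(\bar M)]$ (each attained by a convex combination of the normalized top and bottom spectral projections of $\bar M$). Since $\bar M\ge0$ and $|\phi\rangle_A\otimes|w\rangle_{B_1}\otimes\cdots\otimes|w\rangle_{B_k}$ with $\sum_c\phi_c w_c=0$ lies in $\ker\bar M$, we get $\lambda_{\min}(\bar M)=0$. Therefore $\rho^{(t,d)}_{AB}\in\kex(A;B)$ iff $0\le t\le\lambda_{\max}(\bar M)$, and it remains to show $\lambda_{\max}(\bar M)=\tfrac1d\!\left(1+\tfrac{d-1}{k}\right)$, i.e.\ $\lambda_{\max}(N)=k+d-1$ for $N:=\sum_{i=1}^{k}|\Upsilon\rangle\langle\Upsilon|_{AB_i}\otimes I_{B^k\setminus B_i}$.

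\textbf{Upper bound.} Put $\Pi_i:=\tfrac1d|\Upsilon\rangle\langle\Upsilon|_{AB_i}\otimes I$. A short computation --- cleanly phrased via the post-selected teleportation identity \eqref{eq:choi-sim} --- shows each $\Pi_i$ is an orthogonal projector and that they sit at a common ``angle'': $\Pi_i\Pi_j\Pi_i=\tfrac1{d^2}\Pi_i$ for $i\ne j$, whence $\|\Pi_i\Pi_j\|=\tfrac1d$. Now $\lambda_{\max}(N)=d\,\lambda_{\max}\!\left(\sum_i\Pi_i\right)=d\,\lambda_{\max}(TT^{\dagger})$, where $T|\psi\rangle=(\Pi_1\psi,\dots,\Pi_k\psi)$ and $TT^{\dagger}=(\Pi_i\Pi_j)_{i,j}$ is a Hermitian $k\times k$ block matrix with diagonal blocks of norm $1$ and off-diagonal blocks of norm $\tfrac1d$. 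A block Gershgorin bound gives $\lambda_{\max}(TT^{\dagger})\le 1+\tfrac{k-1}{d}$, so $\lambda_{\max}(N)\le d+k-1$. (Equivalently: for unit $\psi$, $\langle\psi|N|\psi\rangle=d\sum_i\|\Pi_i\psi\|^2$, and Cauchy--Schwarz combined with $\|\Pi_j\Pi_i\psi\|=\tfrac1d\|\Pi_i\psi\|$ yields the same bound.)

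\textbf{Lower bound and conclusion.} For the matching bound I would exhibit the eigenvector directly. Let $|\gamma_i\rangle:=\sum_a\bigotimes_{j\ne i}|a\rangle_{B_j}$ on $B^k\setminus B_i$, and $|v\rangle:=\sum_{i=1}^{k}|\Upsilon\rangle_{AB_i}\otimes|\gamma_i\rangle$, which is nonzero (its overlap with $|1\rangle_A|1\rangle_{B_1}\cdots|1\rangle_{B_k}$ is $k$). A short calculation gives, for each $j$, $\bigl(|\Upsilon\rangle\langle\Upsilon|_{AB_j}\otimes I\bigr)|v\rangle=(d+k-1)\,|\Upsilon\rangle_{AB_j}\otimes|\gamma_j\rangle$ --- the $j$-th summand of $|v\rangle$ contributes the factor $d=\langle\Upsilon|\Upsilon\rangle$ and each of the remaining $k-1$ summands contributes $1$ (using $\langle\Upsilon|_{AB_j}(|c\rangle_A\otimes|a\rangle_{B_j})=\delta_{ca}$). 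Summing over $j$, $N|v\rangle=(d+k-1)|v\rangle$, so $\lambda_{\max}(N)=d+k-1$ and $\lambda_{\max}(\bar M)=\tfrac{d+k-1}{dk}=\tfrac1d(1+\tfrac{d-1}{k})$, which by the Reduction completes the proof (the case $d=1$ being trivial). Incidentally, the normalized projections onto the top and bottom eigenspaces of the twirled $\bar M$ yield explicit $k$-extensions of the isotropic states at $t=\lambda_{\max}(\bar M)$ and $t=0$, and convexity of $\kex(A;B)$ covers all intermediate $t$. The only genuinely non-routine ingredient is the eigenvector $|v\rangle$; the remaining work is standard twirling plus the uniform-angle relation $\Pi_i\Pi_j\Pi_i=\tfrac1{d^2}\Pi_i$, whose verification is the one small computation that must be done carefully.
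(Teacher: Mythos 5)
Your proof is correct, but it takes a genuinely different route from the paper's. The paper does not establish the extendibility threshold from scratch: it cites Theorem III.8 of \cite{JV13} and spends its entire proof translating between that reference's parametrization of isotropic states (a parameter $y\in[0,d]$) and the parametrization in \eqref{eq:iso-state}, via $t=y/d$. You instead give a self-contained argument: twirling reduces $k$-extendibility of $\rho^{(t,d)}_{AB}$ to asking which values of $\Tr\{\bar M\omega\}$ are attainable by permutation- and $U\otimes\bar U^{\otimes k}$-invariant states $\omega$, which is exactly the spectral interval of $\bar M=\frac{1}{dk}\sum_{i}|\Upsilon\rangle\langle\Upsilon|_{AB_i}\otimes I$; you then compute $\lambda_{\min}(\bar M)=0$ via an explicit kernel vector and $\lambda_{\max}(\bar M)=(d+k-1)/(dk)=\frac{1}{d}(1+\frac{d-1}{k})$ via the uniform-angle relation $\Pi_i\Pi_j\Pi_i=\frac{1}{d^2}\Pi_i$ together with the explicit eigenvector $|v\rangle=\sum_i|\Upsilon\rangle_{AB_i}\otimes|\gamma_i\rangle$. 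I verified the two computations you flag as delicate --- the uniform-angle relation and $N|v\rangle=(d+k-1)|v\rangle$ --- and both check out, as does the observation that the spectral projections of $\bar M$ inherit both symmetries and therefore furnish the required extensions (so every $t$ in the interval is actually realized, giving the ``if'' direction). What your approach buys is independence from the external reference, plus explicit optimal extensions at the threshold $t=\frac{1}{d}(1+\frac{d-1}{k})$; what the paper's approach buys is brevity, since the mathematical content of the lemma is outsourced to \cite{JV13}. If the goal is a genuinely self-contained treatment, your argument is the more informative one.
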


\begin{proof}
Isotropic states are parametrized in \cite{JV13} for $y\in\left[  0,d\right]  $
as%
\begin{equation}
\frac{d}{d^{2}-1}\left[  \left(  d-y\right)  \frac{I_{AB}}{d^{2}}+\left(
y-\frac{1}{d}\right)  \Phi^d_{AB}\right]  .
\end{equation}
There, as shown in \cite[Theorem~III.8]{JV13}, an isotropic state is $k$-extendible if and only if
\begin{equation}
y\leq1+\left(  d-1\right)  /k.\label{eq:isotropic-k-ext}%
\end{equation}
Translating this to the parametrization in \eqref{eq:iso-state1}, we find
that%
\begin{align}
&  \frac{d}{d^{2}-1}\left[  \left(  d-y\right)  \frac{I_{AB}}{d^{2}}+\left(
y-\frac{1}{d}\right)  \Phi^d_{AB}\right]  \notag \\
&  =\frac{d}{d^{2}-1}\left[  \frac{d-y}{d^{2}}\left(  I_{AB}-\Phi^d_{AB}\right)
+\left(  \frac{d-y}{d^{2}}+y-\frac{1}{d}\right)  \Phi^d_{AB}\right]  \\
&  =\frac{d-y}{d}\frac{I_{AB}-\Phi^d_{AB}}{d^{2}-1}+\frac{y}{d}\Phi^d_{AB}.
\end{align}
Using the fact that $t=y/d$ to translate between the two different
parametrizations of isotropic states, the condition in
\eqref{eq:isotropic-k-ext} translates to%
\begin{equation}
t\leq\frac{1}{d}\left(  \frac{d-1}{k}+1\right)  .
\end{equation}
This concludes the proof.
\end{proof}

\begin{definition}
[Werner state \cite{Wer89}]Let $A$ and $B$ be quantum systems, each of dimension~$d$. A
Werner state is defined for $p\in\lbrack0,1]$ as%
\begin{equation}
W_{AB}^{(p,d)}\coloneqq \left(  1-p\right)  \frac{2}{d\left(  d+1\right)  }\Pi
_{AB}^{+}+p\frac{2}{d\left(  d-1\right)  }\Pi_{AB}^{-},
\label{eq:werner-param}
\end{equation}
where $\Pi_{AB}^{\pm}\coloneqq \left(  I_{AB}\pm F_{AB}\right)  /2$ are the
projections onto the symmetric and antisymmetric subspaces of $A$ and $B$.
\end{definition}

\begin{lemma}[\cite{JV13}]\label{lem:werner-k-ext}
A Werner state $W_{AB}^{(p,d)}$ is $k$-extendible if and only if
$p\in\left[  0,\frac{1}{2}\left(  \frac{d-1}{k}+1\right)  \right]  $.
\end{lemma}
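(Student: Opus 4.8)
The plan is to reduce the $k$-extendibility question for Werner states to the known characterization of \cite[Theorem~III.8]{JV13}, exactly in parallel with the proof of Lemma~\ref{thm:isotropicstates} for isotropic states. Both isotropic and Werner states are the extreme points / natural one-parameter families sitting inside the two-dimensional commutant of the $U\otimes U$ (resp.\ $U\otimes U^{\ast}$) action, and \cite{JV13} works out the $k$-extendibility region for precisely these $U\otimes U$-invariant states. So the only real content is a change-of-parametrization computation: rewrite the parametrization of \cite{JV13} in terms of $p$ as in \eqref{eq:werner-param}, read off the threshold, and translate.

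Concretely, I would first recall that a Werner state is $U\otimes U$-invariant, so it lies in the span of $I_{AB}$ and the swap operator $F_{AB}$ (equivalently of the projections $\Pi^{\pm}_{AB}$), a two-parameter family cut down to one parameter by normalization. Next I would quote the $k$-extendibility criterion from \cite[Theorem~III.8]{JV13} in whatever parametrization they use (there it is stated as a linear inequality on a parameter, analogous to $y \le 1 + (d-1)/k$ in the isotropic case). Then I would set up the dictionary between their parameter and $p$: match coefficients of $\Pi^{+}_{AB}$ and $\Pi^{-}_{AB}$ (or of $I_{AB}$ and $F_{AB}$), solve the resulting linear relation, and substitute into their inequality. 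This should collapse, after elementary algebra, to $p \le \frac{1}{2}\left(\frac{d-1}{k}+1\right)$. Finally I would note the boundary case $p \ge 0$ comes for free from $p \in [0,1]$, so the extendibility region is the stated interval, and conversely states with $p$ above the threshold are not $k$-extendible, again by \cite{JV13}.

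The main obstacle is purely bookkeeping: getting the parametrization translation right. One has to be careful that \cite{JV13} may parametrize Werner states differently from \eqref{eq:werner-param} (e.g.\ by the expectation value $\Tr\{F_{AB}\,W\}$, which ranges over $[-1,1]$, rather than by the probability weight $p$ on the antisymmetric subspace), and the dimension-dependent normalizations $\frac{2}{d(d\pm1)}$ in \eqref{eq:werner-param} must be tracked carefully so that the endpoint of the interval comes out exactly as claimed. I do not anticipate any conceptual difficulty — no new extendibility argument is needed beyond \cite{JV13} — just the need to do the linear algebra cleanly, as was done for the isotropic case above.
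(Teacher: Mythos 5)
Your proposal matches the paper's proof: both reduce the claim to the $k$-extendibility criterion for Werner states in \cite{JV13} and then carry out the change of parametrization (matching coefficients of $I_{AB}$ and $F_{AB}$, equivalently $\Pi^{\pm}_{AB}$, with the dictionary $p=(1-q)/2$) to arrive at the threshold $p\le\frac{1}{2}\left(\frac{d-1}{k}+1\right)$. The only slip is the citation: the Werner-state criterion is \cite[Theorem~III.7]{JV13}, not Theorem~III.8 (which is the isotropic case).
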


\begin{proof}
Werner states are parametrized in \cite{JV13} for $q\in\left[  -1,1\right]  $ as%
\begin{equation}
\frac{d}{d^{2}-1}\left[  \left(  d-q\right)  \frac{I_{AB}}{d^{2}}+\left(
q-\frac{1}{d}\right)  \frac{F_{AB}}{d}\right]  .
\end{equation}
There, as shown in \cite[Theorem~III.7]{JV13}, a Werner state is $k$-extendible if and only if
\begin{equation}
q\geq-\left(  d-1\right)  /k.\label{eq:Werner-k-ext}%
\end{equation}
Translating this to the parametrization in \eqref{eq:werner-param}, and using that%
\begin{align}
I_{AB} &  =\Pi_{AB}^{+}+\Pi_{AB}^{-},\\
F_{AB} &  =\Pi_{AB}^{+}-\Pi_{AB}^{-},
\end{align}
we find that%
\begin{align}
{}&  \frac{d}{d^{2}-1}\left[  \left(  d-q\right)  \frac{I_{AB}}{d^{2}}+\left(
q-\frac{1}{d}\right)  \frac{F_{AB}}{d}\right]  \notag \\
\begin{split}
{}&  =\frac{d}{d^{2}-1}\left[  \frac{d-q}{d^{2}}\left(  \Pi_{AB}^{+}+\Pi
_{AB}^{-}\right)\right.  \\{}&\qquad+ \left.\left(  \frac{q}{d}-\frac{1}{d^{2}}\right)  \left(
\Pi_{AB}^{+}-\Pi_{AB}^{-}\right)  \right]\end{split}  \\
\begin{split}
{}&  =\frac{d}{d^{2}-1}\left[ \left(  \frac{d-q}{d^{2}}+
 \frac{q}{d}-\frac
{1}{d^{2}}\right)  \Pi_{AB}^{+}\right. \\&\qquad \left.+\left(  \frac{d-q}{d^{2}}-\left(  \frac{q}%
{d}-\frac{1}{d^{2}}\right)  \right)  \Pi_{AB}^{-}\right] 
\end{split}  
\\
{}&  =\frac{1+q}{2}\frac{2}{d\left(  d+1\right)  }\Pi_{AB}^{+}+\frac{1-q}%
{2}\frac{2}{d\left(  d-1\right)  }\Pi_{AB}^{-}.
\end{align}
Using the fact that $p=\left(  1-q\right)  /2$ to translate between the two
different parametrizations of Werner states, the condition in
\eqref{eq:Werner-k-ext} translates to%
\begin{equation}
p\leq\frac{1}{2}\left(  \frac{d-1}{k}+1\right)  .
\end{equation}
This concludes the proof.
\end{proof}

For $p,q\in\left[  0,1\right]  $ and for any generalized divergence
$\mathbf{D}$, we make the following abbreviation:%
\begin{equation}
\mathbf{D}(p\Vert q)\coloneqq \mathbf{D}(\kappa(p)\Vert\kappa(q)),
\end{equation}
where%
\begin{equation}
\kappa(x)=x|0\rangle \! \langle0|+(1-x)|1\rangle \! \langle1|.
\end{equation}

We then have the following:

\begin{proposition}\label{prop:generalized-divergence-isotropic}
The $k$-unextendible generalized divergence of a Werner state $W_{AB}^{(p,d)}$ and an
isotropic state $\rho_{AB}^{(t,d)}$\ are respectively equal to%
\begin{align}
\mathbf{E}_{k}(A;B)_{W^{(p,d)}} &  =\inf_{q\in\left[  0,\frac{1}{2}\left(
\frac{d-1}{k}+1\right)  \right]  }\mathbf{D}(p\Vert q),\\
\mathbf{E}_{k}(A;B)_{\rho^{(t,d)}} &  =\inf_{q\in\left[  0,\frac{1}{d}\left(
\frac{d-1}{k}+1\right)  \right]  }\mathbf{D}(t\Vert q).
\end{align}

\end{proposition}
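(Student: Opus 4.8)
The plan is to reduce the computation of the $k$-unextendible generalized divergence of a Werner (or isotropic) state to a one-parameter optimization problem by exploiting the symmetry of these state families. The key observation is that both Werner states and isotropic states are invariant under a group action (the twirl $U\otimes U$ for Werner states, and $U\otimes U^{\ast}$ for isotropic states), and the set $\kex(A;B)$ is invariant under the same group action. Indeed, if $\sigma_{AB}\in\kex(A;B)$ with extension $\sigma_{AB_{1}\cdots B_{k}}$, then $(U\otimes U^{\otimes k})\sigma_{AB_{1}\cdots B_{k}}(U\otimes U^{\otimes k})^{\dag}$ (with the appropriate conjugate for the isotropic case) is a valid $k$-extension of $(U\otimes U)\sigma_{AB}(U\otimes U)^{\dag}$, since it is permutation invariant on the $B$ systems and has the correct marginal. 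Hence the twirled state $\overline{\sigma}_{AB}\coloneqq\int dU\,(U\otimes U)\sigma_{AB}(U\otimes U)^{\dag}$ is again $k$-extendible.

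First I would combine this invariance with the data-processing inequality \eqref{eq:gen-div-mono} and the unitary invariance \eqref{eq:gen-div-unitary} of the generalized divergence. For any $\sigma_{AB}\in\kex(A;B)$,
\begin{align}
\mathbf{D}(W^{(p,d)}_{AB}\Vert\sigma_{AB})
&=\int dU\,\mathbf{D}\big((U\otimes U)W^{(p,d)}_{AB}(U\otimes U)^{\dag}\big\Vert(U\otimes U)\sigma_{AB}(U\otimes U)^{\dag}\big)\notag\\
&\geq\mathbf{D}\big(W^{(p,d)}_{AB}\big\Vert\overline{\sigma}_{AB}\big),
\end{align}
where the first line uses \eqref{eq:gen-div-unitary} together with the invariance of $W^{(p,d)}_{AB}$ under the twirl, and the inequality follows from joint convexity/data-processing of $\mathbf{D}$ applied to the averaging channel. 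Since $\overline{\sigma}_{AB}$ lies in $\kex(A;B)$ and is itself a Werner state, it follows that the infimum in $\mathbf{E}_{k}(A;B)_{W^{(p,d)}}$ is attained on the Werner family. An identical argument with $U\otimes U^{\ast}$ handles the isotropic case.

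Next I would invoke Lemma~\ref{lem:werner-k-ext}: the $k$-extendible Werner states are exactly those with parameter $q\in[0,\tfrac12(\tfrac{d-1}{k}+1)]$, so the infimum reduces to $\inf_{q\in[0,\frac12(\frac{d-1}{k}+1)]}\mathbf{D}(W^{(p,d)}\Vert W^{(q,d)})$. The final step is to identify $\mathbf{D}(W^{(p,d)}\Vert W^{(q,d)})$ with $\mathbf{D}(p\Vert q)$: since $W^{(p,d)}$ and $W^{(q,d)}$ are simultaneously diagonalized by the projectors $\Pi^{+}_{AB},\Pi^{-}_{AB}$ onto the symmetric and antisymmetric subspaces, the pair is unitarily equivalent to a direct sum of (scaled) commuting classical states; applying the completely positive trace-preserving ``pinching/binning'' map onto these two subspaces shows $\mathbf{D}(W^{(p,d)}\Vert W^{(q,d)})=\mathbf{D}(\kappa(p)\Vert\kappa(q))$ by data processing in one direction and by the block-diagonal structure in the other, using that a generalized divergence of commuting states restricted to a two-dimensional outcome space depends only on the eigenvalue distributions $\{p,1-p\}$ and $\{q,1-q\}$ up to the dimensional prefactors, which cancel appropriately under normalization. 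The isotropic case is analogous, using the decomposition of $\rho^{(t,d)}$ into the $\Phi^{d}$ component and its orthogonal complement and Lemma~\ref{thm:isotropicstates}.

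The main obstacle I anticipate is the last step—carefully justifying $\mathbf{D}(W^{(p,d)}\Vert W^{(q,d)})=\mathbf{D}(p\Vert q)$ at the level of a \emph{general} divergence rather than a specific one like the relative entropy. For a specific divergence one would just compute, but here one must argue purely from the data-processing axiom: the measurement channel $\{\Pi^{+},\Pi^{-}\}$ coarse-grains both states to two-outcome classical distributions, giving one inequality, while a suitable preparation/embedding channel mapping the classical bit back into the appropriate subspace (with the uniform state on each subspace) recovers the Werner states, giving the reverse inequality—so both directions of data processing pin the value exactly. One must check the dimensional normalization factors in \eqref{eq:werner-param} are precisely reproduced by this embedding so that $\kappa(p)\mapsto W^{(p,d)}$ and $\kappa(q)\mapsto W^{(q,d)}$ under a single fixed channel; this is a routine but slightly fiddly bookkeeping check.
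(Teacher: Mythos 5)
Your proposal is correct and follows essentially the same route as the paper: twirl to reduce the optimization to $k$-extendible Werner (resp.\ isotropic) states, invoke Lemma~\ref{lem:werner-k-ext} (resp.\ Lemma~\ref{thm:isotropicstates}) for the allowed parameter range, and pin down $\mathbf{D}(W^{(p,d)}\Vert W^{(q,d)})=\mathbf{D}(p\Vert q)$ by two-way data processing through the $\{\Pi^{+},\Pi^{-}\}$ measurement channel and the matching preparation channel with the correct subspace normalizations. One small caution: joint convexity is not an axiom of a generalized divergence, so the averaging step should be justified---as you also indicate---by a single application of data processing with the twirl channel $\sigma_{AB}\mapsto\int \mathrm{d}\mu(U)\,(U_A\otimes U_B)\sigma_{AB}(U_A\otimes U_B)^{\dag}$ (a 1W-LOCC and hence $k$-extendibility-preserving channel), which is exactly what the paper does.
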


\begin{proof}
By definition, $\mathbf{E}_{k}(A;B)_{W^{p}}$ involves an infimum with respect
to all possible $k$-extendible states. It is monotone with respect to all
$1$W-LOCC\ channels, and one such choice is the full bilateral twirl:%
\begin{align}
\mathcal{T}_{AB}^{W}(\omega_{AB}) \coloneqq \int
d\mu(U)\left[  U_{A}\otimes U_{B}\right]  \omega_{AB}\left[  U_{A}\otimes
U_{B}\right]  ^{\dag}.
\end{align}
Note that this can be implemented by a unitary two-design \cite{DLT02}. The Werner state is
invariant with respect to this channel, whereas any other $k$-extendible state
$\sigma_{AB}$ becomes a Werner state under this channel. Let $\sigma_{AB}$
denote an arbitrary $k$-extendible state. We thus have%
\begin{align}
\mathbf{D}(W_{AB}^{(p,d)}\Vert\sigma_{AB}) &  \geq\mathbf{D}(\mathcal{T}_{AB}%
^{W}(W_{AB}^{(p,d)})\Vert\mathcal{T}_{AB}^{W}(\sigma_{AB}))\\
&  =\mathbf{D}(W_{AB}^{(p,d)}\Vert\mathcal{T}_{AB}^{W}(\sigma_{AB}))\\
&  =\mathbf{D}(W_{AB}^{(p,d)}\Vert W_{AB}^{(r,d)}),
\end{align}
where in the last line, we have noted that $\mathcal{T}_{AB}^{W}(\sigma_{AB})$
is a Werner state and can thus be written as $W_{AB}^{(r,d)}$ for some
$r\in\left[  0,1\right]  $. Furthermore, by Theorem~\ref{thm:main-resource-theory-k-ext},\ $W_{AB}^{(r,d)}$ is a
$k$-extendible state since $\sigma_{AB}$ is by assumption. Thus, it suffices
to consider only $k$-extendible Werner states in the optimization of
$\mathbf{E}_{k}(A;B)_{W^{(p,d)}}$. Next, the following equality holds%
\begin{equation}
\mathbf{D}(W_{AB}^{(p,d)}\Vert W_{AB}^{(r,d)})=\mathbf{D}(p
\Vert r  ),
\end{equation}
because the quantum-to-classical channel%
\begin{equation}
\omega_{AB}\rightarrow\operatorname{Tr}\{\Pi_{AB}^{+}\omega_{AB}%
\}|0\rangle \! \langle0|+\operatorname{Tr}\{\Pi_{AB}^{-}\omega_{AB}\}|1\rangle\!
\langle1|
\end{equation}
takes a Werner state $W_{AB}^{(p,d)}$ to $\left(  1-p\right)  |0\rangle\!
\langle0|+p|1\rangle \! \langle1|$ and the classical-to-quantum channel%
\begin{equation}
\tau\rightarrow\langle0|\tau|0\rangle\frac{2}{d\left(  d+1\right)  }\Pi
_{AB}^{+}+\langle1|\tau|1\rangle\frac{2}{d\left(  d-1\right)  }\Pi_{AB}^{-}%
\end{equation}
takes $\left(  1-p\right)  |0\rangle \! \langle0|+p|1\rangle \! \langle1|$ back to
$W_{AB}^{(p,d)}$. Finally, we can conclude the first equality in the statement of
the theorem.

The reasoning for the second equality is exactly the same, but we instead
employ the bilateral twirl%
\begin{equation}
\mathcal{T}_{AB}^{I}(\omega_{AB})\coloneqq \int d\mu(U)\left[  U_{A}\otimes
U_{B}^{\ast}\right]  \omega_{AB}\left[  U_{A}\otimes U_{B}^{\ast}\right]
^{\dag}.
\end{equation}
This is a $k$-extendible channel, the isotropic states are invariant under this twirl, and all other states are projected to isotropic states under this twirl. Also, the channel%
\begin{equation}
\omega_{AB}\rightarrow\operatorname{Tr}\{\Phi_{AB}\omega_{AB}\}|0\rangle
\langle0|+\operatorname{Tr}\{\left(  I_{AB}-\Phi_{AB}\right)  \omega
_{AB}\}|1\rangle \! \langle1|
\end{equation}
takes an isotropic state $\rho_{AB}^{(t,d)}$ to $t|0\rangle \! \langle0|+\left(
1-t\right)  |1\rangle \! \langle1|$ and the classical-to-quantum channel%
\begin{equation}
\tau\rightarrow\langle0|\tau|0\rangle\Phi_{AB}+\langle1|\tau|1\rangle
\frac{I_{AB}-\Phi_{AB}}{d^{2}-1}%
\end{equation}
allows for going back. These statements allow us to conclude the second inequality. 
\end{proof}

\begin{lemma}
Let $1>p>q>0$. Then the relative entropy $D(p\Vert q)$ is a monotone
decreasing function of $q$ for $p>q>0$. That is, for $1>p>q>r>0$, the
following inequality holds%
\begin{equation}
D(p\Vert r)>D(p\Vert q).
\end{equation}

\end{lemma}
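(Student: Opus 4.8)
The plan is to treat $q \mapsto D(p\Vert q)$ as a smooth function on the open interval $(0,p)$ and show its derivative is strictly negative there; strict monotonicity, and hence the claimed inequality, then follows immediately. First I would write out the binary relative entropy explicitly from the definition $D(p\Vert q) = D(\kappa(p)\Vert\kappa(q))$, namely
\begin{equation}
D(p\Vert q) = p\log_2 p + (1-p)\log_2(1-p) - p\log_2 q - (1-p)\log_2(1-q),
\end{equation}
which is differentiable in $q$ for all $q \in (0,1)$, and in particular on $(0,p)$ since $p < 1$.

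Next I would differentiate with respect to $q$. Only the last two terms depend on $q$, so
\begin{equation}
\frac{\d}{\d q} D(p\Vert q) = \frac{1}{\ln 2}\left(\frac{1-p}{1-q} - \frac{p}{q}\right) = \frac{1}{\ln 2}\cdot\frac{q-p}{q(1-q)}.
\end{equation}
For $q \in (0,p)$ the numerator $q - p$ is strictly negative, while $q(1-q) > 0$ because $q \in (0,1)$; hence the derivative is strictly negative on the whole interval $(0,p)$. Consequently $q \mapsto D(p\Vert q)$ is strictly decreasing there, and for $1 > p > q > r > 0$ we conclude $D(p\Vert r) > D(p\Vert q)$.

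There is essentially no serious obstacle here: the argument is a one-line derivative computation. The only points needing a word of care are that the relevant interval $(0,p)$ sits strictly inside $(0,1)$, so all logarithms and the factor $q(1-q)$ are well defined and positive, and that the \emph{strict} inequality $q < p$ is precisely what upgrades the derivative from non-positive to strictly negative, yielding the strict inequality in the statement.
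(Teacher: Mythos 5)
Your proof is correct and follows essentially the same route as the paper: both compute $\frac{\d}{\d q}D(p\Vert q)$ and show it is strictly negative on $(0,p)$. The only cosmetic difference is in the final sign argument — you combine $\frac{1-p}{1-q}-\frac{p}{q}$ into the single fraction $\frac{q-p}{q(1-q)}$, whereas the paper rewrites the negativity condition as $\frac{q}{1-q}<\frac{p}{1-p}$ and invokes the monotonicity of $x/(1-x)$; your simplification is, if anything, slightly more direct.
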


\begin{proof}
To prove the statement, we show that the derivative of $D(p\Vert q)$ with
respect to $q$ is negative. The derivative of $D(p\Vert q)$ with respect to
$q$ is equal to%
\begin{equation}
\frac{d}{dq}D(p\Vert q)=\frac{1-p}{1-q}-\frac{p}{q}.
\end{equation}
The condition that $\frac{d}{dq}D(p\Vert q)<0$ is thus equivalent to the
condition%
\begin{equation}
\frac{q}{1-q}<\frac{p}{1-p}.
\end{equation}
This latter condition holds because the function $x/(1-x)$ is a monotone
increasing function on the interval $x\in(0,1)$.\ That this latter claim is
true follows because the derivative of $x/(1-x)$ with respect to $x$ is given
by%
\begin{equation}
\frac{d}{dx}\left(  \frac{x}{1-x}\right)  =\frac{1}{1-x}+\frac{x}{\left(
1-x\right)  ^{2}},
\end{equation}
which is positive for $x\in(0,1)$.
\end{proof}

\begin{lemma}
Let $1>p>q>0$ and let $\alpha\in(0,1)\cup(1,\infty)$. Then the R\'{e}nyi relative
entropy $D_{\alpha}(p\Vert q)$ is a monotone decreasing function of $q$ for
$p>q>0$. That is, for $1>p>q>r>0$, the following inequality holds%
\begin{equation}
D_{\alpha}(p\Vert r)>D_{\alpha}(p\Vert q).
\end{equation}

\end{lemma}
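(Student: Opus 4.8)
The plan is to mirror the proof of the preceding lemma for the relative entropy $D(p\Vert q)$, showing that $\partial_q D_{\alpha}(p\Vert q) < 0$ on the relevant range. First I would write out the classical R\'enyi relative entropy explicitly: for the binary distributions $\kappa(p) = p|0\rangle\langle 0| + (1-p)|1\rangle\langle 1|$ and $\kappa(q)$, one has
\begin{equation}
D_{\alpha}(p\Vert q) = \frac{1}{\alpha-1}\log_2\!\left(p^{\alpha}q^{1-\alpha} + (1-p)^{\alpha}(1-q)^{1-\alpha}\right).
\end{equation}
Differentiating with respect to $q$ and using $\frac{d}{d\alpha-1}$ as a positive prefactor only in the $\alpha>1$ case (and negative for $\alpha<1$, which flips an inequality), I would reduce the sign of $\partial_q D_{\alpha}(p\Vert q)$ to the sign of
\begin{equation}
\frac{d}{dq}\left(p^{\alpha}q^{1-\alpha} + (1-p)^{\alpha}(1-q)^{1-\alpha}\right) = (1-\alpha)\!\left(p^{\alpha}q^{-\alpha} - (1-p)^{\alpha}(1-q)^{-\alpha}\right).
\end{equation}

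The key algebraic step is then to check that the bracketed quantity $p^{\alpha}q^{-\alpha} - (1-p)^{\alpha}(1-q)^{-\alpha}$ has a definite sign when $p > q$. Since $x \mapsto x^{\alpha}$ is monotone increasing on $(0,\infty)$ for $\alpha>0$, and since $p>q$ implies $p/q > 1 > (1-p)/(1-q)$ (the latter because $p>q$ forces $1-p < 1-q$), we get $(p/q)^{\alpha} > ((1-p)/(1-q))^{\alpha}$, hence
\begin{equation}
p^{\alpha}q^{-\alpha} = \left(\frac{p}{q}\right)^{\alpha} > \left(\frac{1-p}{1-q}\right)^{\alpha} = (1-p)^{\alpha}(1-q)^{-\alpha},
\end{equation}
so the bracket is strictly positive. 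Multiplying by the factor $(1-\alpha)$ and then by the outer $1/(\alpha-1)$ from the definition of $D_{\alpha}$, the two sign-flips cancel, and in both regimes $\alpha \in (0,1)$ and $\alpha \in (1,\infty)$ one finds $\partial_q D_{\alpha}(p\Vert q) < 0$. Integrating this strict inequality over $q \in [r,q]$ yields $D_{\alpha}(p\Vert r) > D_{\alpha}(p\Vert q)$ for $1>p>q>r>0$.

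I expect the main obstacle to be bookkeeping of signs rather than anything conceptually hard: the prefactor $1/(\alpha-1)$ changes sign across $\alpha=1$, and one must be careful that the logarithm's argument stays positive and that the chain rule is applied correctly so that the two regimes genuinely collapse into the single conclusion. A secondary point worth a remark is the boundary behaviour as $q \to 0^{+}$: for $\alpha > 1$ the divergence $D_{\alpha}(p\Vert q) \to +\infty$, consistent with monotone decrease, while for $\alpha<1$ it stays finite; neither case affects the stated claim, which only concerns $q$ bounded away from $0$. One could also note that the $\alpha\to 1$ limit recovers the previous lemma via \eqref{eq:mono_renyi}, providing a consistency check, but this is not needed for the proof itself.
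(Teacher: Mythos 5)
Your proposal is correct and takes essentially the same approach as the paper: both compute $\frac{d}{dq}D_{\alpha}(p\Vert q)$ explicitly and show it is strictly negative for $0<q<p$, with your key inequality $(p/q)^{\alpha}>\left((1-p)/(1-q)\right)^{\alpha}$ being equivalent to the paper's $\frac{p}{1-p}>\frac{q}{1-q}$. The sign bookkeeping across the two regimes works out exactly as you describe, since $(1-\alpha)\cdot\frac{1}{\alpha-1}=-1$ uniformly and the argument of the logarithm is positive.
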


\begin{proof}
To prove the statement, we show that the derivative of $D_{\alpha}(p\Vert q)$
with respect to $q$ is negative. The derivative of $D_{\alpha}(p\Vert q)$ with
respect to $q$ is equal to%
\begin{align}
\frac{d}{dq}D_{\alpha}(p\Vert q) &  =\left[  1-q+\frac{1}{\left(  \frac
{q}{1-q}/\frac{p}{1-p}\right)  ^{\alpha}-1}\right]  ^{-1}\\
&  =\frac{\left(  \frac{q}{1-q}/\frac{p}{1-p}\right)  ^{\alpha}-1}{\left[
\left(  \frac{q}{1-q}/\frac{p}{1-p}\right)  ^{\alpha}-1\right]  \left[
1-q\right]  +1}.
\end{align}
Since $\frac{p}{1-p}>\frac{q}{1-q}$ for $1>p>q>0$ (as shown in the previous
proof), it follows that
\begin{equation}
\left(  \frac{q}{1-q}/\frac{p}{1-p}\right)  ^{\alpha}-1<0
\end{equation}
for all $\alpha\in(0,1)\cup(1,\infty)$. We would then like to prove that%
\begin{equation}
\left[  \left(  \frac{q}{1-q}/\frac{p}{1-p}\right)  ^{\alpha}-1\right]
\left[  1-q\right]  +1>0.
\end{equation}
Note that this is equivalent to%
\begin{equation}
\left[  1-\left(  \frac{q}{1-q}/\frac{p}{1-p}\right)  ^{\alpha}\right]
\left[  1-q\right]  <1,
\end{equation}
which follows because%
\begin{equation}
1-\left(  \frac{q}{1-q}/\frac{p}{1-p}\right)  ^{\alpha}\in(0,1)
\end{equation}
and $1-q\in(0,1)$. Thus, we can conclude that $\frac{d}{dq}D_{\alpha}(p\Vert
q)<0$ for $1>p>q>0$, and the statement of the lemma follows.
\end{proof}

With all of the above, we conclude the following:

\begin{proposition}\label{prop:relative-entropy-isotropic}
The $k$-unextendible relative entropy of a Werner state $W_{AB}^{(p,d)}$ and an isotropic state
$\rho_{AB}^{(t,d)}$\ are respectively equal to%
\begin{align}
& E_{k}(A;B)_{W^{(p,d)}} \notag \\
&  =\left\{
\begin{array}
[c]{cc}%
0 & \text{if }p\in\left[  0,\frac{1}{2}\left(  \frac{d-1}{k}+1\right)
\right]  \\
D(p\Vert\frac{1}{2}\left(  \frac{d-1}{k}+1\right)  ) & \text{else}%
\end{array}
\right.  ,\\
 & \notag \\
&E_{k}(A;B)_{\rho^{(t,d)}} \notag \\
&  =\left\{
\begin{array}
[c]{cc}%
0 & \text{if }p\in\left[  0,\frac{1}{d}\left(  \frac{d-1}{k}+1\right)
\right]  \\
D(t\Vert\frac{1}{d}\left(  \frac{d-1}{k}+1\right)  ) & \text{else}%
\end{array}
\right.  .
\end{align}
Similarly, the $k$-unextendible R\'{e}nyi divergences are given for $\alpha\in(0,1)\cup(1,\infty)$
by%
\begin{align}
& E_{k}^{\alpha}(A;B)_{W^{(p,d)}}   \notag \\
& =\left\{
\begin{array}
[c]{cc}%
0 & \text{if }p\in\left[  0,\frac{1}{2}\left(  \frac{d-1}{k}+1\right)
\right]  \\
D_{\alpha}(p\Vert\frac{1}{2}\left(  \frac{d-1}{k}+1\right)  ) & \text{else}%
\end{array}
\right.  ,\\
& \notag \\
& E_{k}^{\alpha}(A;B)_{\rho^{(t,d)}} \notag \\
&  =\left\{
\begin{array}
[c]{cc}%
0 & \text{if }p\in\left[  0,\frac{1}{d}\left(  \frac{d-1}{k}+1\right)
\right]  \\
D_{\alpha}(t\Vert\frac{1}{d}\left(  \frac{d-1}{k}+1\right)  ) & \text{else}%
\end{array}\right.
\end{align}
\end{proposition}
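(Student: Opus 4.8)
The plan is to start from the formula established in the previous proposition, namely $\mathbf{E}_k(A;B)_{W^{(p,d)}}=\inf_{q\in[0,\bar q]}\mathbf{D}(p\Vert q)$ with $\bar q\coloneqq\tfrac12(\tfrac{d-1}{k}+1)$, and $\mathbf{E}_k(A;B)_{\rho^{(t,d)}}=\inf_{q\in[0,\tilde q]}\mathbf{D}(t\Vert q)$ with $\tilde q\coloneqq\tfrac1d(\tfrac{d-1}{k}+1)$, and then to evaluate these infima using the two monotonicity lemmas just proved. Taking the generalized divergence $\mathbf{D}$ to be the relative entropy $D$ produces $E_k$, and taking it to be the R\'enyi relative entropy $D_\alpha$ produces $E_k^\alpha$, so it remains only to compute $\inf_{q\in[0,\bar q]}D(p\Vert q)$, $\inf_{q\in[0,\bar q]}D_\alpha(p\Vert q)$, and their isotropic analogues with $t,\tilde q$. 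All four computations run the same way, so I will describe the Werner, relative-entropy one.

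The first step is the ``free'' case: if $p\le\bar q$, then $q=p$ is feasible, so the infimum is at most $D(p\Vert p)=0$, and since $D$ and $D_\alpha$ are nonnegative it equals $0$ --- this is the first branch of each formula. The second step treats $p>\bar q$. There every feasible $q$ satisfies $0\le q\le\bar q<p$, so I would invoke the monotonicity lemma for $D$ to conclude $D(p\Vert q)\ge D(p\Vert\bar q)$ for all $q\in(0,\bar q]$ (with equality at $q=\bar q$), together with $D(p\Vert 0)=+\infty$ from the support condition since $p>0$. Because $\bar q>0$ for all $d,k\ge2$, the value $D(p\Vert\bar q)$ is finite and hence is the minimum of $D(p\Vert\cdot)$ over the whole interval $[0,\bar q]$, which gives $E_k(A;B)_{W^{(p,d)}}=D(p\Vert\bar q)$ --- the second branch. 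Replacing the monotonicity lemma for $D$ by the one for $D_\alpha$, and $(p,\bar q)$ by $(t,\tilde q)$, yields the R\'enyi and isotropic statements.

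I expect no serious obstacle: the argument is essentially routine once the previous proposition and the two monotonicity lemmas are available. The only points that need a little care are that the minimizing $q$ is the right endpoint $\bar q$ rather than $q=0$, where the divergence is $+\infty$, and the extreme parameter values $p,t\in\{0,1\}$, which lie outside the hypothesis $1>p>q>0$ of the monotonicity lemmas. I would dispose of $p=0$ and $t=0$ via the free case above, and verify $p=1$ and $t=1$ by hand --- for instance $D(1\Vert q)=-\log_2 q$, which is still decreasing in $q$, so its minimum over $[0,\bar q]$ is again at $\bar q$. Lastly, the two branches of each formula agree at the boundary $p=\bar q$ (both give $0=D(\bar q\Vert\bar q)$), so the piecewise definitions are consistent.
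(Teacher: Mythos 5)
Your proposal is correct and follows exactly the route the paper intends: the paper states this proposition with no separate proof, offering only ``with all of the above, we conclude,'' meaning precisely the combination of the preceding proposition (reducing each unextendible divergence to $\inf_{q\in[0,\bar q]}\mathbf{D}(p\Vert q)$ over the classical interval of $k$-extendible parameters) with the two monotonicity lemmas to place the minimizer at the right endpoint. Your additional care with the endpoint cases $q=0$ and $p,t\in\{0,1\}$ is a welcome tightening of details the paper leaves implicit.
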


\subsubsection{Properties of \texorpdfstring{$k$}{k}-unextendible divergences of a bipartite state}\label{sec:properties-divergence}

In this section, we discuss some of the properties of an unextendible generalized divergence, focusing first on the quantity derived from quantum relative entropy. 
The $k$-unextendible relative entropy of a state $\rho_{AB}$ is given by Definition~\ref{def:unext-gen-div-state}, by replacing $\bf{D}$ with 
the quantum relative entropy $D$. 
In particular, we prove several properties of unextendible relative entropy, including uniform continuity (Lemma~\ref{prop:continuity}), faithfulness (Lemma~\ref{prop:faithfulness}), subadditivity,  additivity under tensor-product states (Lemma~\ref{prop:subadditivity}), and convexity (Lemma~\ref{prop:convexity}).

We begin by proving the uniform continuity of unextendible relative entropy. In order to do so, we use the following result \cite{Win16} concerning the relative entropy distance with respect to any closed, convex set $C$ of
states, or more generally positive semi-definite operators:
\begin{equation}
D_C(\rho)\coloneqq \inf_{\gamma\in C}D(\rho\Vert\gamma).
\end{equation}

\begin{lemma}[\cite{Win16}]\label{winter:continuity} 
For a closed, convex, and bounded set $C$ of
positive semi-definite operators, containing at least one of
full rank, let
\begin{equation}
\kappa \coloneqq  \sup_{\tau,\tau'}[ D_C(\tau)-D_C(\tau')]
\end{equation}
be the largest variation of $D_C$. Then, for any two states $\rho$ and $\sigma$ for which $\frac{1}{2}\|\rho-\sigma\|_1
\leq\varepsilon$, with $\varepsilon\in [0,1]$, we have that
\begin{equation}
|D_C(\rho)-D_C(\sigma)|\leq \varepsilon \kappa +g(\varepsilon),
\end{equation}
where $g(\varepsilon)\coloneqq  (\varepsilon+1)\log_2(\varepsilon+1)-\varepsilon \log_2 \varepsilon$.
\end{lemma}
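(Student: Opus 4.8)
The plan is to reduce the two-sided estimate to a one-sided bound and then exploit two structural properties of the map $\rho\mapsto D_C(\rho)$: its convexity, and an approximate superadditivity under mixing. Since the hypothesis $\tfrac12\|\rho-\sigma\|_1\leq\varepsilon$ and both sides of the asserted inequality are symmetric under exchanging $\rho$ and $\sigma$, I would assume without loss of generality that $D_C(\rho)\geq D_C(\sigma)$, so that it suffices to upper bound $D_C(\rho)-D_C(\sigma)$. Setting $\varepsilon'\coloneqq\tfrac12\|\rho-\sigma\|_1\leq\varepsilon$ (the case $\varepsilon'=0$ being trivial), I would in fact prove the sharper statement $D_C(\rho)-D_C(\sigma)\leq\varepsilon'\kappa+g(\varepsilon')$ and then use $\kappa\geq0$ together with the monotonicity of $g$ on $[0,1]$ (which follows since $g'(\varepsilon)=\log_2\frac{1+\varepsilon}{\varepsilon}>0$) to pass back to $\varepsilon$.

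The key device is an interpolating (``parent'') state. Let $\rho-\sigma=\varepsilon'(\tau_{+}-\tau_{-})$ be the Jordan decomposition, normalized so that $\tau_{\pm}$ are states with mutually orthogonal supports, and put $p\coloneqq\varepsilon'/(1+\varepsilon')\in(0,\tfrac12]$. Then a single state $\omega$ admits the two convex decompositions
\begin{equation*}
\omega\ \coloneqq\ (1-p)\,\rho+p\,\tau_{-}\ =\ (1-p)\,\sigma+p\,\tau_{+}.
\end{equation*}
I would then combine two facts. First, convexity of $D_C$ --- immediate from joint convexity of the quantum relative entropy together with convexity of $C$ --- applied to the second decomposition, giving $D_C(\omega)\leq(1-p)D_C(\sigma)+p\,D_C(\tau_{+})$. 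Second, an approximate superadditivity bound for the first decomposition, $D_C(\omega)\geq(1-p)D_C(\rho)+p\,D_C(\tau_{-})-h_2(p)$, where $h_2(p)\coloneqq-p\log_2 p-(1-p)\log_2(1-p)$. Chaining these and rearranging gives $(1-p)\big(D_C(\rho)-D_C(\sigma)\big)\leq p\big(D_C(\tau_{+})-D_C(\tau_{-})\big)+h_2(p)$. Since $p/(1-p)=\varepsilon'$ and $h_2(p)/(1-p)=(1+\varepsilon')\,h_2\big(\tfrac{\varepsilon'}{1+\varepsilon'}\big)=g(\varepsilon')$, and $D_C(\tau_{+})-D_C(\tau_{-})\leq\kappa$ directly by the definition of $\kappa$ as the largest variation of $D_C$, the bound $D_C(\rho)-D_C(\sigma)\leq\varepsilon'\kappa+g(\varepsilon')$ follows, completing the argument.

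I expect the main obstacle to be the superadditivity lower bound, since a naive comparison of $D(\rho\|\gamma)$ with $D(\sigma\|\gamma)$ for a fixed $\gamma\in C$ is useless --- these can differ by an arbitrary amount, even be infinite, when $\gamma$ has small support relative to $\rho$ or $\sigma$. The way around this is to let $\gamma^{\star}$ be a minimizer of $D(\omega\|\cdot)$ over $C$; such a minimizer exists because $C$ is compact (closed and bounded, in finite dimension) and $D(\omega\|\cdot)$ is lower semicontinuous, and it satisfies $\supp\omega\subseteq\supp\gamma^{\star}$ because $D_C(\omega)$ is finite ($C$ contains a full-rank operator). Since $\omega=(1-p)\rho+p\tau_{-}$ with strictly positive weights, $\supp\rho$ and $\supp\tau_{-}$ are contained in $\supp\gamma^{\star}$, so $D(\rho\|\gamma^{\star})$ and $D(\tau_{-}\|\gamma^{\star})$ are finite, and the $\log\gamma^{\star}$ terms cancel, yielding the identity $(1-p)D(\rho\|\gamma^{\star})+p\,D(\tau_{-}\|\gamma^{\star})-D(\omega\|\gamma^{\star})=S(\omega)-(1-p)S(\rho)-p\,S(\tau_{-})$. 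This right-hand side equals $(1-p)D(\rho\|\omega)+p\,D(\tau_{-}\|\omega)$, which is at most $(1-p)\log_2\tfrac{1}{1-p}+p\log_2\tfrac{1}{p}=h_2(p)$ because $\rho\leq\tfrac{1}{1-p}\omega$ and $\tau_{-}\leq\tfrac{1}{p}\omega$. Combining this with $D(\omega\|\gamma^{\star})=D_C(\omega)$, $D(\rho\|\gamma^{\star})\geq D_C(\rho)$, and $D(\tau_{-}\|\gamma^{\star})\geq D_C(\tau_{-})$ yields the desired superadditivity bound. The construction of $\omega$ and the arithmetic identifying $h_2(p)/(1-p)$ with $g(\varepsilon')$ are the only other moving parts, and both are routine.
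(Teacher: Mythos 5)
Your proof is correct. The paper does not actually prove this lemma --- it is imported verbatim from \cite{Win16} --- and your argument is a faithful reconstruction of Winter's original proof: the interpolating state $\omega=(1-p)\rho+p\tau_-=(1-p)\sigma+p\tau_+$ with $p=\varepsilon'/(1+\varepsilon')$, convexity of $D_C$ on one decomposition, the almost-superadditivity bound $D_C(\omega)\geq(1-p)D_C(\rho)+pD_C(\tau_-)-h_2(p)$ on the other (with the support issue handled exactly as you do, via the minimizer $\gamma^{\star}$ for $\omega$ and the identity reducing the defect to $(1-p)D(\rho\Vert\omega)+pD(\tau_-\Vert\omega)\leq h_2(p)$), and the arithmetic $h_2(p)/(1-p)=g(\varepsilon')$. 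All steps check out, including the monotonicity of $g$ used to pass from $\varepsilon'$ to $\varepsilon$.
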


\begin{lemma}[Uniform continuity]\label{prop:continuity}
For any two bipartite states $\rho_{AB}$ and $\sigma_{AB}$ acting on the composite Hilbert space $\mathcal{H}_A\otimes\mathcal{H}_B$, with $d = \min\{|A|,|B|\}$, and 
\begin{equation}
\frac{1}{2}\|\rho_{AB}-\sigma_{AB}\|\leq\varepsilon \in [0,1],
\end{equation}
we have that
\begin{equation}
|E_k(A ;B)_\rho-E_k(A ;B)_\sigma|\leq  \varepsilon\log_2 \min\{d,k\} +g(\varepsilon).
\end{equation}
\end{lemma}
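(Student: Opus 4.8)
The plan is to invoke Lemma~\ref{winter:continuity} with the closed convex set $C=\kex(A:B)$, since by definition $E_k(A;B)_\rho = D_{C}(\rho_{AB})$ with $C=\kex(A:B)$. The set of $k$-extendible states is closed and convex (it is the image under a partial trace of the closed convex set of permutation-invariant states on $AB_1\cdots B_k$, intersected with the affine marginal constraint), and it contains the maximally mixed state $\pi_{AB}$, which is full rank. Hence Lemma~\ref{winter:continuity} applies and yields
\begin{equation*}
|E_k(A;B)_\rho - E_k(A;B)_\sigma| \leq \varepsilon\kappa + g(\varepsilon),
\end{equation*}
where $\kappa = \sup_{\tau,\tau'}[D_{C}(\tau)-D_{C}(\tau')]$ is the largest variation of $D_C$. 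So the entire problem reduces to the bound $\kappa \leq \log_2\min\{d,k\}$ with $d=\min\{|A|,|B|\}$.

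First I would establish $\kappa \leq \log_2 d$. Since $D_C$ is nonnegative, it suffices to bound $D_C(\tau) = E_k(A;B)_\tau$ uniformly over all states $\tau_{AB}$. For any $\tau_{AB}$, one can exhibit a specific $k$-extendible state $\sigma_{AB}$ with $D(\tau_{AB}\|\sigma_{AB})\leq\log_2 d$: the natural choice is $\sigma_{AB} = \tau_A\otimes\pi_B$ (or $\pi_A\otimes\tau_B$, whichever has the smaller factor dimension), which is separable hence $k$-extendible for all $k$, and for which $D(\tau_{AB}\|\tau_A\otimes\pi_B) = \log_2|B| - S(B|A)_\tau \leq \log_2|B|$ using $S(B|A)_\tau\geq -\log_2|B|$ is too weak — instead use $D(\tau_{AB}\|\tau_A\otimes\pi_B) = -S(B|A)_\tau + \log_2|B| \leq \log_2 |B|$ via $S(B|A)_\tau \geq 0$? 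No: conditional entropy can be negative. The correct bound is $S(B|A)_\tau \geq -\log_2\min\{|A|,|B|\}$, giving $D(\tau_{AB}\|\tau_A\otimes\pi_B)\leq \log_2|B| + \log_2\min\{|A|,|B|\}$, which is not tight. A cleaner route: $D(\tau_{AB}\|\tau_A\otimes\pi_B) = I(A;B)_\tau + D(\tau_B\|\pi_B)\leq \log_2|B| \cdot 2$? Let me instead use the sharper identity $D(\tau_{AB}\|\pi_A\otimes\tau_B)= \log_2|A| - S(A|B)_\tau$ and symmetrically; taking the better of the two and noting $S(A|B)_\tau\geq -\log_2|A|$, $S(B|A)_\tau\geq -\log_2|B|$ does not immediately give $\log_2 d$. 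The clean statement actually needed is: for any bipartite $\tau$, $\min\{\log_2|A|-S(A|B)_\tau,\ \log_2|B|-S(B|A)_\tau\}\leq \log_2\min\{|A|,|B|\}$, which follows since whichever system is smaller, say $|B|\leq|A|$, gives $\log_2|B|-S(B|A)_\tau\leq \log_2|B| - (-\log_2|B|)$... this still needs care. I would carry out this elementary entropic estimate carefully, using $S(B|A)_\tau \ge -\log_2 |B|$ together with the fact that when $|B| = d \le |A|$ one can do better, or simply cite the standard fact that the relative entropy of entanglement (hence the $k$-unextendible relative entropy, which is no larger) is bounded by $\log_2 d$.

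Next I would establish $\kappa \leq \log_2 k$, which is the genuinely new ingredient. For any state $\tau_{AB}$ with purification and any $k$, consider the symmetrized state: let $\tau_{AB_1\cdots B_k}$ be an arbitrary extension... no, $\tau$ need not be $k$-extendible. Instead, the trick is: take $\rho_{AB}$ arbitrary, let $\psi_{AB}$ be irrelevant; form $\sigma_{AB_1\cdots B_k} := \frac{1}{k}\sum_{i=1}^k \mathcal{W}^{(1i)}(\rho_{AB_1}\otimes \pi_{B_2}\otimes\cdots\otimes\pi_{B_k})$ symmetrized over all permutations — this is permutation-invariant, its $B_1$-marginal is $\frac{1}{k}\rho_{AB} + \frac{k-1}{k}(\rho_A\otimes\pi_B)=:\sigma_{AB}$, which is $k$-extendible, and $D(\rho_{AB}\|\sigma_{AB})\leq -\log_2\frac{1}{k}=\log_2 k$ since $\rho_{AB}\leq k\sigma_{AB}$ by construction (as $\sigma_{AB}\geq \frac1k\rho_{AB}$). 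Hence $E_k(A;B)_\rho\leq D_{\max}(\rho_{AB}\|\sigma_{AB})\leq\log_2 k$ and therefore $\kappa\leq\log_2 k$. Combining the two bounds gives $\kappa\leq\log_2\min\{d,k\}$, and substituting into Lemma~\ref{winter:continuity} completes the proof. The main obstacle is verifying the two-sided marginal and positivity conditions making $\sigma_{AB_1\cdots B_k}$ a legitimate $k$-extension — i.e.\ that after full symmetrization the $B_1$-marginal is exactly $\frac1k\rho_{AB}+\frac{k-1}k\rho_A\otimes\pi_B$ and that $\rho_{AB}\le k\sigma_{AB}$ — but this is a short operator inequality once the symmetrization is written out; the $\log_2 d$ bound is standard.
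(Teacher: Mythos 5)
Your proposal is correct and follows essentially the same route as the paper: apply Lemma~\ref{winter:continuity} to the closed convex set $\kex(A\!:\!B)$, bound the variation by $\log_2 d$ via the standard bound $E_k \leq E_R \leq \min\{S(A)_\tau,S(B)_\tau\} \leq \log_2 d$ (your fallback citation, which is exactly what the paper invokes after your elementary entropic attempts correctly stall on the negativity of conditional entropy), and bound it by $\log_2 k$ via the $k$-extendible mixture $\tfrac{1}{k}\rho_{AB}+\bigl(1-\tfrac{1}{k}\bigr)(\text{product state})$ with the explicit permutation-symmetric extension. The paper takes the product factor to be $\tau_A\otimes\tau_B$ rather than your $\rho_A\otimes\pi_B$ and uses monotonicity of $D$ in the second argument rather than $D\le D_{\max}$, but these differences are immaterial.
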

\begin{proof}
This follows directly from  Lemma~\ref{winter:continuity}. To see this, observe that we have the following inequalities holding for any states $\tau_{AB}$ and $\tau'_{AB}$:
\begin{align}
E_k(A;B)_{\tau'} & \geq 0, \\
E_k(A;B)_{\tau} & \leq E_R(A;B)_{\tau}\\
&\leq \min\{S(A)_\tau, S(B)_\tau\}  \\
&\leq \log d,
\end{align}
where $E_R(A;B)_{\tau}$ denotes the relative entropy of entanglement \cite{VP98,HHHH09}. 

Finally, we obtain the $\log_2 k$ upper bound on $E_k(A;B)_{\tau}$ by picking the $k$-extendible state for $E_k(A;B)_{\tau} = \inf_{\sigma_{AB} \in \kex(A:B)} D(\tau_{AB}\Vert \sigma_{AB})$ as
\begin{equation}
\sigma_{AB} = \frac{1}{k} \tau_{AB} 
+ \(1 - \frac{1}{k}\)\tau_{A} \otimes \tau_{B}.  
\end{equation}
Such a state is $k$-extendible with a $k$-extension given by
\begin{equation}
\sigma_{AB_1 \cdots B_k} = \frac{1}{k}
\sum_{i=1}^k \tau_{B_1} \otimes \cdots 
\otimes \tau_{B_{i-1}} \otimes \tau_{AB_i}
\otimes \tau_{B_{i+1}} \otimes \cdots \otimes 
\tau_{B_{k}}.
\end{equation}
Then by using the facts that $D(\rho \Vert \sigma) \geq D(\rho \Vert \sigma')$ for $0 \leq \sigma \leq \sigma'$ and $D(\rho \Vert c \sigma) = D(\rho \Vert \sigma) - \log_2 c$ for $c > 0 $, we find that 
\begin{align}
E_k(A;B)_{\tau} & = \inf_{\sigma_{AB} \in \kex(A:B)} D(\tau_{AB}\Vert \sigma_{AB}) \label{eq:log-k-bnd-1}\\
& \leq D\!\(\tau_{AB}\middle \Vert \frac{1}{k} \tau_{AB} 
+ \(1 - \frac{1}{k}\)\tau_{A} \otimes \tau_{B}\) \\
& \leq D(\tau_{AB} \Vert \tau_{AB}) - \log_2 (1/k) = \log_2 k. \label{eq:log-k-bnd-last}
\end{align} 
This concludes the proof.
\end{proof}

\begin{lemma}[Faithfulness]\label{prop:faithfulness}
Fix $\varepsilon \in [0,1]$.
The $k$-unextendible relative entropy $E_k(A ;B)_\rho$ of any arbitrary state $\rho_{A B}$ is a faithful measure, in the following sense: If $E_k(A;B)_{\rho}\leq \varepsilon$, then
\begin{equation}
 \inf_{\sigma_{AB}\in \kex(A:B)}\|\rho_{AB}-\sigma_{AB}\|_1\leq \sqrt{\varepsilon \cdot 2
\ln 2 }
\end{equation}
 and if 
$\inf_{\sigma_{AB}\in \kex(A:B)}
\frac{1}{2}\|\rho_{AB}-\sigma_{AB}\|_1\leq \varepsilon$, then \begin{equation}
E_k(A;B)_{\rho}\leq \varepsilon\log_2 \min\{ d,k\} + g(\varepsilon).
\end{equation}
\end{lemma}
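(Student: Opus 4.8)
The plan is to treat the two implications separately: the first follows directly from the quantum Pinsker inequality, and the second is an immediate application of the uniform continuity bound from Lemma~\ref{prop:continuity}.

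For the forward direction, suppose $E_k(A;B)_{\rho} \leq \varepsilon$. Since $\kex(A\!:\!B)$ is a closed (hence, in finite dimensions, compact) set and $\sigma_{AB}\mapsto D(\rho_{AB}\Vert\sigma_{AB})$ is lower semicontinuous, the infimum in Definition~\ref{def:unext-gen-div-state} is attained at some $\sigma^{\star}_{AB}\in\kex(A\!:\!B)$ with $D(\rho_{AB}\Vert\sigma^{\star}_{AB}) \leq \varepsilon$; alternatively, one may simply pick $\sigma_{AB}\in\kex(A\!:\!B)$ with $D(\rho_{AB}\Vert\sigma_{AB}) \leq \varepsilon + \delta$ and send $\delta\to 0$ at the end. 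Note that $E_k(A;B)_{\rho} \leq \varepsilon < \infty$ forces $\supp(\rho_{AB})\subseteq\supp(\sigma^{\star}_{AB})$, so the relative entropy is finite there. Applying Pinsker's inequality in the form $D(\rho\Vert\sigma) \geq \tfrac{1}{2\ln 2}\Vert\rho-\sigma\Vert_1^2$ then gives $\Vert\rho_{AB}-\sigma^{\star}_{AB}\Vert_1 \leq \sqrt{2\varepsilon\ln 2}$, and since $\sigma^{\star}_{AB}$ is one particular $k$-extendible state, $\min_{\sigma_{AB}\in\kex(A:B)}\Vert\rho_{AB}-\sigma_{AB}\Vert_1 \leq \sqrt{\varepsilon\cdot 2\ln 2}$, which is the first implication.

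For the reverse direction, suppose $\sigma_{AB}\in\kex(A\!:\!B)$ satisfies $\tfrac{1}{2}\Vert\rho_{AB}-\sigma_{AB}\Vert_1 \leq \varepsilon$. Because $\sigma_{AB}$ is $k$-extendible and $D(\sigma_{AB}\Vert\sigma_{AB}) = 0$, we have $E_k(A;B)_{\sigma} = 0$. Invoking Lemma~\ref{prop:continuity} for the pair $\rho_{AB},\sigma_{AB}$ gives $\lvert E_k(A;B)_{\rho} - E_k(A;B)_{\sigma}\rvert \leq \varepsilon\log_2\min\{d,k\} + g(\varepsilon)$; combined with $E_k(A;B)_{\sigma} = 0$ and $E_k(A;B)_{\rho}\geq 0$, this yields the second implication. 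There is no serious obstacle here: the continuity lemma — the genuinely nontrivial preceding result — does the heavy lifting in this half, and the first half is routine once one has compactness of $\kex(A\!:\!B)$ and the standard Pinsker bound; the only thing to verify is that the constant produced by Pinsker matches the stated $\sqrt{\varepsilon\cdot 2\ln 2}$, which it does.
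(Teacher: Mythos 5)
Your proof is correct and follows exactly the paper's route: the first implication via the quantum Pinsker inequality and the second via the uniform continuity lemma applied to the pair $\rho_{AB},\sigma_{AB}$ with $E_k(A;B)_{\sigma}=0$. The extra details you supply (attainment of the infimum, the support condition, the constant check) are all sound and merely flesh out what the paper leaves implicit.
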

\begin{proof}
The proof of the first statement follows directly from the quantum Pinsker inequality \cite[Theorem~1.15]{OP93}. The second statement follows directly from Lemma~\ref{prop:continuity}. 
\end{proof}

\begin{lemma}[Subadditivity and non-extensivity]\label{prop:subadditivity}
For a state $\rho_{A_1B_1A_2B_2 \cdots A_n B_n}\coloneqq\omega^{(1)}_{A_1 B_1}\otimes\omega^{(2)}_{A_2 B_2} \otimes \cdots \otimes \omega^{(n)}_{A_n B_n}$,
the $k$-unextendible relative entropy is sub-additive and non-extensive, in the sense that
\begin{multline}
E_k(A_1 A_2 \cdots A_n;B_1 B_2 \cdots B_n)_\rho\\
\leq \min\left\{\log_2 k,\  \sum_{i=1}^n E_k(A_i;B_i)_{\omega^{(i)}}\right\}.
\end{multline}
In fact, the non-extensivity bound 
\begin{equation}
E_k(A_1 A_2 \cdots A_n;B_1 B_2 \cdots B_n)_\rho\leq \log_2 k
\end{equation}
applies to an arbitrary state $\rho_{A_1B_1A_2B_2 \cdots A_n B_n}$.
\end{lemma}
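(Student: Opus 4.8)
The plan is to prove the two bounds separately, since the subadditivity bound needs the product structure but the non-extensivity bound $\log_2 k$ holds for arbitrary states. Let me start with the universal $\log_2 k$ bound, as it is the cleaner of the two. The key observation, already used in the proof of Lemma~\ref{prop:continuity} (see \eqref{eq:log-k-bnd-1}--\eqref{eq:log-k-bnd-last}), is that for \emph{any} bipartite state $\tau_{AB}$ the operator $\tfrac{1}{k}\tau_{AB}+(1-\tfrac{1}{k})\tau_A\otimes\tau_B$ is $k$-extendible, with an explicit $k$-extension built by symmetrizing one ``live'' copy $\tau_{AB_i}$ against product marginals on the remaining $k-1$ systems. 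Applying this with the composite systems $A\coloneqq A_1\cdots A_n$ and $B\coloneqq B_1\cdots B_n$ and using the two elementary facts $D(\rho\Vert\sigma)\geq D(\rho\Vert\sigma')$ for $0\leq\sigma\leq\sigma'$ and $D(\rho\Vert c\sigma)=D(\rho\Vert\sigma)-\log_2 c$, we immediately get $E_k(A_1\cdots A_n;B_1\cdots B_n)_\rho\leq D(\rho\Vert\rho)-\log_2(1/k)=\log_2 k$ for an arbitrary $\rho$. This disposes of the second displayed assertion of the lemma and of the first branch of the minimum.

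For the subadditivity bound $E_k\leq\sum_i E_k(A_i;B_i)_{\omega^{(i)}}$, the natural strategy is to construct a feasible $k$-extendible state for the product as a tensor product of near-optimal $k$-extensions of the factors. Concretely, fix $\delta>0$ and for each $i$ choose $\sigma^{(i)}_{A_iB_i}\in\kex(A_i\!:\!B_i)$ with $D(\omega^{(i)}_{A_iB_i}\Vert\sigma^{(i)}_{A_iB_i})\leq E_k(A_i;B_i)_{\omega^{(i)}}+\delta$, and let $\sigma^{(i)}_{A_iB_{i,1}\cdots B_{i,k}}$ be a corresponding $k$-extension. Then $\bigotimes_{i=1}^n\sigma^{(i)}_{A_iB_{i,1}\cdots B_{i,k}}$, regrouped so that the $j$-th copy of the big $B$ system is $B_{1,j}B_{2,j}\cdots B_{n,j}$, is permutation-invariant under the diagonal action of $S_k$ (permuting the $j$ index simultaneously in every tensor factor) and has the product state $\bigotimes_i\sigma^{(i)}_{A_iB_i}$ as its marginal on $B_{1,1}\cdots B_{n,1}$. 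Hence $\bigotimes_i\sigma^{(i)}_{A_iB_i}$ lies in $\kex(A_1\cdots A_n;B_1\cdots B_n)$, and additivity of quantum relative entropy on tensor products gives $E_k(A_1\cdots A_n;B_1\cdots B_n)_\rho\leq\sum_i D(\omega^{(i)}\Vert\sigma^{(i)})\leq\sum_i E_k(A_i;B_i)_{\omega^{(i)}}+n\delta$; letting $\delta\to0$ finishes it.

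I expect the only subtle point to be the bookkeeping that the diagonal $S_k$-action on the regrouped systems $B_j\coloneqq B_{1,j}\cdots B_{n,j}$ really is a unitary representation of $S_k$ that leaves $\bigotimes_i\sigma^{(i)}_{A_iB_{i,1}\cdots B_{i,k}}$ invariant and that its $B_1$-marginal is the claimed product state --- this is true because each factor is separately invariant under permuting its own $B_{i,\cdot}$ indices, and tensoring commutes with partial trace, but it should be spelled out carefully since the definition of $k$-extendibility (Definition~\ref{def:kex-state}) requires a \emph{single} permutation acting on all $k$ copies of the full $B$ system at once. Everything else is either the explicit construction already appearing in the proof of Lemma~\ref{prop:continuity} or the standard additivity/monotonicity identities for $D$, so there is no real analytic obstacle.
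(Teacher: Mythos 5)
Your proposal is correct and follows essentially the same route as the paper: the $\log_2 k$ bound via the $k$-extendible mixture $\frac{1}{k}\tau_{AB}+(1-\frac{1}{k})\tau_A\otimes\tau_B$ exactly as in the continuity lemma, and subadditivity by restricting the minimization to tensor products of (near-)optimal $k$-extendible states for each factor and invoking additivity of the relative entropy. The only difference is that you spell out why a tensor product of $k$-extendible states is $k$-extendible (the regrouped diagonal $S_k$-action), a point the paper takes for granted.
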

\begin{proof}The subadditivity proof is straightforward. We show it for a tensor product of two states and note that the general statement follows from induction:
\begin{align}
& E_k(A_1 A_2 ;B_1 B_2)_\rho 
\notag \\
&=\inf_{\substack{\sigma_{A_1A_2B_1B_2}\in \\
\kex(A_1 A_2 :B_1 B_2)}}D(\omega_{A_1B_1}\otimes\tau_{A_2B_2}\Vert\sigma_{A_1A_2B_1B_2}) \notag \\
&\leq 
\inf_{\substack{\sigma_{A_1B_1} \otimes \sigma_{A_2 B_2}\in\\ \kex(A_1 A_2 :B_1 B_2)}}
D(\omega_{A_1B_1}\otimes\tau_{A_2B_2}\Vert\sigma_{A_1 B_1}\otimes\sigma_{A_2B_2}) \notag \\
&= \inf_{\sigma_{A_1B_1}\in \kex(A_1:B_1)}D(\omega_{A_1B_1}\Vert\sigma_{A_1B_1})\notag \\
& \qquad +\inf_{\sigma_{A_2B_2}\in \kex(A_2:B_2)}D(\tau_{A_2B_2}\Vert\sigma_{A_2B_2}) \notag  \\
& = E_k(A_1;B_1)_\omega+E_k(A_2;B_2)_\tau.
\end{align}
The first equality follows from the definition. The first inequality follows from a particular choice of $\sigma_{A_1A_2B_1B_2}$. The second inequality follows from additivity of relative entropy with respect to tensor-product states.

The proof of the non-extensivity upper bound of $\log_2 k$ follows from the same reasoning as in \eqref{eq:log-k-bnd-1}--\eqref{eq:log-k-bnd-last}.
\end{proof}

\begin{lemma}[Convexity]\label{prop:convexity}
Let a bipartite state $\rho_{AB}=\sum_{x\in\mc{X}}p_X(x)\rho^x_{AB}$, where $p_X(x)$ is a probability distribution and $\{\rho_{AB}^x\}_x$ is a set of quantum states. Then, the
$k$-unextendible relative entropy is convex, in the sense that
\begin{equation}
E_k(A;B)_\rho\leq \sum_{x\in\mc{X}}p_X(x)E_k(A;B)_{\rho^x}.
\end{equation}
\end{lemma}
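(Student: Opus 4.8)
The plan is to reduce the claimed convexity to the well-known joint convexity of the quantum relative entropy, after an appropriate choice of $k$-extendible comparison state built from the optimal comparison states of the individual $\rho^x_{AB}$. First I would, for each $x\in\mc{X}$, let $\sigma^x_{AB}\in\kex(A\!:\!B)$ be a state (or a near-optimizer, to avoid attainment issues) achieving $D(\rho^x_{AB}\V\sigma^x_{AB}) = E_k(A;B)_{\rho^x}$, with a $k$-extension $\omega^x_{AB_1\cdots B_k}$. Then I would form the convex combination $\sigma_{AB}\coloneqq\sum_{x}p_X(x)\,\sigma^x_{AB}$ and observe that it lies in $\kex(A\!:\!B)$: its $k$-extension is simply $\sum_x p_X(x)\,\omega^x_{AB_1\cdots B_k}$, which is permutation invariant with respect to the $B$ systems because each $\omega^x$ is, and has $\sigma_{AB}$ as its marginal on $AB_1$. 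This uses only that the set $\kex(A\!:\!B)$ is convex, which is immediate from Definition~\ref{def:kex-state}.

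Next I would bound $E_k(A;B)_\rho$ by plugging this feasible $\sigma_{AB}$ into the infimum defining $E_k(A;B)_\rho$:
\begin{equation}
E_k(A;B)_\rho \leq D\!\(\sum_x p_X(x)\rho^x_{AB}\,\Big\V\,\sum_x p_X(x)\sigma^x_{AB}\).
\end{equation}
Then I would apply joint convexity of the quantum relative entropy to obtain
\begin{equation}
D\!\(\sum_x p_X(x)\rho^x_{AB}\,\Big\V\,\sum_x p_X(x)\sigma^x_{AB}\) \leq \sum_x p_X(x)\,D(\rho^x_{AB}\V\sigma^x_{AB}) = \sum_x p_X(x)\,E_k(A;B)_{\rho^x},
\end{equation}
which is exactly the claimed inequality. (If one works with near-optimizers $\sigma^x_{AB}$ achieving $D(\rho^x_{AB}\V\sigma^x_{AB})\leq E_k(A;B)_{\rho^x}+\delta$, one lets $\delta\to 0$ at the end.) A small technical point worth noting is that joint convexity should be invoked in the slightly more general form allowing the second arguments to be positive semidefinite operators rather than normalized states, but here each $\sigma^x_{AB}$ is a genuine state so no such generalization is needed.

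I do not expect a serious obstacle here; the only things to be careful about are (i) confirming convexity of $\kex(A\!:\!B)$ via the explicit convex combination of extensions, and (ii) handling the infimum properly in case it is not attained, which the $\delta$-argument resolves. The ``hard'' structural input — joint convexity of $D$ — is a standard fact that may be cited.
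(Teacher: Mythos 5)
Your proposal is correct and follows essentially the same route as the paper's proof: choose the optimal (or near-optimal) $k$-extendible state $\sigma^x_{AB}$ for each $\rho^x_{AB}$, use convexity of $\kex(A\!:\!B)$ to see that $\sum_x p_X(x)\sigma^x_{AB}$ is a feasible comparison state, and conclude by joint convexity of the quantum relative entropy. Your added care about non-attainment of the infimum and the explicit verification that the convex combination of $k$-extensions is a valid $k$-extension are minor refinements of the same argument.
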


\begin{proof}
Let $\sigma^x_{AB}$ be the $k$-extendible state that achieves the minimum for $\rho_{AB}^x$ in $E_k(A;B)_{\rho^x}$. Then,
\begin{align}
E_k(A;B)_{\rho}&=\inf_{\sigma_{AB}\in \kex(A:B)} D(\rho_{AB}\Vert\sigma_{AB})\\
&\leq D\!\(\sum_x p_X(x)\rho_{AB}^x\middle \Vert\sum_x p_X(x)\sigma_{AB}^x\)\\
&\leq \sum_{x}p_X(x)D(\rho_{AB}^x\Vert\sigma_{AB}^x)\\
&=\sum_x p_X(x)E_k(A;B)_{\rho}.
\end{align}
The second inequality follows from the joint convexity of quantum relative entropy.
\end{proof}
\bigskip

The following lemmas have straightforward proofs, making use of the additivity of sandwiched R\'enyi relative entropy with respect to tensor-product states, as well as its joint quasi-convexity:

\begin{lemma}[Subadditivity and non-extensivity]
For a state $\rho_{A_1B_1A_2B_2 \cdots A_n B_n}\coloneqq\omega^{(1)}_{A_1 B_1}\otimes\omega^{(2)}_{A_2 B_2} \otimes \cdots \otimes \omega^{(n)}_{A_n B_n}$
and $\alpha\in (0,1)\cup(1,\infty)$,
the $k$-unextendible $\alpha$-sandwiched-R\'{e}nyi divergence is sub-additive and non-extensive, in the sense that
\begin{multline}
\wt{E}^\alpha_k(A_1 A_2 \cdots A_n;B_1 B_2 \cdots B_n)_\rho\\
\leq \min\left\{\log_2 k,\  \sum_{i=1}^n \wt{E}^\alpha_k(A_i;B_i)_{\omega^{(i)}}\right\}.
\end{multline}
In fact, the non-extensivity bound 
\begin{equation}
\wt{E}_k^{\alpha}(A_1 A_2 \cdots A_n;B_1 B_2 \cdots B_n)_\rho\leq \log_2 k    
\end{equation}
applies to an arbitrary state $\rho_{A_1B_1A_2B_2 \cdots A_n B_n}$.
\end{lemma}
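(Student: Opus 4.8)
The plan is to run the same two-step argument used above for $E_k$, replacing the quantum relative entropy $D$ by the sandwiched R\'enyi relative entropy $\wt{D}_\alpha$ and invoking its corresponding scalar properties: additivity on tensor-product states, monotonicity (non-increase) in the second argument with respect to the positive semi-definite order, and the identity $\wt{D}_\alpha(\rho\Vert c\sigma)=\wt{D}_\alpha(\rho\Vert\sigma)-\log_2 c$ for $c>0$. The subadditivity bound is proved by induction on $n$, with the case $n=2$ carrying all the content; the non-extensivity bound $\wt{E}_k^\alpha\leq\log_2 k$ reuses the explicit $k$-extendible approximating state constructed in the proof of Lemma~\ref{prop:continuity}.

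For subadditivity, fix $\alpha\in(0,1)\cup(1,\infty)$ and, for each $i$, let $\sigma^{(i)}_{A_iB_i}\in\kex(A_i\!:\!B_i)$ be arbitrary. The tensor product $\sigma^{(1)}_{A_1B_1}\otimes\cdots\otimes\sigma^{(n)}_{A_nB_n}$ is again $k$-extendible with respect to the composite system $B_1\cdots B_n$: tensoring the individual $k$-extensions $\widehat{\sigma}^{(i)}_{A_iB_{i,1}\cdots B_{i,k}}$ and regrouping so that the $j$-th copy of the composite $B$-system is $B_{1,j}\cdots B_{n,j}$ yields a state whose marginal on one copy is $\bigotimes_i\omega^{(i)}$ and which is invariant under permutations of the copy index $j$ (such a permutation acts factor-by-factor as a permutation of $B_{i,1},\dots,B_{i,k}$, fixing each $\widehat{\sigma}^{(i)}$). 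Hence, exactly as in the $E_k$ case, restricting the infimum to product states and using additivity of $\wt{D}_\alpha$,
\begin{align}
\wt{E}_k^\alpha(A_1\cdots A_n;B_1\cdots B_n)_\rho
&\leq \wt{D}_\alpha\!\(\textstyle\bigotimes_i\omega^{(i)}_{A_iB_i}\,\Vert\,\bigotimes_i\sigma^{(i)}_{A_iB_i}\) \notag \\
&= \sum_{i=1}^n \wt{D}_\alpha(\omega^{(i)}_{A_iB_i}\Vert\sigma^{(i)}_{A_iB_i}),
\end{align}
and taking the infimum over each $\sigma^{(i)}\in\kex(A_i\!:\!B_i)$ independently gives $\wt{E}_k^\alpha(A_1\cdots A_n;B_1\cdots B_n)_\rho\leq\sum_i\wt{E}_k^\alpha(A_i;B_i)_{\omega^{(i)}}$.

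For non-extensivity, let $\tau_{AB}$ be an arbitrary bipartite state and set $\sigma_{AB}\coloneqq\frac{1}{k}\tau_{AB}+\(1-\frac{1}{k}\)\tau_A\otimes\tau_B$, which is $k$-extendible by the explicit symmetrized $k$-extension displayed in the proof of Lemma~\ref{prop:continuity}. Since $\sigma_{AB}\geq\frac{1}{k}\tau_{AB}$ (so in particular $\supp(\tau_{AB})\subseteq\supp(\sigma_{AB})$), monotonicity of $\wt{D}_\alpha$ in its second argument together with the $\log$-scaling identity gives
\begin{align}
\wt{E}_k^\alpha(A;B)_\tau
&\leq \wt{D}_\alpha(\tau_{AB}\Vert\sigma_{AB})
\leq \wt{D}_\alpha\!\(\tau_{AB}\,\Vert\,\tfrac{1}{k}\tau_{AB}\) \notag \\
&= \wt{D}_\alpha(\tau_{AB}\Vert\tau_{AB}) + \log_2 k = \log_2 k ,
\end{align}
using $\wt{D}_\alpha(\tau_{AB}\Vert\tau_{AB})=0$. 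Combined with the subadditivity bound this yields the stated minimum, and the bound $\wt{E}_k^\alpha(A_1\cdots A_n;B_1\cdots B_n)_\rho\leq\log_2 k$ for arbitrary, not necessarily product, states. I expect no genuine obstacle here: the argument is bookkeeping built on known facts, and the only point needing care — the analogue of the only nontrivial input in the $E_k$ proof — is confirming that the three scalar properties of $\wt{D}_\alpha$ invoked above hold throughout the range $\alpha\in(0,1)\cup(1,\infty)$, in particular for $\alpha\in(0,1/2)$ where $\wt{D}_\alpha$ need not even obey data processing; all three do, by direct computation from the definition in \eqref{eq:def_sre} together with standard monotonicity properties of $\wt{D}_\alpha$.
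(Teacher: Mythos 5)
Your strategy is the one the paper intends (it only gestures at "reasoning above" plus additivity of $\wt{D}_\alpha$ on tensor products), and the subadditivity half of your argument is correct: tensoring the individual $k$-extensions and regrouping the $j$-th copies does produce a valid $k$-extension of the product state, and additivity of $\wt{D}_\alpha$ on tensor products holds for every $\alpha$ directly from \eqref{eq:def_sre}. (One typo: the marginal of $\bigotimes_i \widehat{\sigma}^{(i)}$ on a single copy is $\bigotimes_i \sigma^{(i)}$, not $\bigotimes_i \omega^{(i)}$.)

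The gap is in the non-extensivity half, at exactly the point you flagged and then dismissed. The dominance property $\sigma \geq \sigma' \Rightarrow \wt{D}_\alpha(\rho\Vert\sigma) \leq \wt{D}_\alpha(\rho\Vert\sigma')$ is \emph{not} obtainable "by direct computation" for $\alpha \in (0,1/2)$: the standard proof rewrites $\Tr\{(\sigma^{\frac{1-\alpha}{2\alpha}}\rho\,\sigma^{\frac{1-\alpha}{2\alpha}})^\alpha\} = \Tr\{(\rho^{1/2}\sigma^{(1-\alpha)/\alpha}\rho^{1/2})^\alpha\}$ and uses operator monotonicity of $x \mapsto x^{(1-\alpha)/\alpha}$, which requires $(1-\alpha)/\alpha \leq 1$, i.e.\ $\alpha \geq 1/2$; below that threshold the property genuinely fails. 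For instance, at $\alpha = 1/3$ and pure $\rho = |+\rangle\langle +|$ one has $\wt{D}_{1/3}(\rho\Vert\sigma) = -\tfrac{1}{2}\log_2 \langle +|\sigma^2|+\rangle$, and $\sigma \mapsto \langle +|\sigma^2|+\rangle$ is not Löwner-monotone: with $\sigma' = \mathrm{diag}(1,1/2)$, $|v\rangle = -0.9|0\rangle + |1\rangle$, and $\sigma = \sigma' + t\,|v\rangle\langle v| \geq \sigma'$, the first-order change $2t\,\mathrm{Re}\,\langle v|\,|+\rangle\langle+|\,\sigma'|v\rangle = -0.04\,t$ is negative. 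So your step $\wt{D}_\alpha(\tau\Vert\sigma) \leq \wt{D}_\alpha(\tau\Vert\tfrac{1}{k}\tau)$ is unjustified on part of the stated range. The conclusion nevertheless holds, and the repair uses only facts the paper already records: the particular comparison you need is precisely $\wt{D}_\alpha(\tau\Vert\sigma) \leq D_{\max}(\tau\Vert\sigma)$, which follows for all $\alpha \in (0,1)\cup(1,\infty)$ from the monotonicity in $\alpha$ of \eqref{eq:mono_sre} together with the $\alpha \to \infty$ limit \eqref{eq:max-rel}; since $\tau_{AB} \leq k\,\sigma_{AB}$ by construction of the $k$-extendible state from the proof of Lemma~\ref{prop:continuity}, this gives $\wt{D}_\alpha(\tau\Vert\sigma) \leq D_{\max}(\tau\Vert\sigma) \leq \log_2 k$. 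Substituting this for the dominance step completes the proof.
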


\begin{lemma}
The $k$-unextendible $\alpha$-sandwiched-R\'{e}nyi divergence is quasi-convex; i.e., if $\rho_{AB}\in\mc{D}(\mc{H}_{AB})$  decomposes as $\rho_{AB}=\sum_{x\in\mc{X}}p_X(x)\rho^x_{AB}$, where $\sum_{x\in\mc{X}}p_X(x)=1$ and each $\rho^x_{AB}\in\mc{D}(\mc{H}_{AB})$, then
\begin{equation}
\wt{E}^\alpha_{k}(A;B)_{\rho}\leq \sup_{x}\wt{E}^\alpha_{k}(A;B)_{\rho^x}.
\end{equation}
\end{lemma}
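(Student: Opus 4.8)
The plan is to transcribe the convexity argument given above for the $k$-unextendible relative entropy, replacing joint convexity of the relative entropy by joint quasi-convexity of the sandwiched R\'enyi relative entropy $\wt{D}_\alpha$. First I would, for each $x\in\mc{X}$, pick a $k$-extendible state $\sigma^x_{AB}\in\kex(A:B)$ attaining the infimum that defines $\wt{E}^\alpha_{k}(A;B)_{\rho^x}$; this infimum runs over the compact set $\kex(A:B)$, which contains full-rank states such as $\pi_A\otimes\pi_B$, and $\wt{D}_\alpha$ is lower semicontinuous, so a minimizer exists (alternatively one fixes $\delta>0$, uses a $\delta$-approximate minimizer, and lets $\delta\to 0$ at the end).

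The only structural fact I would need is that $\kex(A:B)$ is convex: if $\omega^x_{AB_1\cdots B_k}$ is a permutation-invariant extension of $\sigma^x_{AB}$ for each $x$, then $\sum_x p_X(x)\,\omega^x_{AB_1\cdots B_k}$ is again permutation invariant and has $\sigma_{AB}\coloneqq\sum_x p_X(x)\,\sigma^x_{AB}$ as its $B_1$-marginal, so $\sigma_{AB}\in\kex(A:B)$. Using $\sigma_{AB}$ as a feasible (generally suboptimal) point in the optimization defining $\wt{E}^\alpha_{k}(A;B)_{\rho}$ and invoking the joint quasi-convexity of $\wt{D}_\alpha$, I would obtain
\[
\wt{E}^\alpha_{k}(A;B)_{\rho}\leq\wt{D}_\alpha\!\(\sum_x p_X(x)\rho^x_{AB}\V\sum_x p_X(x)\sigma^x_{AB}\)\leq\sup_{x}\wt{D}_\alpha\!\(\rho^x_{AB}\V\sigma^x_{AB}\)=\sup_{x}\wt{E}^\alpha_{k}(A;B)_{\rho^x},
\]
which is the statement of the lemma.

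The one non-elementary ingredient, and the step I expect to require the most care, is the joint quasi-convexity of $\wt{D}_\alpha$ over the full range $\alpha\in(0,1)\cup(1,\infty)$. For $\alpha\in[1/2,1)\cup(1,\infty)$ it follows from the data-processing inequality in the Frank--Lieb/Beigi lemma: with $\wt{Q}_\alpha(\rho\V\sigma)\coloneqq 2^{(\alpha-1)\wt{D}_\alpha(\rho\V\sigma)}$ the trace functional appearing in \eqref{eq:def_sre}, embed the two ensembles into the classical--quantum states $\widehat\rho=\sum_x p_X(x)\,|x\>\<x|\otimes\rho^x_{AB}$ and $\widehat\sigma=\sum_x p_X(x)\,|x\>\<x|\otimes\sigma^x_{AB}$, observe that $\wt{Q}_\alpha(\widehat\rho\V\widehat\sigma)=\sum_x p_X(x)\,\wt{Q}_\alpha(\rho^x_{AB}\V\sigma^x_{AB})$, bound this average by its maximum term when $\alpha>1$ or by its minimum term when $\alpha<1$ (accounting for the sign of $\tfrac{1}{\alpha-1}$), and finally discard the classical register via data processing. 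For $\alpha\in(0,1/2)$ one instead uses the joint concavity of $(\rho,\sigma)\mapsto\wt{Q}_\alpha(\rho\V\sigma)$ together with the fact that a monotone function of a concave function has convex sublevel sets. In the write-up I would simply cite this quasi-convexity as a standard property of $\wt{D}_\alpha$---exactly as it is invoked in the preamble to this lemma---so that the rest of the proof is a verbatim copy of the relative-entropy convexity lemma with $D$ replaced by $\wt{D}_\alpha$.
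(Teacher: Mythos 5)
Your argument is precisely the one the paper has in mind: it gives no explicit proof of this lemma, instead pointing to the convexity proof for the relative-entropy version together with the joint quasi-convexity of $\wt{D}_\alpha$, and your write-up fills in exactly those steps (including the convexity of $\kex(A\!:\!B)$, which is the only structural fact needed to make the averaged $\sigma_{AB}$ feasible). For $\alpha\in[1/2,1)\cup(1,\infty)$ your justification of quasi-convexity via the Frank--Lieb convexity/concavity of the trace functional $\wt{Q}_\alpha$ is correct. The one genuine error is your claim for $\alpha\in(0,1/2)$: joint concavity of $(\rho,\sigma)\mapsto\wt{Q}_\alpha(\rho\Vert\sigma)$ does \emph{not} hold there. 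Indeed, either joint concavity of $\wt{Q}_\alpha$ or joint quasi-convexity of $\wt{D}_\alpha$ would imply the data-processing inequality for $\wt{D}_\alpha$ (represent the partial trace as a uniform mixture of local unitaries, apply the quasi-convexity bound, and use unitary and tensor-product invariance), and data processing for the sandwiched R\'enyi relative entropy is known to fail for $\alpha<1/2$. So that sentence should be deleted and the lemma read as holding for $\alpha\in[1/2,1)\cup(1,\infty)$, which is the only range in which $\wt{E}^\alpha_k$ is a meaningful (data-processed) quantity anyway.
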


\section{Unextendibility, non-asymptotic one-way distillable entanglement, and non-asymptotic quantum capacity}\label{sec:communications}

In this section, we use the resource theory of unextendibility to derive non-asymptotic converse bounds on the rate at which entanglement can be transmitted over a finite number of uses of a quantum channel. We do the same for the non-asymptotic one-way distillable entanglement of a bipartite state.

\subsection{Entanglement transmission codes and one-way entanglement distillation protocols}

An $(n,M,\varepsilon)$ entanglement transmission protocol accomplishes the task of entanglement transmission over $n$ independent uses of a quantum channel $\mc{N}_{A\to B}$. The case of $n=1$ is known as ``one-shot entanglement transmission,'' given that we are considering just a single use of a channel in this case. However, note that a given $(n,M,\varepsilon)$ entanglement transmission protocol for the channel 
$\mc{N}_{A\to B}$ can be considered as a 
$(1,M,\varepsilon)$ entanglement transmission protocol
for the channel $\mc{N}_{A\to B}^{\otimes n}$.

An entanglement transmission code for $\mc{N}$, is specified by a triplet $\{M,\mc{E},\mc{D}\}$, where $M=\dim(\mc{H}_R)$ is the Schmidt rank of a maximally entangled state $\Phi_{RA^\prime}$, one share of which is to be transmitted over $\mc{N}$. The quantum channels $\mc{E}_{A^\prime\to A^n}$ and $\mc{D}_{B^n\to \hat{A}}$ are encoding and decoding channels, respectively. An $(n,M,\varepsilon)$ code is such that
\begin{equation}
\label{eqn:protocoltest}
F(\Phi_{R\hat{A}},\omega_{R\hat{A}})\geq 1-\varepsilon, 
\end{equation}
where
\begin{equation}
\omega_{R\hat{A}} \coloneqq \(\mc{D}_{B^n\to \hat{A}}\circ\mc{N}^{\otimes n}_{A\to B}\circ\mc{E}_{A^\prime\to A^n}\)\(\Phi_{RA^\prime}\).
\end{equation}
We note that the criterion $F(\Phi_{R\hat{A}},\omega_{R\hat{A}})\geq 1-\varepsilon$ is equivalent to  
 \begin{equation} \label{eq:test}
 \Tr\{\Phi_{R\hat{A}}\omega_{R\hat{A}}\}\geq 1-\varepsilon. 
 \end{equation}
 
We can also consider a modification of the above protocol in which the final decoding is a $k$-extendible channel $\mc{D}_{RB^n\to R \hat{A}}$, acting on the input systems $R:B^n$ and outputting the systems $R:\hat{A}$. See Figure~\ref{fig:protocol-k-ext-post-proc} for a depiction of such a modified protocol. We call such a protocol entanglement transmission assisted by a $k$-extendible post-processing, and the resulting non-asymptotic quantum capacity is denoted by $Q_{\text{I}}^{(k)}(\mathcal{N}
_{A\rightarrow B},n,\varepsilon)$.

\begin{figure}
	\begin{center}		{\includegraphics[width=1\columnwidth]{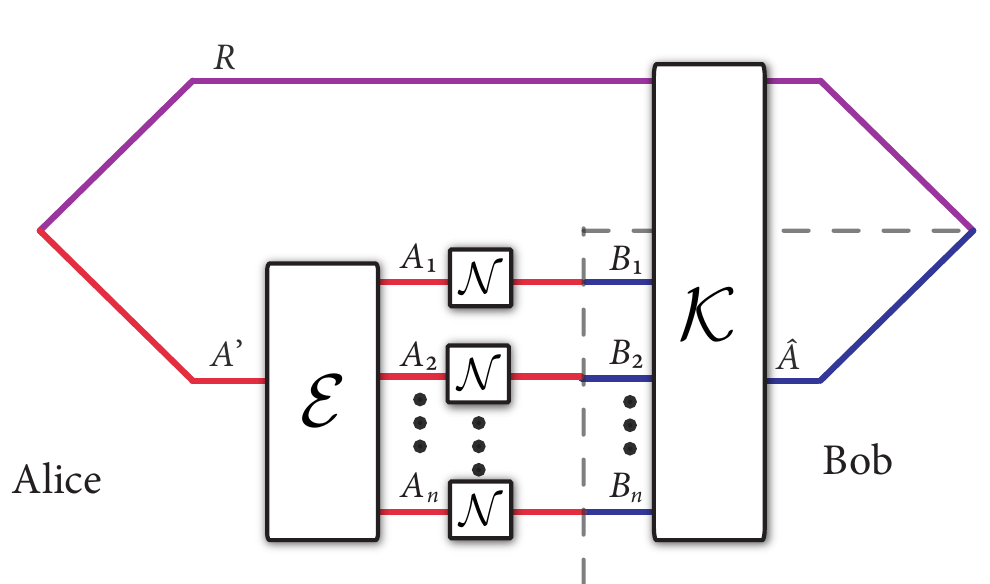}}
	\end{center}
	\caption{Depiction of an entanglement transmission protocol assisted by a $k$-extendible post-processing channel. The quantum channel $\mathcal{N}$ is used $n$ times, in conjunction with an encoding channel $\mathcal{E}_{A' \to A^n}$ and a $k$-extendible post-processing decoding channel
	$\mc{K}_{RB^n\to R\hat{A}}$, in order to establish entanglement shared between Alice and Bob.  }%
	\label{fig:protocol-k-ext-post-proc}%
\end{figure}

Another kind of protocol to consider is a one-way entanglement distillation protocol. An $(n,M,\varepsilon)$ one-way entanglement distillation protocol begins with Alice and Bob sharing $n$ copies of a bipartite state $\rho_{AB}$. They then act with a 1W-LOCC channel $\mc{L}_{A^n B^n \to M_A M_B}$ on $\rho_{AB}^{\otimes n}$, and the resulting state satisfies
\begin{equation}
F(\mc{L}_{A^n B^n \to M_A M_B}(\rho_{AB}^{\otimes n}), \Phi_{M_A M_B}) \geq 1 - \varepsilon,
\end{equation}
where $\Phi_{M_A M_B}$ is a maximally entangled state of Schmidt rank $M$. We can also modify this protocol to allow for a $k$-extendible channel instead of a 1W-LOCC channel, and the resulting protocol is an $(n,M,\varepsilon)$ entanglement distillation protocol assisted by a $k$-extendible channel.
Let $D^{(k)}(\rho_{AB},n,\varepsilon)$ denote the non-asymptotic
distillable entanglement with the assistance of $k$-extendible channels; i.e.,
$D^{(k)}(\rho_{AB},n,\varepsilon)$ is equal to the maximum value of $\frac
{1}{n}\log_{2}M$ such that there exists an $(n,M,\varepsilon)$ protocol for
$\rho_{AB}$ as described above.


\subsection{Bounds on non-asymptotic quantum capacity and one-way distillable entanglement in terms of $k$-extendible divergence}

We now establish an upper bound on the non-asymptotic quantum capacity in terms of the unextendible hypothesis testing divergence:
\begin{theorem} 
\label{thm:k-ext-q-cap-bnd}
The following bound holds for all $ k\in\bb{N}$ and for  every $(1,M,\varepsilon)$ entanglement transmission protocol over a quantum channel $\mathcal{N}$ and assisted by a $k$-extendible post-processing:
\begin{equation}
-\log_2\!\left[\frac{1}{M}+\frac{1}{k} - \frac{1}{M k}\right]\leq \sup_{\psi_{RA}} E^\varepsilon_{k}(R;B)_{\tau}, \label{eq:k-ext-q-cap-bnd}
\end{equation}
where 
\begin{equation}
\label{def:hypothesis}
E^\varepsilon_{k}(R;B)_{\tau}\coloneqq \inf_{\sigma_{RB}\in\kex(R;B)}D^\varepsilon_h\(\tau_{RB}\Vert\sigma_{RB}\)
\end{equation}
is the $k$-unextendible $\varepsilon$-hypothesis-testing divergence, $\tau_{RB}\coloneqq \mc{N}_{A\to B}(\psi_{RA})$, and the optimization in \eqref{eq:k-ext-q-cap-bnd} is with respect to pure states $\psi_{RA}$ such that $\dim(\mc{H}_R)=\dim(\mc{H}_A)$. Similarly, the following bound holds for any $(1,M,\varepsilon)$ entanglement distillation  protocol for a state $\rho_{AB}$, which is  assisted by a $k$-extendible post-processing:
\begin{equation}
-\log_2\!\left[\frac{1}{M}+\frac{1}{k} - \frac{1}{M k}\right]\leq  E^\varepsilon_{k}(A;B)_{\rho}, \label{eq:k-ext-end-dist-bnd}
\end{equation}
\end{theorem}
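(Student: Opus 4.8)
The plan is to run, for each of the two statements, a single chain of inequalities from the left-hand side up to the right-hand side, resting on two facts. \textbf{Fact (i):} $E^\varepsilon_k$ is monotone non-increasing under $k$-extendible channels; this is immediate, since $D^\varepsilon_h$ is a generalized divergence (hence obeys data processing \eqref{eq:gen-div-mono}) and Theorem~\ref{thm:main-resource-theory-k-ext} guarantees that a $k$-extendible channel maps $\kex$ into $\kex$. \textbf{Fact (ii):} for $|M_A|=|M_B|=M$, $\sup_{\sigma\in\kex(M_A:M_B)}\operatorname{Tr}\{\Phi_{M_AM_B}\sigma\}=\tfrac1M+\tfrac1k-\tfrac1{Mk}$. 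I would prove (ii) first: given $\sigma_{M_AM_B}\in\kex(M_A:M_B)$, apply the bilateral twirl $\mathcal{T}^{I}(\cdot)=\int d\mu(U)\,(U\otimes U^{\ast})(\cdot)(U\otimes U^{\ast})^{\dagger}$, which is a $k$-extendible channel (it is a $1$W-LOCC channel, realizable with a unitary two-design). Since $\Phi_{M_AM_B}$ is invariant under conjugation by each $U\otimes U^\ast$, averaging gives $\operatorname{Tr}\{\Phi_{M_AM_B}\sigma_{M_AM_B}\}=\operatorname{Tr}\{\Phi_{M_AM_B}\mathcal{T}^{I}(\sigma_{M_AM_B})\}$, while $\mathcal{T}^{I}(\sigma_{M_AM_B})$ is an isotropic state that is $k$-extendible by Theorem~\ref{thm:main-resource-theory-k-ext}. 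Lemma~\ref{thm:isotropicstates} then forces its parameter $t$ to satisfy $t\le\tfrac1M(1+\tfrac{M-1}{k})=\tfrac1M+\tfrac1k-\tfrac1{Mk}$, and reading off \eqref{eq:iso-state} gives $\operatorname{Tr}\{\Phi_{M_AM_B}\rho^{(t,M)}_{M_AM_B}\}=t$, establishing (ii) (the reverse inequality follows by exhibiting such an isotropic state).

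For the entanglement distillation bound \eqref{eq:k-ext-end-dist-bnd}: let $\mathcal{K}_{AB\to M_AM_B}$ be the protocol's $k$-extendible channel and $\omega_{M_AM_B}\coloneqq\mathcal{K}_{AB\to M_AM_B}(\rho_{AB})$, so that $\operatorname{Tr}\{\Phi_{M_AM_B}\omega_{M_AM_B}\}\ge 1-\varepsilon$. Using $\Phi_{M_AM_B}$ as a feasible test $\Lambda$ in \eqref{eq:hypo-test-div}, for every $\sigma_{M_AM_B}\in\kex(M_A:M_B)$ one has $D^\varepsilon_h(\omega_{M_AM_B}\Vert\sigma_{M_AM_B})\ge-\log_2\operatorname{Tr}\{\Phi_{M_AM_B}\sigma_{M_AM_B}\}$; taking the infimum over such $\sigma$ and invoking (ii) yields $E^\varepsilon_k(M_A;M_B)_\omega\ge-\log_2[\tfrac1M+\tfrac1k-\tfrac1{Mk}]$. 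On the other hand, Fact (i) applied to $\mathcal{K}_{AB\to M_AM_B}$ gives $E^\varepsilon_k(M_A;M_B)_\omega\le E^\varepsilon_k(A;B)_\rho$. Chaining these two inequalities proves \eqref{eq:k-ext-end-dist-bnd}.

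For the entanglement transmission bound \eqref{eq:k-ext-q-cap-bnd}: write $\rho_{RA}\coloneqq\mathcal{E}_{A'\to A}(\Phi_{RA'})$, $\tau'_{RB}\coloneqq\mathcal{N}_{A\to B}(\rho_{RA})$, and $\omega_{M_AM_B}\coloneqq\mathcal{K}_{RB\to M_AM_B}(\tau'_{RB})$. Exactly as above, the test $\Phi_{M_AM_B}$ and Fact (ii) give $-\log_2[\tfrac1M+\tfrac1k-\tfrac1{Mk}]\le E^\varepsilon_k(M_A;M_B)_\omega$, and Fact (i) applied to the $k$-extendible post-processing $\mathcal{K}_{RB\to M_AM_B}$ gives $E^\varepsilon_k(M_A;M_B)_\omega\le E^\varepsilon_k(R;B)_{\tau'}$. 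It remains to bound $E^\varepsilon_k(R;B)_{\tau'}$ by the supremum over pure inputs with $R\simeq A$; here $\rho_{RA}$ need not be pure and $R$ need not be isomorphic to $A$. I would: (a) purify $\rho_{RA}$ to a pure state $\psi_{R\hat RA}$; (b) note that $\operatorname{Tr}_{\hat R}$ is a $k$-extendible channel (it is local on Alice's side, with the identity on $B$) and use Fact (i) to get $E^\varepsilon_k(R;B)_{\tau'}\le E^\varepsilon_k(R\hat R;B)_{\mathcal{N}(\psi)}$; (c) use the Schmidt decomposition across the cut $(R\hat R)\!:\!A$ to write $\psi_{R\hat RA}=(W\otimes I_A)\psi'_{R''A}$ for an isometry $W$ on the reference and a pure state $\psi'_{R''A}$ with $R''\simeq A$, and then invoke isometric invariance of generalized divergences \eqref{eq:gen-div-unitary} together with the fact that conjugation by $W$ on the reference preserves $k$-extendibility, to conclude $E^\varepsilon_k(R\hat R;B)_{\mathcal{N}(\psi)}=E^\varepsilon_k(R'';B)_{\mathcal{N}(\psi')}\le\sup_{\psi_{RA}:R\simeq A}E^\varepsilon_k(R;B)_{\mathcal{N}(\psi_{RA})}$. (One may instead simply cite the remark following Definition~\ref{def:unext-gen-div-channel}.) Chaining all inequalities proves \eqref{eq:k-ext-q-cap-bnd}.

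I expect the only substantive step to be Fact (ii): identifying the exact maximal overlap of a $k$-extendible state with $\Phi_{M_AM_B}$. Because the twirl reduces this to the isotropic-state classification already recorded in Lemma~\ref{thm:isotropicstates}, it is short; everything else is data-processing bookkeeping, the only thing to be careful about being that each auxiliary map used (the twirl $\mathcal{T}^{I}$, the partial trace over the purifying system $\hat R$, and the reference-side isometry $W$) is genuinely $k$-extendible or an isometry, so that monotonicity and isometric invariance legitimately apply.
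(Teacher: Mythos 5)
Your proposal is correct and follows essentially the same route as the paper's proof: the overlap bound $\operatorname{Tr}\{\Phi_{M_AM_B}\sigma\}\le\tfrac1M+\tfrac1k-\tfrac1{Mk}$ via the $U\otimes U^\ast$ twirl and the isotropic-state extendibility threshold of Lemma~\ref{thm:isotropicstates}, the feasibility of $\Phi$ as a test in $D^\varepsilon_h$, monotonicity under the $k$-extendible post-processing (Theorem~\ref{thm:main-resource-theory-k-ext} plus data processing), and the purification/Schmidt-decomposition reduction to pure inputs with $R\simeq A$. The only difference is presentational: you isolate the overlap bound as a standalone fact and spell out the reduction to pure inputs in more detail than the paper does.
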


\begin{proof}
Suppose that there exists a $(1,M,\varepsilon)$ entanglement transmission protocol, assisted by a $k$-extendible post-processing, that satisfies the condition in \eqref{eqn:protocoltest}. 
Let $\sigma_{R\hat{A}}\in\kex(R;\hat{A})$, and let $\Phi_{R\hat{A}}$ denote a maximally entangled state. Then the following chain of inequalities holds
\begin{align}
&D^\varepsilon_{h}(\omega_{R\hat{A}}\Vert\sigma_{R\hat{A}})\notag \\
&\geq  -\log_2\Tr\{\Phi_{R\hat{A}}\sigma_{R\hat{A}}\}\\
&=-\log_2\Tr\left\{\int \d\!U\(U_R\otimes U^\ast_{\hat{A}}\)\Phi_{R\hat{A}}\(U_R\otimes U^\ast_{\hat{A}}\)^\dag\sigma_{R\hat{A}}\right\}\\
&=-\log_2\Tr\left\{\Phi_{R\hat{A}}\int \d\!U\(U_R\otimes U^\ast_{\hat{A}}\)^\dag \sigma_{R\hat{A}} \(U_R\otimes U^\ast_{\hat{A}}\)\right\}\label{eq:rel-ent-iso-sigma}.
\end{align}
The first inequality follows because the condition in
\eqref{eq:test} implies that we can  relax the measurement operator $\Lambda$ in
\eqref{eq:hypo-test-div} to be equal to
$\Phi_{R\hat{A}}$.  The first equality is due to the ``transpose trick'' property of the maximally entangled state, which leads to its $U \otimes U^*$ invariance. For the last equality, we use the cyclic property of the trace. 

Let
\begin{equation}
\overline{\sigma}_{R\hat{A}}\coloneqq \int \d\!U\(U_R\otimes U^\ast_{\hat{A}}\)^\dag \sigma_{R\hat{A}} \(U_R\otimes U^\ast_{\hat{A}}\) .   
\end{equation}
The state $\overline{\sigma}_{R\hat{A}}$ is $k$-extendible because $\sigma_{R\hat{A}}$ is and because the unitary twirl can be realized as a 1W-LOCC channel. The symmetrized state $\overline{\sigma}_{R\hat{A}}$ is furthermore isotropic because it
 is invariant under the action of a unitary of the form $U\otimes U^\ast$.  From Lemma~\ref{thm:isotropicstates}, we find that
\begin{equation}
\label{isotropic}
\overline{\sigma}_{R\hat{A}}=t\Phi_{R{\hat{A}}}+(1-t)\frac{I_{R\hat{A}}-\Phi_{R\hat{A}}}{M^2-1},
\end{equation}
for some $t\in \left[0,\frac{1}{M}+\frac{1}{k}-\frac{1}{Mk} \right]$.
Combining \eqref{isotropic} with \eqref{eq:rel-ent-iso-sigma} leads to
\begin{align}
D^\varepsilon_{h}(\omega_{R\hat{A}}\Vert\sigma_{R\hat{A}})
& \geq -\log_2 t \\
&\geq -\log_2\!\left[\frac{1}{M}+\frac{1}{k}-\frac{1}{Mk} \right].
\end{align}

 Since the above bound holds for an arbitrary state $\sigma_{R\hat{A}}\in \kex(R;\hat{A})$, we conclude that
\begin{align}
\label{eq:1step}
E^\varepsilon_{k}(R;\hat{A})_\omega & =\inf_{\sigma_{R\hat{A}}\in\kex(R;\hat{A})}D^\varepsilon_{h}(\omega_{R\hat{A}}\Vert\sigma_{R\hat{A}})\\
& \geq  -\log_2\!\left[\frac{1}{M}+\frac{1}{k}-\frac{1}{Mk} \right].
\end{align}
Let 
$\rho_{RB}\coloneqq \mc{N}_{A\rightarrow B}(\rho_{RA})$, where $\rho_{RA}\coloneqq  \mc{E}_{A^\prime\to A}\(\Phi_{RA^\prime}\)$, and let $\sigma_{RB}\in \kex(R;B)$. Then
for a $k$-extendible post-processing channel $\mc{D}_{RB\rightarrow R\hat{A}}$, we have that
\begin{align}
& \!\!\!\! D^\varepsilon_{h}(\rho_{RB}\Vert\sigma_{RB})\notag \\
&\geq D^\varepsilon_{h}(\mc{D}_{RB\rightarrow R\hat{A}}(\rho_{RB})\Vert \mc{D}_{RB\rightarrow R\hat{A}}(\sigma_{RB}))
\\
& = D^\varepsilon_{h}(\omega_{R\hat{A}}\Vert\sigma_{R\hat{A}}) \\
&\geq E^\varepsilon_{k}(R;\hat{A})_{\omega}.
\end{align}
The first inequality follows from the data-processing inequality for the hypothesis testing relative entropy. The channel $\mc{D}_{RB\rightarrow R \hat{A}}$ is a $k$-extendible channel, and given that $\sigma_{RB}\in \kex(R;B)$,  Theorem~\ref{thm:main-resource-theory-k-ext} implies that $\sigma_{R\hat{A}} \in \kex(R;\hat{A})$. The last inequality follows from the definition in \eqref{def:hypothesis}. Since this inequality holds for all $\sigma_{RB} \in \kex(R;B)$, we conclude that
\begin{equation}\label{eqn:monotone}
E^\varepsilon_{k}(R;B)_{\rho}\geq E^\varepsilon_{k}(R;\hat{A})_{\omega}.
\end{equation}
We now optimize $E^\varepsilon_{k}$ with respect to all inputs $\rho_{RA}$ to the channel $\mc{N}_{A \to B}$:
\begin{align}
\sup_{\rho_{RA}}E^\varepsilon_{k}(R;B)_{\mc{N}(\rho)}\geq E^\varepsilon_{k}(R;B)_{\mc{N}(\rho)}.
\end{align}
Using purification, the Schmidt decomposition theorem, and the data processing inequality of $E^\varepsilon_{k}(R;B)_{\rho}$, we find that
\begin{equation} \label{eq:f2}
\sup_{\rho_{RA}}E^\varepsilon_{k}(R;B)_{\mc{N}(\rho)}=\sup_{\psi_{RA}}E^\varepsilon_{k}(R;B)_{\mc{N}(\psi)}.
\end{equation}
for a pure state $\psi_{RA}$ with $|R| = |A|$.
Combining \eqref{eq:1step}, \eqref{eqn:monotone}, and \eqref{eq:f2}, we conclude the bound in \eqref{eq:k-ext-q-cap-bnd}.

By employing similar reasoning as above, we arrive at the bound in \eqref{eq:k-ext-end-dist-bnd}.
\end{proof}

\begin{remark}
Note that Theorem~\ref{thm:k-ext-q-cap-bnd} applies in the case that the channel $\mathcal{N}$ is an infinite-dimensional channel, taking input density operators acting on a separable Hilbert space to output density operators acting on a separable Hilbert space. In claiming this statement, we are supposing that an entanglement transmission protocol begins with a finite-dimensional space, the encoding then maps to the infinite-dimensional space, the channel $\mathcal{N}$ acts, and then finally the decoding channel maps back to a finite-dimensional space. Furthermore, an entanglement distillation protocol acts on infinite-dimensional states and distills finite-dimensional maximally entangled states from them. We arrive at this conclusion because the $\varepsilon$-hypothesis testing relative entropy is well defined for infinite-dimensional states.
\end{remark}

\begin{remark}
Due to the facts that $D_h^{\varepsilon}(\rho\Vert\sigma) \geq D_h^{\varepsilon}(\rho\Vert\sigma')$ for $0 \leq \sigma \leq \sigma'$, $D_h^{\varepsilon}(\rho\Vert c \sigma) = D_h^{\varepsilon}(\rho\Vert\sigma) - \log_2 c$ for $c >0$ \cite[Lemma~7]{DTW14}, 
$D_h^{\varepsilon}(\rho\Vert \rho) = \log_2\!\(\frac{1}{1-\varepsilon}\)$,
and by applying the same reasoning as in 
\eqref{eq:log-k-bnd-1}--\eqref{eq:log-k-bnd-last}, we conclude that
\begin{equation}
\sup_{\psi_{RA}} E^\varepsilon_{k}(R;B)_{\tau} \leq \log_2\!\(\frac{1}{1-\varepsilon}\)+\log_2 k,
\end{equation}
which provides a limitation on the $(\varepsilon,k)$-unextendibility of any quantum channel.
\end{remark}

By turning around the bound in \eqref{eq:k-ext-q-cap-bnd}, we find the following alternative way of expressing it:
\begin{remark}
\label{rem:alt-bound-formula}
The number of ebits ($\log_2 M$) transmitted  by a $(1,M,\varepsilon)$ entanglement transmission protocol over a quantum channel $\mc{N}$ and assisted by a $k$-extendible post processing is bounded from above as \begin{equation}
\log_2 M \leq \log_2\!\(\frac{k-1}{k}\) - \log_2\!\(2^{-\sup_{\psi_{RA}}E^{\varepsilon}_k(R;B)_{\tau}}-\frac{1}{k}\). 
\end{equation}
where $E^\varepsilon_{k}(R;B)_{\tau}$ is defined in \eqref{def:hypothesis}.
\end{remark}

\subsubsection{On the size of the extendibility parameter $k$ versus the error $\varepsilon$}

\label{sec:size-k}

By observing the form of the bound in Remark~\ref{rem:alt-bound-formula}, we see that it is critical for the inequality
\begin{equation}
2^{-\sup_{\psi_{RA}}E^{\varepsilon}_k(R;B)_{\tau}}-\frac{1}{k} > 0
\label{eq:ineq-k-for-solution}
\end{equation}
to hold in order for the bound to be non-trivial. Related, we see that this inequality always holds in the limit $k \to \infty$, and in this limit, we recover the $\varepsilon$-relative entropy of entanglement bound from \cite{TBR15,WTB17}. Here, we address the question of how large $k$ should be in order to ensure that the inequality in \eqref{eq:ineq-k-for-solution} holds.

\begin{proposition}
\label{prop:size-of-k} For a fixed $\varepsilon\in(0,1)$, the following
inequality holds%
\begin{equation}
2^{-E_{k}^{\varepsilon}(\mathcal{N})}-\frac{1}{k}>0,
\end{equation}
or equivalently, that%
\begin{equation}
E_{k}^{\varepsilon}(\mathcal{N})<\log_{2}k.\label{eq:k-cond}%
\end{equation}
as long as
\begin{equation}
k>2^{I_{h}^{\varepsilon}(\mathcal{N})}\varepsilon+1,
\end{equation}
where%
\begin{equation}
I_{h}^{\varepsilon}(\mathcal{N})\coloneqq \sup_{\psi_{RA}}D_{h}^{\varepsilon
}(\mathcal{N}_{A\rightarrow B}(\psi_{RA})\Vert\psi_{R}\otimes\mathcal{N}%
_{A\rightarrow B}(\psi_{A}))
\end{equation}
is the channel's $\varepsilon$-mutual information.
\end{proposition}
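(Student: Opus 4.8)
In the notation of Definition~\ref{def:unext-gen-div-channel}, specialized to the generalized divergence $D_h^\varepsilon$, the quantity in question is $E_k^\varepsilon(\mathcal{N}) = \sup_{\psi_{RA}} \inf_{\sigma_{RB}\in\kex(R;B)} D_h^\varepsilon(\tau_{RB}\Vert\sigma_{RB})$ with $\tau_{RB} \coloneqq \mathcal{N}_{A\to B}(\psi_{RA})$. The plan is, for an arbitrary pure input $\psi_{RA}$, to exhibit one explicit feasible $k$-extendible state in the infimum and to lower bound the corresponding optimal type-II error $2^{-D_h^\varepsilon(\tau_{RB}\Vert\sigma_{RB})}$ by a $\psi$-independent constant that exceeds $1/k$ under the stated hypothesis. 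Write $\tau_R = \psi_R$ and $\tau_B = \mathcal{N}_{A\to B}(\psi_A)$ for the marginals of $\tau_{RB}$ (using $\Tr_B\circ\mathcal{N}_{A\to B} = \Tr_A$), so that $\tau_R\otimes\tau_B$ is exactly the second argument appearing in $I_h^\varepsilon(\mathcal{N})$. I would take
\begin{equation*}
\sigma_{RB} \coloneqq \frac1k \tau_{RB} + \left(1-\frac1k\right)\tau_R\otimes\tau_B,
\end{equation*}
which is $k$-extendible by the symmetrized extension exhibited in the proof of Lemma~\ref{prop:continuity} (with $R$ there in the role of $A$).

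The quantitative core is as follows. For any test $0\le\Lambda\le I$ with $\Tr\{\Lambda\tau_{RB}\}\ge 1-\varepsilon$, linearity of the trace gives $\Tr\{\Lambda\sigma_{RB}\} = \tfrac1k\Tr\{\Lambda\tau_{RB}\} + (1-\tfrac1k)\Tr\{\Lambda(\tau_R\otimes\tau_B)\} \ge \tfrac{1-\varepsilon}{k} + (1-\tfrac1k)\,2^{-D_h^\varepsilon(\tau_{RB}\Vert\tau_R\otimes\tau_B)}$, where the second term is bounded below because, by \eqref{eq:hypo-test-div}, $2^{-D_h^\varepsilon(\tau_{RB}\Vert\tau_R\otimes\tau_B)}$ is precisely the infimum of $\Tr\{\Lambda'(\tau_R\otimes\tau_B)\}$ over that same feasible set of tests. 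Taking the infimum over $\Lambda$ and using $D_h^\varepsilon(\tau_{RB}\Vert\tau_R\otimes\tau_B)\le I_h^\varepsilon(\mathcal{N})$ (immediate from the definition of $I_h^\varepsilon$ as a supremum over the same class of inputs), I obtain
\begin{equation*}
2^{-D_h^\varepsilon(\tau_{RB}\Vert\sigma_{RB})} \ge \frac{1-\varepsilon}{k} + \left(1-\frac1k\right)2^{-I_h^\varepsilon(\mathcal{N})}.
\end{equation*}
Since the right-hand side does not depend on $\psi$, the same bound survives taking the supremum over $\psi_{RA}$, giving $2^{-E_k^\varepsilon(\mathcal{N})}\ge \tfrac{1-\varepsilon}{k} + (1-\tfrac1k)2^{-I_h^\varepsilon(\mathcal{N})}$.

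Finally I would verify the elementary inequality $\tfrac{1-\varepsilon}{k} + (1-\tfrac1k)2^{-I_h^\varepsilon(\mathcal{N})} > \tfrac1k$: rearranging, this is $(k-1)2^{-I_h^\varepsilon(\mathcal{N})} > \varepsilon$, i.e. $k > 2^{I_h^\varepsilon(\mathcal{N})}\varepsilon + 1$, which is exactly the hypothesis. Chaining the two displays yields $2^{-E_k^\varepsilon(\mathcal{N})} > \tfrac1k$, equivalently $E_k^\varepsilon(\mathcal{N}) < \log_2 k$, as claimed. The one place that needs genuine care — and the main "obstacle" in the sense of not being fully routine — is resisting the temptation to discard the product-state contribution $(1-\tfrac1k)\Tr\{\Lambda(\tau_R\otimes\tau_B)\}$: retaining it and bounding it by $2^{-I_h^\varepsilon(\mathcal{N})}$ is precisely what produces the factor $\varepsilon$ rather than $1$ in the threshold on $k$ (a crude bound would only give $k > 2^{I_h^\varepsilon(\mathcal{N})}+1$). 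The remaining steps — $k$-extendibility of $\sigma_{RB}$ and the algebra above — are immediate.
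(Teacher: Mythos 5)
Your proposal is correct and follows essentially the same route as the paper's proof: the same choice of $k$-extendible state $\frac{1}{k}\tau_{RB}+(1-\frac{1}{k})\tau_R\otimes\tau_B$ with the same symmetrized $k$-extension, the same lower bound on the type-II error of any feasible test via $2^{-D_h^\varepsilon(\tau_{RB}\Vert\tau_R\otimes\tau_B)}\geq 2^{-I_h^\varepsilon(\mathcal{N})}$, and the same final algebra reducing to $k>2^{I_h^\varepsilon(\mathcal{N})}\varepsilon+1$. The only (cosmetic) difference is that you argue uniformly over all feasible tests rather than invoking an optimal measurement operator.
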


\begin{proof}
This follows because the condition in \eqref{eq:k-cond} is equivalent to%
\begin{align}
E_{k}^{\varepsilon}(\mathcal{N})
& = \sup_{\psi_{RA}}\inf_{\sigma_{RB}\in
{\EXT}_{k}(R;B)}D_{h}^{\varepsilon}(\mathcal{N}_{A\rightarrow B}(\psi
_{RA})\Vert\sigma_{RB})\notag \\
& <\log_{2}k.
\end{align}
We can pick the $k$-extendible state $\sigma_{RB}^{\psi}$, for a fixed
$\psi_{RA}$, as follows:%
\begin{equation}
\sigma_{RB}^{\psi}=\frac{1}{k}\mathcal{N}_{A\rightarrow B}(\psi_{RA})+\left(
1-\frac{1}{k}\right)  \psi_{R}\otimes\mathcal{N}_{A\rightarrow B}(\psi_{A}),
\end{equation}
implying that%
\begin{equation}
E_{k}^{\varepsilon}(\mathcal{N})\leq\sup_{\psi_{RA}}D_{h}^{\varepsilon
}(\mathcal{N}_{A\rightarrow B}(\psi_{RA})\Vert\sigma_{RB}^{\psi}).
\end{equation}
The choice $\sigma_{RB}^{\psi}$ is $k$-extendible because the following state
constitutes its $k$-extension:
\begin{multline}
\sigma_{RB_{1}\cdots B_{k}}^{\psi}=\frac{1}{k}\sum_{i=1}^{k}\mathcal{N}%
_{A\rightarrow B_{1}}(\psi_{A})\otimes\cdots\otimes\mathcal{N}_{A\rightarrow
B_{i-1}}(\psi_{A})\\
\otimes\mathcal{N}_{A\rightarrow B_{i}}(\psi_{RA}%
)\otimes\mathcal{N}_{A\rightarrow B_{i+1}}(\psi_{A})\otimes\cdots
\\
\otimes\mathcal{N}_{A\rightarrow B_{k}}(\psi_{A}).
\end{multline}
The optimal measurement operator $\Lambda^{\ast}$ for $D_{h}^{\varepsilon
}(\mathcal{N}_{A\rightarrow B}(\psi_{RA})\Vert\sigma_{RB}^{\psi})$ satisfies%
\begin{equation}
\operatorname{Tr}\{\Lambda^{\ast}\mathcal{N}_{A\rightarrow B}(\psi_{RA}%
)\}\geq1-\varepsilon,
\end{equation}
which means that%
\begin{align}
\operatorname{Tr}\{\Lambda^{\ast}\sigma_{RB}^{\psi}\} &  =\frac{1}%
{k}\operatorname{Tr}\{\Lambda^{\ast}\mathcal{N}_{A\rightarrow B}(\psi
_{RA})\}\notag \\
& \qquad +\left(  1-\frac{1}{k}\right)  \operatorname{Tr}\{\Lambda^{\ast}%
(\psi_{R}\otimes\mathcal{N}_{A\rightarrow B}(\psi_{A}))\}\notag \\
&  \geq\frac{1}{k}\left[  1-\varepsilon\right]  +\left(  1-\frac{1}{k}\right)
2^{-I_{h}^{\varepsilon}(\mathcal{N})},
\end{align}
and in turn that%
\begin{multline}
D_{h}^{\varepsilon}(\mathcal{N}_{A\rightarrow B}(\psi_{RA})\Vert\sigma
_{RB}^{\psi})\\
\leq-\log_{2}\!\left(    \frac{1}{k}\left[  1-\varepsilon
\right]  +\left(  1-\frac{1}{k}\right)  2^{-I_{h}^{\varepsilon}(\mathcal{N}%
)}  \right)  .
\end{multline}
The goal is to have the right-hand side above less than $\log_{2}k$ for all
$\psi_{RA}$, and this condition is equivalent to
\begin{equation}
-\log_{2}\!\left(    \frac{1}{k}\left[  1-\varepsilon\right]  +\left(
1-\frac{1}{k}\right)  2^{-I_{h}^{\varepsilon}(\mathcal{N})}  \right)
<\log_{2}k.
\end{equation}
Rewriting this, it is the same as%
\begin{equation}
\frac{1}{k}\left[  1-\varepsilon\right]  +\left(  1-\frac{1}{k}\right)
2^{-I_{h}^{\varepsilon}(\mathcal{N})}>\frac{1}{k},
\end{equation}
which is in turn the same as%
\begin{align}
-\frac{\varepsilon}{k}+\left(  1-\frac{1}{k}\right)  2^{-I_{h}^{\varepsilon
}(\mathcal{N})} &  >0\\
\Leftrightarrow\left(  k-1\right)  2^{-I_{h}^{\varepsilon}(\mathcal{N})} &
>\varepsilon\\
\Leftrightarrow k &  >2^{I_{h}^{\varepsilon}(\mathcal{N})}\varepsilon+1. 
\end{align}
This concludes the proof.
\end{proof}

\begin{remark}
\label{rem:lower-bnd-k-loose}
We note that the lower bound on $k$ from Proposition~\ref{prop:size-of-k} is not necessarily optimal and certainly could be improved. For example, when $\varepsilon < 1/2$ and the channel $\mc{N}$ is a two-extendible channel, $k=2$ suffices in order for the bound from Theorem~\ref{thm:k-ext-q-cap-bnd} to apply, and thus the bound in Proposition~\ref{prop:size-of-k} can be very loose. The value of Proposition~\ref{prop:size-of-k} is simply in knowing that a finite lower bound on $k$ exists for every channel, such that one can always find a finite $k$ for and beyond which our bound on entanglement transmission rates applies.
\end{remark}

\subsection{Non-asymptotic quantum capacity assisted by $k$-extendible channels}

In this subsection, we define another kind of non-asymptotic quantum capacity, in which a quantum channel is used $n$ times, and between every channel use, a $k$-extendible channel is employed for free to assist in the goal of entanglement transmission. Such a protocol is similar to those that have been discussed in the literature previously \cite{TGW14,KW17a,BW17}, but we review the details here for completeness.

In such a protocol (see Figure~\ref{fig:protocol-k-ext-two-way} for a depiction of an example), a sender Alice and a receiver Bob are
spatially separated and connected by a quantum channel $\mathcal{N}%
_{A\rightarrow B}$. They begin by performing a $k$-extendible channel $\mathcal{K}%
_{\emptyset\rightarrow A_{1}^{\prime}A_{1}B_{1}^{\prime}}^{(1)}$, which leads
to a $k$-extendible state $\rho_{A_{1}^{\prime}A_{1}B_{1}^{\prime}}^{(1)}$, where
$A_{1}^{\prime}$ and $B_{1}^{\prime}$ are systems that are finite-dimensional
but arbitrarily large. The system $A_{1}$ is such that it can be fed into the
first channel use. Alice sends system $A_{1}$ through the first channel use,
leading to a state $\sigma_{A_{1}^{\prime}B_{1}B_{1}^{\prime}}^{(1)}%
\coloneqq \mathcal{N}_{A_{1}\rightarrow B_{1}}(\rho_{A_{1}^{\prime}A_{1}%
B_{1}^{\prime}}^{(1)})$. Alice and Bob then perform the $k$-extendible channel
$\mathcal{K}_{A_{1}^{\prime}B_{1}B_{1}^{\prime}\rightarrow A_{2}^{\prime}%
A_{2}B_{2}^{\prime}}^{(2)}$, which leads to the state%
\begin{equation}
\rho_{A_{2}^{\prime}A_{2}B_{2}^{\prime}}^{(2)}\coloneqq \mathcal{K}_{A_{1}%
^{\prime}B_{1}B_{1}^{\prime}\rightarrow A_{2}^{\prime}A_{2}B_{2}^{\prime}%
}^{(2)}(\sigma_{A_{1}^{\prime}B_{1}B_{1}^{\prime}}^{(1)}).
\end{equation}
Alice sends system $A_{2}$ through the second channel use $\mathcal{N}%
_{A_{2}\rightarrow B_{2}}$, leading to the state $\sigma_{A_{2}^{\prime}%
B_{2}B_{2}^{\prime}}^{(2)}\coloneqq \mathcal{N}_{A_{2}\rightarrow B_{2}}%
(\rho_{A_{2}^{\prime}A_{2}B_{2}^{\prime}}^{(1)})$. This process iterates:\ the
protocol uses the channel $n$ times. In general, we have the following states
for all $i\in\{2,\ldots,n\}$:%
\begin{align}
\rho_{A_{i}^{\prime}A_{i}B_{i}^{\prime}}^{(i)}  &  \coloneqq \mathcal{K}%
_{A_{i-1}^{\prime}B_{i-1}B_{i-1}^{\prime}\rightarrow A_{i}^{\prime}A_{i}%
B_{i}^{\prime}}^{(i)}(\sigma_{A_{i-1}^{\prime}B_{i-1}B_{i-1}^{\prime}}%
^{(i-1)}),\\
\sigma_{A_{i}^{\prime}B_{i}B_{i}^{\prime}}^{(i)}  &  \coloneqq \mathcal{N}%
_{A_{i}\rightarrow B_{i}}(\rho_{A_{i}^{\prime}A_{i}B_{i}^{\prime}}^{(i)}),
\end{align}
where $\mathcal{K}_{A_{i-1}^{\prime}B_{i-1}B_{i-1}^{\prime}\rightarrow
A_{i}^{\prime}A_{i}B_{i}^{\prime}}^{(i)}$ is a $k$-extendible channel. The final step of
the protocol consists of a $k$-extendible channel $\mathcal{K}_{A_{n}^{\prime}%
B_{n}B_{n}^{\prime}\rightarrow M_{A}M_{B}}^{(n+1)}$, which generates the
systems $M_{A}$ and $M_{B}$ for Alice and Bob, respectively. The protocol's
final state is as follows:%
\begin{equation}
\omega_{M_{A}M_{B}}\coloneqq \mathcal{K}_{A_{n}^{\prime}B_{n}B_{n}^{\prime
}\rightarrow M_{A}M_{B}}^{(n+1)}(\sigma_{A_{n}^{\prime}B_{n}B_{n}^{\prime}%
}^{(n)}).
\end{equation}

The goal of the protocol is that the final state $\omega_{M_{A}M_{B}}$ is
close to a maximally entangled state. Fix $n,M\in\mathbb{N}$ and
$\varepsilon\in\lbrack0,1]$. The original protocol is an $(n,M,\varepsilon)$
protocol if the channel is used $n$ times as discussed above, $\left\vert
M_{A}\right\vert =\left\vert M_{B}\right\vert =M$, and if
\begin{align}
& F(\omega_{M_{A}M_{B}},\Phi_{M_{A}M_{B}})  \notag \\
&  =\langle\Phi|_{M_{A}M_{B}}%
\omega_{M_{A}M_{B}}|\Phi\rangle_{M_{A}M_{B}}\\
&  \geq1-\varepsilon \label{eq:fidelity-assump}.
\end{align}

\begin{figure*}
	\begin{center}		{\includegraphics[width=0.85\linewidth]{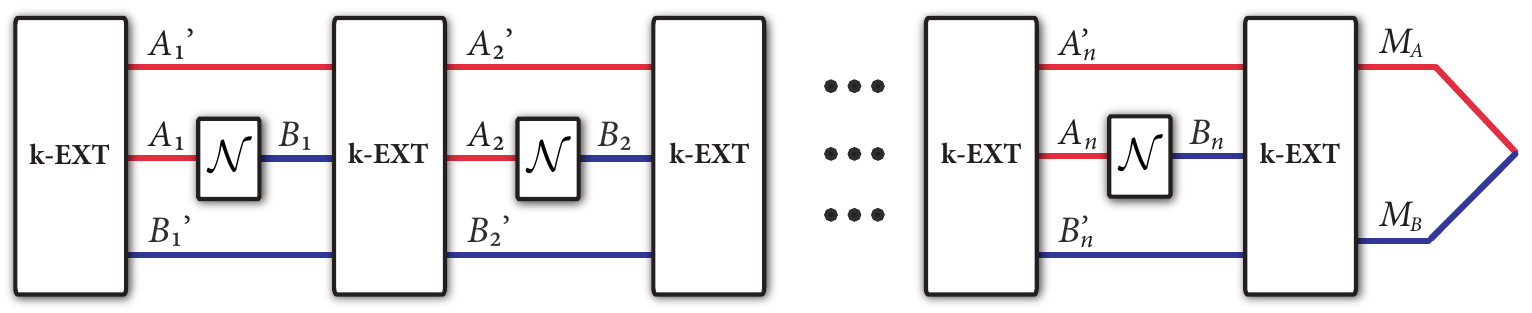}}
	\end{center}
	\caption{Depiction of a quantum communication protocol using a quantum channel $\mathcal{N}$ assisted by $k$-extendible channels before and after every channel use. The quantum channel $\mathcal{N}$ is used $n$ times, in conjunction with the assisting $k$-extendible channels, in order to establish entanglement shared between Alice and Bob.  }%
	\label{fig:protocol-k-ext-two-way}%
\end{figure*}

Let $Q_{\text{II}}^{(k)}%
(\mathcal{N}_{A\rightarrow B},n,\varepsilon)$ denote the non-asymptotic
quantum capacity assisted by $k$-extendible channels; i.e.,
$Q_{\text{II}}^{(k)}(\mathcal{N}_{A\rightarrow B},n,\varepsilon)$ is the
maximum value of $\frac{1}{n}\log_{2}M$ such that there exists an
$(n,M,\varepsilon)$ protocol for $\mathcal{N}_{A\rightarrow B}$ as described above.

A rate $R$ is achievable for $k$-extendible-assisted quantum communication if for all
$\varepsilon\in(0,1]$, $\delta>0$, and sufficiently large$~n$, there exists an
$(n,2^{n\left(  R-\delta\right)  },\varepsilon)$ protocol. The $k$-extendible-assisted
quantum capacity of a channel~$\mathcal{N}$, denoted as
$Q_{\text{II}}^{(k)}(\mathcal{N})$, is equal to the
supremum of all achievable rates.

\begin{theorem} \label{theorem:adaptive-protocols}
The following converse bound holds for every integer $ k\geq 2$ and for  every $(n,M,\varepsilon)$ $k$-extendible assisted quantum communication protocol over $n$ uses of a quantum channel $\mathcal{N}$:
\begin{equation}
-\frac{1}{n}\log_2\!\left[\frac{1}{M}+\frac{1}{k}-\frac{1}{M k}\right] 
 \leq E^{\max}_{k}(\mc{N})+ \frac{1}{n}\log_2\!\(\frac{1}{1-\varepsilon}\), 
  \label{eq:k-ext-q-cap-bnd-max}
\end{equation}
where $E^{\max}_{k}(\mc{N})$ is the $k$-unextendible max-relative entropy of the channel $\mc{N}$, defined as
\begin{equation}
E^{\max}_{k}(R;B)_{\rho}\coloneqq \inf_{\sigma_{RB}\in\kex(R:B)}D_{\max}\(\rho_{RB}\Vert\sigma_{RB}\),
\end{equation}
 $\tau_{RB}\coloneqq \mc{N}_{A\to B}(\psi_{RA})$, and the optimization is with respect to pure states $\rho_{RA}$ with $|R|=|A|$.
\end{theorem}
\begin{proof}
The above bound can be derived by invoking Proposition~\ref{prop:amort} and following arguments similar to those given in the proof of  \cite[Theorem~3]{BW17}. We also require the amortization collapse of $E^{\max}_{k}(\mc{N})$, as given in Appendix~\ref{app:amort-collapse}.
\end{proof}

\bigskip 
Similar to the observation in Remark~\ref{rem:alt-bound-formula},
by turning around the bound in \eqref{eq:k-ext-q-cap-bnd-max}, we find the following alternative way of expressing it:
\begin{remark}
\label{rem:alt-bound-formula-max}
The number of qubits ($\log_2 M$) transmitted  by an $(n,M,\varepsilon)$ $k$-extendible assisted quantum communication protocol conducted  over a quantum channel $\mc{N}$  is bounded from above as 
\begin{equation}
\log_2 M \leq \log_2\!\(\frac{k-1}{k}\) - \log_2\!\(2^{-n E^{\max}_k(\mathcal{N})}[1-\varepsilon]-\frac{1}{k}\).
\label{eq:alt-bound-max-k} 
\end{equation}
where $E^{\max}_{k}(\mc{N})$ is the $k$-unextendible max-relative entropy of the channel $\mc{N}$, as defined in \eqref{equation-67}.
\end{remark}

Related to the discussion in Section~\ref{sec:size-k}, it is necessary for the inequality $2^{-n E^{\max}_k(\mathcal{N})}[1-\varepsilon]-\frac{1}{k} > 0 $ to hold in order for the bound in \eqref{eq:alt-bound-max-k} to be non-trivial. The following proposition gives a sufficient condition on the size of $k$ in order for the inequality in \eqref{eq:alt-bound-max-k} to hold. This condition can be checked numerically.

\begin{proposition}
\label{prop:lower-bnd-k-adaptive}
Fix $\varepsilon \in (0,1)$, a channel $\mathcal{N}$, and $n \geq 1$.
The following inequality holds%
\begin{equation}
2^{-nE_{k}^{\max}(\mathcal{N})}\left[  1-\varepsilon\right]  -\frac{1}{k}>0,
\end{equation}
or equivalently,%
\begin{equation}
nE_{k}^{\max}(\mathcal{N})+\log_{2}\!
\left(  \frac{1}{1-\varepsilon}\right)
<\log_{2}k,\label{eq:k-cond-max}%
\end{equation}
as long as%
\begin{equation}
k>2^{I_{\max}(\mathcal{N})}\left[  \frac{k^{1-1/n}}{\left[  1-\varepsilon
\right]  ^{1/n}}-\left(  1-2^{-I_{\max}(\mathcal{N})}\right)  \right]  ,
\end{equation}
where%
\begin{equation}
I_{\max}(\mathcal{N})\coloneqq \sup_{\psi_{RA}}D_{\max}(\mathcal{N}_{A\rightarrow
B}(\psi_{RA})\Vert\psi_{R}\otimes\mathcal{N}_{A\rightarrow B}(\psi_{A}))
\end{equation}
is the channel's max-mutual information.
\end{proposition}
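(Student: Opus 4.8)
The plan is to mirror the strategy of Proposition~\ref{prop:size-of-k}, but now working with the max-relative entropy rather than the hypothesis-testing divergence. First I would reduce the problem: the stated condition $2^{-nE_k^{\max}(\mc{N})}[1-\varepsilon] - \frac1k > 0$ is equivalent, after taking logarithms, to \eqref{eq:k-cond-max}, i.e.\ $nE_k^{\max}(\mc{N}) + \log_2(1/(1-\varepsilon)) < \log_2 k$. So it suffices to produce an upper bound on $E_k^{\max}(\mc{N})$ and show it is strictly less than $\log_2 k - \log_2(1/(1-\varepsilon))$ whenever the claimed inequality on $k$ holds.

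The key step is to choose a good $k$-extendible comparison state. For a fixed input $\psi_{RA}$, pick
\begin{equation}
\sigma_{RB}^{\psi} = \frac{1}{k}\mc{N}_{A\to B}(\psi_{RA}) + \left(1 - \frac{1}{k}\right)\psi_R \otimes \mc{N}_{A\to B}(\psi_A),
\end{equation}
whose $k$-extension is the same symmetrized sum used in the proof of Proposition~\ref{prop:size-of-k} (one copy of $\mc{N}(\psi_{RA})$ in each slot, tensored with $\mc{N}(\psi_A)$ in the others). Then $E_k^{\max}(\mc{N}) \leq \sup_{\psi_{RA}} D_{\max}(\mc{N}_{A\to B}(\psi_{RA}) \Vert \sigma_{RB}^{\psi})$. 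To bound $D_{\max}(\mc{N}(\psi_{RA}) \Vert \sigma_{RB}^{\psi})$, note that by definition of $I_{\max}(\mc{N})$ we have $\mc{N}_{A\to B}(\psi_{RA}) \leq 2^{I_{\max}(\mc{N})}\,\psi_R \otimes \mc{N}_{A\to B}(\psi_A)$ for every $\psi_{RA}$. Writing $\Theta \coloneqq \mc{N}_{A\to B}(\psi_{RA})$ and $\Xi \coloneqq \psi_R \otimes \mc{N}_{A\to B}(\psi_A)$, we want the smallest $\mu$ with $\Theta \leq \mu(\frac1k\Theta + (1-\frac1k)\Xi)$, i.e.\ $(1 - \frac{\mu}{k})\Theta \leq \mu(1-\frac1k)\Xi$. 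Using $\Theta \leq 2^{I_{\max}}\Xi$, it suffices that $(1-\frac{\mu}{k})2^{I_{\max}} \leq \mu(1-\frac1k)$ when $1-\frac{\mu}{k} \geq 0$; solving gives $\mu \leq \frac{2^{I_{\max}}}{1 - \frac1k + \frac{1}{k}2^{I_{\max}}} = \frac{k\,2^{I_{\max}}}{k - 1 + 2^{I_{\max}}}$. Hence $E_k^{\max}(\mc{N}) \leq \log_2\!\big(\frac{k\,2^{I_{\max}(\mc{N})}}{k-1+2^{I_{\max}(\mc{N})}}\big)$, uniformly in $\psi_{RA}$.

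Plugging this bound into \eqref{eq:k-cond-max}, it suffices to have
\begin{equation}
n\log_2\!\left(\frac{k\,2^{I_{\max}}}{k-1+2^{I_{\max}}}\right) + \log_2\!\left(\frac{1}{1-\varepsilon}\right) < \log_2 k.
\end{equation}
Exponentiating, this reads $\left(\frac{k\,2^{I_{\max}}}{k-1+2^{I_{\max}}}\right)^n \frac{1}{1-\varepsilon} < k$, i.e.\ $k^{n-1}\,2^{n I_{\max}} < (1-\varepsilon)(k-1+2^{I_{\max}})^n$. Taking $n$-th roots, $\frac{k^{1-1/n}\,2^{I_{\max}}}{(1-\varepsilon)^{1/n}} < k - 1 + 2^{I_{\max}}$, which rearranges to $k > 2^{I_{\max}}\big[\frac{k^{1-1/n}}{(1-\varepsilon)^{1/n}} - (1 - 2^{-I_{\max}})\big]$ — exactly the condition in the statement. (Note this is an implicit condition on $k$; for $n=1$ it reduces to the clean $k > 2^{I_{\max}(\mc{N})}\varepsilon/(1-\varepsilon) + \text{const}$-type bound, consistent with Proposition~\ref{prop:size-of-k}.) The main obstacle is the operator manipulation in the $D_{\max}$ estimate: one must be careful that the scalar inequality $(1-\frac{\mu}{k})2^{I_{\max}} \le \mu(1-\frac1k)$ genuinely implies the operator inequality, which requires checking the sign of $1 - \mu/k$ and invoking $\Theta \le 2^{I_{\max}}\Xi$ only on the correct side; everything else is algebraic bookkeeping.
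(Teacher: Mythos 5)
Your proposal is correct and follows essentially the same route as the paper: the same $k$-extendible comparison state $\sigma_{RB}^{\psi}=\frac{1}{k}\mathcal{N}(\psi_{RA})+(1-\frac{1}{k})\psi_{R}\otimes\mathcal{N}(\psi_{A})$, the same bound $E_{k}^{\max}(\mathcal{N})\leq\log_{2}\bigl(k\,2^{I_{\max}}/(k-1+2^{I_{\max}})\bigr)$ (the paper obtains it by lower-bounding $\sigma_{RB}^{\psi}$ by a multiple of $\mathcal{N}(\psi_{RA})$ and using monotonicity of $D_{\max}$, which is equivalent to your direct solve for $\mu$), and the same algebra to reach the stated condition on $k$. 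The only nit is that your "solving gives $\mu\leq\cdots$" should read that any $\mu$ at least that value is feasible, hence $D_{\max}$ is at most its logarithm; this does not affect correctness.
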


\begin{proof}
The condition in \eqref{eq:k-cond-max} is equivalent to%
\begin{align}
E_{k}^{\max}(\mathcal{N}) & =\sup_{\psi_{RA}}\inf_{\sigma_{RB}\in \EXT_k(R:B)}D_{\max
}(\mathcal{N}_{A\rightarrow B}(\psi_{RA})\Vert\sigma_{RB})\notag \\
& < \log_{2}k.
\end{align}
We can pick the $k$-extendible state $\sigma_{RB}^{\psi}$, for a fixed
$\psi_{RA}$, as follows:%
\begin{equation}
\sigma_{RB}^{\psi}=\frac{1}{k}\mathcal{N}_{A\rightarrow B}(\psi_{RA})+\left(
1-\frac{1}{k}\right)  \psi_{R}\otimes\mathcal{N}_{A\rightarrow B}(\psi_{A}),
\end{equation}
implying that%
\begin{equation}
E_{k}^{\max}(\mathcal{N})\leq\sup_{\psi_{RA}}D_{\max}(\mathcal{N}%
_{A\rightarrow B}(\psi_{RA})\Vert\sigma_{RB}^{\psi}).
\end{equation}
Now defining, for a fixed $\psi_{RA}$,%
\begin{align}
\lambda(\psi) & \coloneqq  I_{\max}(R;B)_{\mathcal{N}(\psi)}\\
& \coloneqq  D_{\max
}(\mathcal{N}_{A\rightarrow B}(\psi_{RA})\Vert\psi_{R}\otimes\mathcal{N}%
_{A\rightarrow B}(\psi_{A})),
\end{align}
we find that%
\begin{align}
& \sigma_{RB}^{\psi} \notag \\
&  =\frac{1}{k}\mathcal{N}_{A\rightarrow B}(\psi
_{RA})+\left(  1-\frac{1}{k}\right)  \psi_{R}\otimes\mathcal{N}_{A\rightarrow
B}(\psi_{A})\\
&  \geq\frac{1}{k}\mathcal{N}_{A\rightarrow B}(\psi_{RA})+\left(  1-\frac
{1}{k}\right)  2^{-\lambda(\psi)}\mathcal{N}_{A\rightarrow B}(\psi_{RA})\\
&  =\left[  \frac{1}{k}+\left(  1-\frac{1}{k}\right)  2^{-\lambda(\psi
)}\right]  \mathcal{N}_{A\rightarrow B}(\psi_{RA}).
\end{align}
Now exploiting the fact that $D_{\max}(\rho\Vert\sigma)\leq D_{\max}(\rho
\Vert\sigma^{\prime})$ for $\sigma\geq\sigma^{\prime} \geq 0$, as well as $D_{\max
}(\rho\Vert c\sigma)=D_{\max}(\rho\Vert\sigma)-\log_{2}c$ for $c>0$, we find
that%
\begin{align}
&  \!\!\!\!\!\sup_{\psi_{RA}}D_{\max}(\mathcal{N}_{A\rightarrow B}(\psi_{RA})\Vert
\sigma_{RB}^{\psi})\nonumber\\
&  \leq\sup_{\psi_{RA}}\Big[  D_{\max}(\mathcal{N}_{A\rightarrow B}(\psi
_{RA})\Vert\mathcal{N}_{A\rightarrow B}(\psi_{RA})) \notag \\
& \qquad -\log_{2}\!\left(  \frac
{1}{k}+\left(  1-\frac{1}{k}\right)  2^{-\lambda(\psi)}\right)  \Big]  \\
&  =\sup_{\psi_{RA}}\left[  -\log_{2}\!\left(  \frac{1}{k}+\left(  1-\frac{1}%
{k}\right)  2^{-\lambda(\psi)}\right)  \right]  \\
&  =-\log_{2}\!\left(  \frac{1}{k}+\left(  1-\frac{1}{k}\right)  2^{-I_{\max
}(\mathcal{N})}\right)  \\
&  =-\log_{2}\!\left(  2^{-I_{\max}(\mathcal{N})}+\frac{1}{k}\left(
1-2^{-I_{\max}(\mathcal{N})}\right)  \right)  .
\end{align}
The goal is to have the inequality in \eqref{eq:k-cond-max} holding, and, by
the above analysis, this results if the following inequality holds
\begin{multline}
-n\log_{2}\!\left(  \left[  2^{-I_{\max}(\mathcal{N})}+\frac{1}{k}\left(
1-2^{-I_{\max}(\mathcal{N})}\right)  \right]  \right)  \\
+\log_{2}\!\left(
\frac{1}{1-\varepsilon}\right)  <\log_{2}k.
\end{multline}
Rewriting this, it is the same as%
\begin{align}
\left[  2^{-I_{\max}(\mathcal{N})}+\frac{1}{k}\left(  1-2^{-I_{\max
}(\mathcal{N})}\right)  \right]  ^{n}\left[  1-\varepsilon\right]    &
>\frac{1}{k} \notag \\
\Leftrightarrow\left[  2^{-I_{\max}(\mathcal{N})}+\frac{1}{k}\left(
1-2^{-I_{\max}(\mathcal{N})}\right)  \right]  \left[  1-\varepsilon\right]
^{1/n}  & >\frac{1}{k^{1/n}} \notag \\
\Leftrightarrow\left[  k2^{-I_{\max}(\mathcal{N})}+\left(  1-2^{-I_{\max
}(\mathcal{N})}\right)  \right]  \left[  1-\varepsilon\right]  ^{1/n}  &
>k^{1-1/n} 
\end{align}
\begin{equation}
    \Leftrightarrow k2^{-I_{\max}(\mathcal{N})}+\left(  1-2^{-I_{\max}%
(\mathcal{N})}\right)     >\frac{k^{1-1/n}}{\left[  1-\varepsilon\right]
^{1/n}},
\end{equation}
\begin{equation}
    \Leftrightarrow k   >2^{I_{\max}(\mathcal{N})}\left[  \frac{k^{1-1/n}%
}{\left[  1-\varepsilon\right]  ^{1/n}}-\left(  1-2^{-I_{\max}(\mathcal{N}%
)}\right)  \right]  .
\end{equation}
This concludes the proof.
\end{proof}

\bigskip
A similar comment as in Remark~\ref{rem:lower-bnd-k-loose} applies to Proposition~\ref{prop:lower-bnd-k-adaptive}.

We now define $k$-simulable channels and observe how the upper bounds on non-asymptotic quantum capacity simplify for these channels.

\begin{definition}[$k$-simulable channels]
\label{def:k-sim-ch}
A channel $\mathcal{N}_{A \rightarrow B}$ is $k$-simulable with associated resource state $\omega_{R\hat{B}}$ if the following holds for every input state $\rho_A \in \mathcal{D}(\mathcal{H}_A)$:
\begin{equation}
\mathcal{N}_{A\rightarrow B}(\rho_A) = \mathcal{K}_{RA\hat{B}\rightarrow B} (\rho_A \otimes \omega_{R\hat{B}}),
\end{equation}
where $\mathcal{K}_{RAB\rightarrow B}$ is a $k$-extendible channel. 
\end{definition}

Note that a teleportation-simulable channel, as given in Definition~\ref{def:tel-sim}, is a particular example of a $k$-simulable channel, whenever the LOCC channel in \eqref{eq:TP-simul} is a 1W-LOCC channel.

For a $k$-simulable channel, an $(n,M,\varepsilon)$ quantum communication protocol assisted by $k$-extendible channels simplifies in such a way that it is equivalent to an $(n,M,\varepsilon)$ entanglement distillation protocol starting from the resource state $\omega_{R\hat{B}}^{\otimes n}$ and assisted by a $k$-extendible post-processing channel. This kind of observation was made in \cite{BDSW96,Mul12} and extended to any resource theory in \cite{KW17a}. See Figure~5 of \cite{KW17a} for a summary of the reduction that applies to our case of interest here.
We then have the following:  

\begin{corollary}
\label{cor:k-ext-k-sim-bnd}
Let $\mathcal{N}$ be a $k$-simulable channel as in Definition~\ref{def:k-sim-ch}.
The following bound holds for all $ k\in\bb{N}$ and for  every $(n,M,\varepsilon)$ quantum communication protocol conducted over the quantum channel $\mathcal{N}$ and assisted by $k$-extendible channels:
\begin{equation}
-\log_2\!\left[\frac{1}{M}+\frac{1}{k}-\frac{1}{M k}\right]\leq  E^\varepsilon_{k}(R^n;\hat{B}^n)_{\omega^{\otimes n}} \label{eq:k-ext-k-sim-bnd},
\end{equation}
where $\omega_{R\hat B}$ is the resource state in 
Definition~\ref{def:k-sim-ch}.
\end{corollary}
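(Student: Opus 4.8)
The plan is to combine a channel-simulation argument with the entanglement-distillation bound already established in Theorem~\ref{thm:k-ext-q-cap-bnd}. The key input is the observation recalled just before the statement: when the underlying channel $\mc{N}$ is $k$-simulable, a quantum communication protocol assisted by $k$-extendible channels collapses to an entanglement distillation protocol operating on the resource state $\omega_{R\hat{B}}^{\otimes n}$. This kind of reduction was developed in \cite{BDSW96,Mul12} and phrased for general resource theories in \cite{KW17a} (see Figure~5 there), so in the proof I would invoke it and then simply feed the resulting distillation protocol into Theorem~\ref{thm:k-ext-q-cap-bnd}.

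First I would spell out the reduction. Using Definition~\ref{def:k-sim-ch}, replace the $i$-th use of $\mc{N}$ in the protocol by the action of the $k$-extendible channel $\mc{K}_{R_iA_i\hat{B}_i\to B_i}$ on the input system $A_i$ together with a fresh copy $\omega_{R_i\hat{B}_i}$ of the resource state. After this substitution, the only genuinely quantum resource in the protocol is the preshared state $\omega_{R\hat{B}}^{\otimes n}=\bigotimes_{i=1}^{n}\omega_{R_i\hat{B}_i}$, held by Alice ($R^n=R_1\cdots R_n$) and Bob ($\hat{B}^n=\hat{B}_1\cdots\hat{B}_n$); everything else — the original $k$-extendible assisting channels $\mc{K}^{(1)},\dots,\mc{K}^{(n+1)}$ and the $n$ simulating channels — is a composition of $k$-extendible bipartite channels (interleaved with identity channels on the idle systems), hence again $k$-extendible. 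Here I would invoke closure of the class of $k$-extendible channels under composition and under tensoring with the identity, which follows by composing the corresponding channel extensions exactly as in the proof of Theorem~\ref{thm:main-resource-theory-k-ext}. Collecting these steps, the protocol is equivalent to a single $k$-extendible post-processing channel $\mc{D}_{R^n\hat{B}^n\to M_AM_B}$ applied to $\omega_{R\hat{B}}^{\otimes n}$, whose output satisfies $F(\mc{D}_{R^n\hat{B}^n\to M_AM_B}(\omega_{R\hat{B}}^{\otimes n}),\Phi_{M_AM_B})\geq 1-\varepsilon$ by \eqref{eq:fidelity-assump}. In other words, it is a $(1,M,\varepsilon)$ entanglement distillation protocol for the bipartite state $\omega_{R\hat{B}}^{\otimes n}$ assisted by a $k$-extendible post-processing.

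Finally I would apply the entanglement-distillation half of Theorem~\ref{thm:k-ext-q-cap-bnd}, namely the bound in \eqref{eq:k-ext-end-dist-bnd}, to the state $\omega_{R\hat{B}}^{\otimes n}$ with the identification $A\leftrightarrow R^n$ and $B\leftrightarrow\hat{B}^n$, which immediately yields
\[
-\log_2\!\left[\frac{1}{M}+\frac{1}{k}-\frac{1}{M k}\right]\leq E^\varepsilon_{k}(R^n;\hat{B}^n)_{\omega^{\otimes n}},
\]
as claimed. The one nontrivial point, and hence the main obstacle, is the careful justification of the reduction: one must verify that absorbing the simulation channels into the assisting channels genuinely produces a valid $k$-extendible post-processing — in particular that the permutation-covariance and marginal/no-signaling conditions of Definition~\ref{def:ext-channel} are preserved under composition and under embedding into larger systems — and that the final state and its acceptance test match those of an entanglement distillation protocol precisely in the form demanded by Theorem~\ref{thm:k-ext-q-cap-bnd}.
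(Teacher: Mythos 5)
Your proposal is correct and follows essentially the same route as the paper: the paper likewise invokes the reduction (citing \cite{BDSW96,Mul12,KW17a}) of a $k$-extendible-assisted protocol over a $k$-simulable channel to an entanglement distillation protocol on $\omega_{R\hat{B}}^{\otimes n}$ with $k$-extendible post-processing, and then applies the distillation bound \eqref{eq:k-ext-end-dist-bnd} of Theorem~\ref{thm:k-ext-q-cap-bnd}. Your explicit attention to closure of $k$-extendible channels under composition and tensoring with identities is a sound (and welcome) elaboration of a step the paper leaves implicit.
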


\section{Examples}

\label{sec:example}
We now showcase the above bounds for depolarizing and erasure channels. 

\subsection{Depolarizing Channel}

The action of a qubit depolarizing channel $\mathcal{D}^p_{A\to B}$ on an input state $\rho$ is as follows:
\begin{equation}
\mathcal{D}^p_{A\to B}(\rho) \coloneqq (1-p)\rho + \frac{p}{3}(X\rho X+Y\rho Y + Z \rho Z),
\end{equation}
where $p\in [0,1]$ is the depolarizing parameter and $X$, $Y$, and $Z$ are the Pauli operators. A depolarizing channel is a covariant channel for all $p\in[0,1]$, which is a fact that is easy to see after expressing its action as $\mathcal{D}^p_{A\to B}(\rho) = (1-q)\rho + q I/2$, for $q = 4p/3 $. This property is crucial to  obtain an upper bound on the unextendible $\varepsilon$-hypothesis-testing divergence of the depolarizing channel. 

To this end, we first argue that the optimal input state for $n$ independent uses of the depolarizing channel is an $n$-fold tensor product of the maximally entangled state $\Phi_{RA} = \frac{1}{2} \sum_{i,j\in \{0,1\}} |i\rangle \! \langle j |_R\otimes |i\rangle \! \langle j |_A$. For tensor-product channels, we can restrict the input state to be invariant under permutations of the input systems, due to Lemma~\ref{lemma:covariant_channels} in Appendix~\ref{appendix:exploiting-symmetries}. Also, for covariant channels, the input states that optimize the $k$-extendible relative entropy are of the form given in Lemma~\ref{lemma:covariant_channels} in Appendix~\ref{appendix:exploiting-symmetries}.  Therefore, it suffices to restrict the input state to be a tensor-power maximally entangled state; i.e., we conclude that
\begin{multline}
E_k^{\varepsilon}([\mathcal{D}^p]^{\otimes n}) = \\ \inf_{\sigma_{R^n B^n} \in \kex(R^n:B^n )}D_h^{\varepsilon}([\mathcal{D}^p_{A\to B}(\Phi_{RA})]^{\otimes n} \Vert \sigma_{R^n B^n}).
\end{multline} 

We make a particular choice of the $k$-extendible state $\sigma_{R^n B^n}$ above (which is not necessarily optimal) to be a tensor product of the isotropic states $\sigma_{AB}^{(t,2)}$, defined as
\begin{equation}\label{eq:iso-state}
\rho^{(t,d)}_{AB}=t\Phi^{d}_{AB}+(1-t)\frac{I_{AB}-\Phi^{d}_{AB}}{d^2-1},
\end{equation}
where $\Phi^{d}_{AB}$ denotes a maximally entangled state of Schmidt rank $d$, and $t\in [0,1]$. Note that the action of $\mc{D}^p$ on a maximally entangled state results in an isotropic state $\sigma_{AB}^{(p,2)}$ parametrized by $p$. 
Since the states $\(\sigma^{(p,2)}_{AB}\)^{\otimes n}$ and $\(\sigma^{(t,2)}_{AB}\)^{\otimes n}$ are diagonal in the same basis, the $\varepsilon$-hypothesis testing relative entropy between the two states is equal to the $\varepsilon$-hypothesis testing relative entropy between the product Bernoulli probability distributions $\{1-p,p\}^{\times n}$ and $\{t,1-t\}^{\times n}$. We therefore obtain the following bound on the number of ebits transmitted by $n$ channel uses of the depolarizing channel:
\begin{multline}\label{eq:bound}
\frac{1}{n}\log_2 M \leq \frac{1}{n}\log_2\!\(\frac{k-1}{k}\) -\\ \frac{1}{n}\log_2\!\(2^{-D^\varepsilon_h\(\{1-p,p\}^{\times n}\left\Vert \{t,1-t\}^{\times n}\right)\right.}-\frac{1}{k}\).
\end{multline}
The resulting classical hypothesis testing relative entropy between the product Bernoulli  distributions can be distinguished exactly by the optimal Neyman-Pearson test \cite{PPV10}. 

Note that \eqref{eq:bound} converges to the upper bound given in \cite{TBR15} in the limit as $k \rightarrow \infty$. Refer to Figures~\ref{fig:plots_dep1} and \ref{fig:plots} for a comparison of various upper bounds on the non-asymptotic quantum capacity of the depolarizing channel. For tensor products of the isotropic states $\sigma_{AB}^{(t,2)}$, the numerics suggest that the minimizing state is either a $k=2$ extendible state or a separable state. If the minimizing state is a separable state, then the bound in \eqref{eq:bound} is equal to the TBR bound from \cite{TBR15}. 

\begin{figure}
\centering
\begin{tikzpicture} \label{fig-dep1}
	\begin{axis}[
	scale = 1,
    ylabel=Number of qubits transmitted,
	xlabel=Number $n$ of channel uses,
    xtick={2,5,10,15,20,25},
	xmin = 2,
	xmax = 28,
	ymin = 0,
	ymax = 8,
	tick label style={/pgf/number format/fixed, /pgf/number format/precision=3},
	legend style = {at = {(0.05,0.8)},anchor = north west}, 
	legend cell align = left,]
	\addplot[thick,dashed,red] table[x=number,y=upper_bnd,col sep=comma] {p15.txt};
	\addplot[thick,color=dgreen, dashdotted] table[x=number,y=upper_bnd_TBR,col sep=comma] {p15.txt};
	\addplot[thick,color=blue,dotted] table[x=number,y=WFD,col sep=comma] {p15.txt};
	\legend{\small{KDWW},\small{TBR}, \small{WFD}};
	\end{axis}
	\end{tikzpicture}
\caption{Upper bounds on the number of qubits that can be reliably transmitted over a depolarizing channel with $p=0.1$, and $\varepsilon= 0.05$. The red dashed line is the bound from Theorem \ref{thm:k-ext-q-cap-bnd}. The green dash-dotted and blue dotted lines are upper bounds from \cite{TBR15} and \cite{WFD17}, respectively.}
\label{fig:plots_dep1}%
\end{figure}

\begin{figure}
\begin{tikzpicture}
\label{fig-dep2}
	\begin{axis}[
	scale = 1,
    ylabel=Number of qubits transmitted,
	xlabel=Number $n$ of channel uses,
    xtick={6,10,15,20,25},
	xmin = 6,
	xmax = 28,
	ymin = 0,
	ymax = 1,
	tick label style={/pgf/number format/fixed, /pgf/number format/precision=3},
	legend style = {at = {(0.05,0.80)},anchor = north west}, 
	legend cell align = left,]
	\addplot[thick,dashed,red] table[x=number1,y=upper_bnd,col sep=comma] {p25dep.txt};
	\addplot[thick,color=dgreen, dashdotted] table[x=number1,y=upper_bnd_TBR,col sep=comma] {p25dep.txt};
	\addplot[thick,color=blue,dotted] table[x=number1,y=rate1,col sep=comma] {p25dep.txt};
	\legend{\small{KDWW},\small{TBR}, \small{WFD}};
	\end{axis}
	\end{tikzpicture}
\caption{Upper bounds on the number of qubits that can be reliably transmitted over a depolarizing channel with $p=0.25$, and $\varepsilon= 5\times 10^{-5}$. The red dashed line is the bound from Theorem~\ref{thm:k-ext-q-cap-bnd}. The green dash-dotted and blue dotted lines are upper bounds from \cite{TBR15} and \cite{WFD17}, respectively.}%
\label{fig:plots}%
\end{figure}

\subsection{Erasure channel}

The action of a qubit erasure channel \cite{GBP97} on an input density operator $\rho$ is as follows:
\begin{equation}
\mathcal{E}^p_{A\rightarrow B}(\rho_A) \coloneqq  (1-p)\rho_{B}+ p \op{e}_B,
\end{equation}
where $p\in[0,1]$ is the erasure parameter  and $\op{e}$ is a pure state, orthogonal to any input state. The optimal input state for $n$ uses of the erasure channel, when considering its unextendible generalized divergence, is the $n$-fold tensor product maximally entangled state $\Phi_{A'A}^{\otimes n}$. This follows also from the covariance of the erasure channel and Lemma~\ref{lemma:covariant_channels}. 

Our goal is to obtain upper bounds on the entanglement transmission rate when using  the erasure channel $n$ times. Consider sending $n$ shares of the maximally entangled state $\Phi_{AA'}$ over $n$ uses of the erasure channel $\mc{E}^p_{A'\rightarrow B}$. The output state $\rho_{A_1B_1A_2B_2\cdots A_nB_n}$ has the form
\begin{equation}
\label{eq:erasure_min}
\rho_{A_1B_1A_2B_2\cdots A_nB_n}=\sum_{x^n \in\{0,1\}^n} p(x^n)\(\bigotimes_{j=1}^n\tau^{x_j}_{A_jB_j}\),
\end{equation}
where for all $j\in[n]$,
\begin{equation}
\tau^{x_j}_{A_jB_j}\in \left\{\Phi_{A_jB_j},\pi_{A_j}\otimes\op{e}_{B_j}\right\},
\end{equation}
and for all $x^n\in\{0,1\}^n$, $p(x^n) \in [0,1]$ is a product distribution such that $\sum_{x\in\{0,1\}^n}p(x^n)=1$. Due to an i.i.d.~application of the channels, we find that the probabilities $p(x^n)$ corresponding to a state $\tau_{A_1B_1A_2B_2\cdots A_nB_n}^{x^n}$ with the same number of  erasure symbols are equal. The total probability for having $\ell$ erasure symbols in the state $\rho_{A_1B_1A_2B_2\cdots A_nB_n}$ is equal to $\binom{n}{\ell}(1-p)^{n-\ell} p^{\ell}$, where $\ell \in \{0,\ldots, n\}$.

Without loss of generality, the block-diagonal form of the output state of $n$ uses of an erasure channel, when inputting a tensor-power maximally entangled state, allows us to restrict the class of $k$-extendible states $\sigma\in\kex(A^n;B^n)$, over which we optimize the unextendible $\varepsilon$-hypothesis testing relative entropy, to be of the form in \eqref{eq:erasure_min}, except with $p(x^n)$ a probability distribution that is not necessarily product and chosen such that the state is $k$-extendible. This follows because the state $\rho_{A_1B_1A_2B_2\cdots A_nB_n}$ is invariant under $n$ independent bilateral twirls, along with $n$ independent and incomplete measurements of the form $\{ \vert 0 \rangle \! \langle 0 \vert + \vert 1 \rangle \! \langle 1 \vert, \vert e \rangle \! \langle e \vert\}$ by Bob, while such a 1W-LOCC channel symmetrizes the $k$-extendible state to have the aforementioned form. We let $\sigma_{A_1B_1A_2B_2\cdots A_nB_n}$ be of the form in \eqref{eq:erasure_min} with coefficients (probabilities) set to $q(x^n)$. Furthermore, we note that $\rho_{A_1B_1A_2B_2\cdots A_nB_n}$ is permutation invariant after Alice and Bob perform a coordinated random permutation channel on their composite systems locally. This allows us to restrict the form of $\sigma_{A_1B_1A_2B_2\cdots A_nB_n}$ to be permutation invariant under such a symmetrizing permutation channel because it is a $k$-extendible channel. 

From the argument above, we find that the minimizing state has the block structure given in \eqref{eq:erasure_min}, and the coefficients for states in the sum with the same number of erasure symbols are equal. We now want to obtain conditions on the probabilities $q(x^n)$, where $x^n \in \{0,1\}^n$, from the $k$-extendibility of the state $\sigma_{A_1B_1A_2B_2\cdots A_nB_n}$. The constraints that we impose on $q(x^n)$ are not unique. That is, there could exist other constraints such that the state $\sigma_{A_1B_1A_2B_2\cdots A_nB_n}$ is still $k$-extendible.

Let us first consider $n=2$ channel uses. By what we discussed above, the minimizing $k$-extendible state $\sigma_{A_1B_1A_2B_2}$ then has the following form
\begin{multline}\label{k-ext-erasure}
\sigma_{A_1B_1A_2B_2} \coloneqq \\ c_0\Phi_{A_1B_1} \otimes \Phi_{A_2B_2}+
c_1\left(\Phi_{A_1B_1}\otimes \pi_{A_2}\otimes\op{e}_{B_2}\right. \\+ \left.\Phi_{A_2B_2} \otimes \pi_{A_1}\otimes\op{e}_{B_1}\right)\\
+  c_2\pi_{A_1}\otimes\op{e}_{B_1}\otimes \pi_{A_2}\otimes\op{e}_{B_2},
\end{multline}
where $\{c_i\}_i$ for $i\in\{0,1,2\}$ is a probability distribution such that $c_0 + 2 c_1 +  c_2 =1$. 
Focusing on the special case $k=2$, we now want to obtain constraints on each $c_i$ such that $\sigma_{A_1B_1A_2B_2}$ is a two-extendible state. To this end, we replace all the terms $\Phi_{A_iB_i}$ in the above state with the two-extendible state $\frac{1}{2}\Phi_{A_iB_i}+\left(1-\frac{1}{2}\right)\pi_{A_i}\otimes \op{e}_{B_i}$. We obtain the following state, which is guaranteed to be two-extendible by construction: 
\begin{multline}
\frac{c_0}{4} \Phi_{A_1B_1}\otimes \Phi_{A_1B_1} \\+  \left(\frac{c_0}{4}+\frac{c_1}{2}\right)
\left(\Phi_{A_1B_1}\otimes\pi_{A_2}\otimes\op{e}_{B_2}\right. +\\ \left.\pi_{A_1}\otimes\op{e}_{B_1}\otimes \Phi_{A_2B_2}\right)\\ +\left(\frac{c_0}{4}+c_1+c_2\right)\left(\pi_{A_1}\otimes\op{e}_{B_1} \otimes\pi_{A_2}\otimes\op{e}_{B_2} \right).
\end{multline}
Abbreviating the new coefficients as $b_0$, $b_1$, and $b_2$, the above approach leads to the following constraint on them such that the state $\sigma_{A_1B_1A_2B_2}$ is two-extendible:
\begin{align}
\begin{bmatrix}
b_0\\
b_1\\
b_2
\end{bmatrix}&=
\begin{bmatrix}
\frac{1}{4} &0 &0\\
\frac{1}{4}& \frac{1}{2}&0\\
\frac{1}{4} & 2\cdot\frac{1}{2}& 1
\end{bmatrix}
\begin{bmatrix}
c_0\\
c_1\\
c_2
\end{bmatrix}.
\end{align}

We now generalize the above procedure of obtaining two-extendible states for two channel uses to obtaining $k$-extendible states for $n$ channel uses. We obtain the following condition on the coefficients $b_i$:
\begin{align}
    \begin{bmatrix}
    b_0\\
    b_1\\
    b_2\\
    \vdots\\
    b_n
    \end{bmatrix} &=       
\mathbf{M}
          \begin{bmatrix}
           \binom{n}{0}c_{0} \\
           \binom{n}{1}c_{1} \\
           \binom{n}{2}c_{2} \\
           \vdots \\
           \binom{n}{n}c_{n} \\
         \end{bmatrix},
      \end{align}
where the general form of the matrix $\mathbf{M}_{(n+1)\times (n+1)}=[m_{u,v}]$ is given as 
\begin{align}
m_{u,v}=\binom{n-v}{u-v}\left(1-\frac{1}{k}\right) ^{u-v} \left(\frac{1}{k}\right)^{n-u}
\end{align}
if $u\geq v$,
and otherwise, $m_{u,v} = 0$,
where $n$ is the number of channel uses and $u,v \in \{0,\ldots, n\}$. The coefficients are such that $c_0, c_1, \ldots, c_n\in [0,1]$ and $\sum_{j=0}^n \binom{n}{j} c_j=1$. 
We then have that
\begin{multline}
\inf_{\sigma'_{A_1B_1\cdots A_nB_n}\in \kex}D^{\varepsilon}_h(\rho_{A_1B_2\cdots A_nB_n}\Vert\sigma'_{A_1B_1\cdots A_nB_n})\\
\leq \min_{b_0,b_1,\ldots, b_n}D^{\varepsilon}_h(\left\{a_0,a_1,\ldots, a_n\right\}\Vert \left\{b_0,b_1,\ldots, b_n\right\}),\label{SDP_erasure}
\end{multline}
where the distribution $\{a_0,a_1,\ldots, a_n\}$ is induced by measuring the number of erasures in $\rho_{A_1B_2\cdots A_nB_n}$ and the coefficients  $\{b_0,b_1,\ldots, b_n\}$ are chosen as discussed above.
The inequality follows from restricting the form of the minimizing state. By exploiting the dual formulation of the hypothesis testing relative entropy \cite{DKFRR13}, we can now write the expression in \eqref{SDP_erasure} as the following linear program:
\begin{multline}
\min_{c_0,c_1,\ldots, c_n}D^{\varepsilon}_h\left(\left\{a_0,a_1,\ldots, a_n\right\}\Vert \left\{b_0,b_1,\ldots, b_n\right\}\right)
= \\-\log_2\!\left(\max_{\left\{c_0,c_1,\ldots, c_n\right\},\{\alpha_i\}_i,y} y (1-\varepsilon) -\sum_{i=0}^n\alpha_i \right),
\end{multline}
such that
\begin{align}
\forall i \in [0,n], &\quad \alpha_i -ya_i+b_i \geq 0,\\
& \quad b_i = \sum_{j=0}^n m_{i,j} c_j ,\\
&\quad 0 \leq c_i \leq 1,\\
&\quad y \geq 0, \ \alpha_i \geq 0,\\
&\quad\sum_{j=0}^n \binom{n}{j} c_j=1.
\end{align}
For the plots in Figures~\ref{fig:plots_erasure1} and \ref{fig:plots_erasure}, we have taken $\sigma_{A_1B_1A_2B_2\cdots A_nB_n}$ to be in a particular set of extendible states as defined above. Within this set, we have optimized over at most $k=10$ extendible states. 
\begin{figure}
\centering
\begin{tikzpicture}\label{fig-erasure1}
	\begin{axis}[
	scale = 1,
    ylabel=Number of qubits transmitted,
	xlabel=Number $n$ of channel uses,
    xtick={5,10,15,20,25},
	xmin = 2,
	xmax = 20,
	ymin = 0,
	ymax = 8,
	tick label style={/pgf/number format/fixed, /pgf/number format/precision=3},
	legend style = {at = {(0.05,0.8)},anchor = north west}, 
	legend cell align = left,]
	\addplot[thick,dashed,red] table[x=number1,y=raten,col sep=comma] {p.35_erasure_channel_final.txt};
	\addplot[thick,color=dgreen, dashdotted] table[x=number1,y=tbr_raten,col sep=comma] {p.35_erasure_channel_final.txt};
	
	\legend{\small{KDWW},\small{TBR}, \small{WFD}};
	\end{axis}
	\end{tikzpicture}
\caption{Upper bounds on the number of qubits that can be reliably transmitted over an erasure channel with $p=0.35$, and $\varepsilon= 0.05$. The red dashed line is the bound from Theorem \ref{thm:k-ext-q-cap-bnd}. The green dash-dotted line is an upper bound from \cite{TBR15}. }
\label{fig:plots_erasure1}
\end{figure}

\begin{figure}
\begin{tikzpicture}
	\begin{axis}[
	scale = 1,
    ylabel=Number of qubits transmitted,
	xlabel=Number $n$ of channel uses,
    xtick={6,10,15,20,25},
	xmin = 3,
	xmax = 23,
	ymin = 0,
	ymax = 8,
	tick label style={/pgf/number format/fixed, /pgf/number format/precision=3},
	legend style = {at = {(0.05,0.80)},anchor = north west}, 
	legend cell align = left,]
	\addplot[thick,dashed,red] table[x=number1,y=raten,col sep=comma] {p.5_erasure_channel_final.txt};
	\addplot[thick,color=dgreen, dashdotted] table[x=number1,y=tbr_raten,col sep=comma] {p.5_erasure_channel_final.txt};

	\legend{\small{KDWW},\small{TBR}, \small{WFD}};
	\end{axis}
	\end{tikzpicture}
\caption{Upper bounds on the number of qubits that can be reliably transmitted over an erasure channel with $p=0.49$, and $\varepsilon= 0.05$. The red dashed line is the bound from Theorem \ref{thm:k-ext-q-cap-bnd}. The green dash-dotted line is an upper bound from \cite{TBR15}.}%
\label{fig:plots_erasure}%
\end{figure}

\section{Pretty strong converse for antidegradable channels}

\label{sec:pretty-strong}

As a direct application of Theorem~\ref{theorem:adaptive-protocols}, we revisit the ``pretty strong converse'' of \cite{MW13} for antidegradable channels. A channel $\mathcal{N}_{A \to B}$ is antidegradable \cite{CG06,M10} if the output state $\mathcal{N}_{A \to B}(\rho_{RA})$ is two-extendible for every input state $\rho_{RA}$. Due to this property, antidegradable channels have zero asymptotic quantum capacity \cite{PhysRevLett.78.3217,Holevo2008}. Theorem~\ref{theorem:adaptive-protocols} implies the following bound for the non-asymptotic case:
\begin{corollary}
Fix $\varepsilon \in [0,1/2)$. The following bound holds for every $(n,M,\varepsilon)$  quantum communication protocol employing $n$ uses of an antidegradable channel $\mathcal{N}$ interleaved by two-extendible channels:
\begin{equation}
\frac{1}{n}\log_2 M
 \leq \frac{1}{n}  \log_2\!\left(\frac{1}{1-2\varepsilon}\right). 
\end{equation}
\end{corollary}

\begin{proof}
Let $\mathcal{N}_{A\rightarrow B}$ be an antidegradable channel, and suppose
that $\rho_{RA}$ is a state input to the channel. Then the output state
$\mathcal{N}_{A\rightarrow B}(\rho_{RA})$ is always a two-extendible state
(due to anti-degradability) \cite{M10}. As a direct consequence of Theorem~\ref{theorem:adaptive-protocols}, the following bound applies to every $(n,M,\varepsilon)$  quantum communication protocol employing $n$ uses of an antidegradable channel $\mathcal{N}$ interleaved by two-extendible channels:
\begin{equation}
-\frac{1}{n}\log_{2}\!\left[  \frac{1}{M}+\frac{1}{2}-\frac{1}{2M}\right]  \leq\frac{1}%
{n}\log_{2}\!\left(  \frac{1}{1-\varepsilon}\right)  .
\end{equation}
This follows by setting $k=2$ and noticing that $\sup_{\psi_{RA}} E^{\max}_{k}(R;B)_{\tau}= 0 $, where 
 $\tau_{RB}\coloneqq \mc{N}_{A\to B}(\psi_{RA})$, for such antidegradable channels.
 After some basic algebraic steps, for $\varepsilon < 1/2$, we
can rewrite this bound as%
\begin{equation}
\frac{1}{n}\log_{2}M\leq\frac{1}{n}\log_{2}\!\left[  \frac{1}{2\left(
1-\varepsilon\right)  -1}\right]  .
\end{equation}
These steps are as follows:%
\begin{align}
-\frac{1}{n}\log_{2}\!\left[  \frac{1}{M}+\frac{1}{2}-\frac{1}{2M}\right]    & \leq
\frac{1}{n}\log_{2}\!\left(  \frac{1}{1-\varepsilon}\right)  \nonumber\\
\Leftrightarrow\log_{2}\!\left[  \frac{2M}{M+1}\right]    & \leq\log_{2}\!\left(
\frac{1}{1-\varepsilon}\right)  \\
\Leftrightarrow\frac{2}{1+1/M}  & \leq\frac{1}{1-\varepsilon}\\
\Leftrightarrow2\left(  1-\varepsilon\right)    & \leq1+1/M\\
\Leftrightarrow2\left(  1-\varepsilon\right)  -1  & \leq1/M \\
\Leftrightarrow  1- 2 \varepsilon  & \leq1/M.
\end{align}
This concludes the proof.
\end{proof}

We conclude from the above inequality  that, for an antidegradable channel, there is a strong limitation on its ability to generate entanglement whenever the error parameter $\varepsilon < \tfrac12$, as is usually desired for applications in quantum computation. We also remark that the bound above is tighter than related bounds given in \cite{MW13}, and furthermore, the bound applies to quantum communication protocols assisted by interleaved two-extendible channels, which were not considered in~\cite{MW13}.

More generally, if the output of the channel is always a $k$-extendible state, then we have the following bound:
\begin{corollary}
\label{cor:pretty-strong-k}
Fix $\varepsilon \in [0,1-1/k)$. Let
$\mathcal{N}_{A \to B}$ be a $k$-extendible channel, in the sense that $\mathcal{N}_{A \to B}(\rho_{RA})$ is $k$-extendible for every input state $\rho_{RA}$.
Then the following bound holds for  every $(n,M,\varepsilon)$  quantum communication protocol employing $n$ uses of the channel $\mathcal{N}$ interleaved by $k$-extendible channels:
\begin{equation}
\frac{1}{n}\log_{2}M\leq\frac{1}{n}\log_{2}\!\left(\frac{1}{  1-\frac{k}{k-1}\varepsilon
}\right).
\end{equation}
\end{corollary}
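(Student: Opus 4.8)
The plan is to derive this as an immediate specialization of the interleaved‑protocol converse bound in \eqref{eq:k-ext-q-cap-bnd-max} (equivalently, the third theorem of the main text), followed by elementary algebra. First I would argue that $E^{\max}_k(\mathcal{N})=0$ whenever $\mathcal{N}_{A\to B}$ outputs a $k$-extendible state on every input. Indeed, for any pure state $\psi_{RA}$, the state $\tau_{RB}\coloneqq\mathcal{N}_{A\to B}(\psi_{RA})$ lies in $\kex(R;B)$, so the infimum defining $E^{\max}_k(\mathcal{N})$ may be bounded as $\inf_{\sigma_{RB}\in\kex(R;B)}D_{\max}(\tau_{RB}\Vert\sigma_{RB})\leq D_{\max}(\tau_{RB}\Vert\tau_{RB})=0$; since $D_{\max}$ is nonnegative between states, this forces $E^{\max}_k(\mathcal{N})=0$. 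Substituting into \eqref{eq:k-ext-q-cap-bnd-max} reduces the converse bound to
\[
-\frac{1}{n}\log_2\!\left[\frac{1}{M}+\frac{1}{k}-\frac{1}{Mk}\right]\leq \frac{1}{n}\log_2\!\left(\frac{1}{1-\varepsilon}\right).
\]

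Next I would clear logarithms. Multiplying the displayed inequality by $-n<0$ and using $-\log_2(1/(1-\varepsilon))=\log_2(1-\varepsilon)$, it is equivalent to
\[
1-\varepsilon\leq\frac{1}{M}+\frac{1}{k}-\frac{1}{Mk}=\frac{M+k-1}{Mk}.
\]
Multiplying through by $Mk>0$ and grouping the terms containing $M$ gives $M\bigl(k(1-\varepsilon)-1\bigr)\leq k-1$. The hypothesis $\varepsilon\in[0,1-1/k)$ is precisely what makes the coefficient $k(1-\varepsilon)-1$ strictly positive, so dividing by it yields
\[
M\leq\frac{k-1}{k(1-\varepsilon)-1},
\]
and dividing numerator and denominator on the right by $k-1$ rewrites this as $M\leq\bigl(1-\tfrac{k}{k-1}\varepsilon\bigr)^{-1}$. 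Taking $\log_2$ and dividing by $n$ produces the claimed bound.

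The only step requiring genuine care is the observation that $k(1-\varepsilon)-1>0$ under the assumed range of $\varepsilon$: this is what legitimizes the division and ensures the right‑hand side is positive, so that the final logarithm is well defined (for $\varepsilon\geq 1-1/k$ the statement would be vacuous). I expect no substantive obstacle here; the corollary is a direct consequence of the interleaved‑protocol bound together with the vanishing of the $k$-unextendible max‑relative entropy on channels whose outputs are always $k$-extendible, exactly as in the $k=2$ antidegradable case treated above.
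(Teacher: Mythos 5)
Your proposal is correct and follows essentially the same route as the paper: invoke the interleaved-protocol converse with $E^{\max}_k(\mathcal{N})=0$ (since every output is $k$-extendible, the infimum is achieved at $\sigma_{RB}=\tau_{RB}$), and then rearrange $1-\varepsilon\leq\frac{1}{M}+\frac{1}{k}-\frac{1}{Mk}$ into the stated bound. Your explicit remark that $\varepsilon<1-1/k$ is what makes the coefficient $k(1-\varepsilon)-1$ positive is a nice touch that the paper leaves implicit in the corollary's hypothesis.
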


\begin{proof}
This follows by the same reasoning as in the previous proof. If the output of the channel is $k$-extendible, then employing Theorem~\ref{theorem:adaptive-protocols} gives that%
\begin{equation}
-\frac{1}{n}\log_{2}\!\left[  \frac{1}{M}+\frac{1}{k}-\frac{1}{M k}\right]  \leq\frac{1}%
{n}\log_{2}\!\left(  \frac{1}{1-\varepsilon}\right)  .
\end{equation}
We then employ the following algebraic steps:
\begin{align}
-\frac{1}{n}\log_{2}\!\left[  \frac{1}{M}+\frac{1}{k}-\frac{1}{M k}\right]    & \leq
\frac{1}{n}\log_{2}\!\left(  \frac{1}{1-\varepsilon}\right)  \\
-\frac{1}{n}\log_{2}\!\left[  \frac{k-1+M}{kM}\right]    & \leq\frac{1}{n}%
\log_{2}\!\left(  \frac{1}{1-\varepsilon}\right)  \\
\frac{kM}{k-1+M}  & \leq\frac{1}{1-\varepsilon}\\
\frac{k}{\left(  k-1\right)  /M+1}  & \leq\frac{1}{1-\varepsilon}\\
k\left(  1-\varepsilon\right)    & \leq\left(  k-1\right)  /M+1\\
\left[  \frac{k\left(  1-\varepsilon\right)  -1}{k-1}\right]    & \leq1/M\\
  1 - \frac{k}{k-1}\varepsilon    & \leq1/M .
\end{align}
We then get that%
\begin{equation}
\frac{1}{n}\log_{2}M\leq
\frac{1}{n}\log_{2}\!\left(\frac{1}{  1-\frac{k}{k-1}\varepsilon
}\right).
\end{equation}
This concludes the proof.
\end{proof}

\bigskip
Thus, for a fixed $\varepsilon\in\left[  0,1-1/k\right)  $, we conclude that the
rate of quantum communication for a single-sender single-receiver $k$-extendible channel decays to zero as $n\rightarrow\infty$.
Related, if the communication rate for a sequence of codes used over such a channel is strictly greater than zero, then it must be the case that the
error in communication is greater than or equal to $1-1/k$, which is a higher jump than discussed in the previous case. An example of a channel for which this effect occurs is a quantum erasure channel with erasure probability $1-1/k$.

Another example of a channel for which the bound in Corollary~\ref{cor:pretty-strong-k} holds is the universal cloning machine channel (a $1 \to k$ universal quantum cloner followed by a partial trace over $k-1$ of the clones) \cite{RevModPhys.77.1225}. When the dimension of the channel input is $M$, the bound in Corollary~\ref{cor:pretty-strong-k} is in fact saturated, as observed in the proof of \cite[Theorem~III.8]{JV13}.

\section{Conclusions}

\label{sec:conclusion}

In this paper, we obtained tight non-asymptotic bounds on the rates of entanglement transmission of a channel assisted by a $k$-extendible channel. To obtain these tight bounds, we developed the resource theory of unextendibility. The free states in this resource theory are $k$-extendible states, which have been studied previously for quantifying the entanglement present in a quantum state. We define $k$-extendible channels, and prove that these are free channels in the resource theory of $k$-unextendibility.  We then obtain non-asymptotic upper bounds on the rate at which qubits can be transmitted over a finite number of uses of a given quantum channel, by utilizing the monotones introduced for the resource theory of unextendibility. We show that these bounds are significantly tighter than those in \cite{TBR15,WFD17} for depolarizing and erasure channels.  

An interesting research direction would be to further explore the resource theory of unextendibility. One plausible direction would be to use this resource theory to obtain non-asymptotic converse bounds on the entanglement distillation rate of bipartite quantum interactions and compare with the bounds obtained in \cite{DBW17}. Another direction is to analyze the bounds in Theorem~\ref{thm:k-ext-q-cap-bnd} for other noise models that are practically relevant. Finally, it remains open to link the bounds developed here with the open problem of finding a strong converse for the quantum capacity of degradable channels \cite{MW13}. To solve that problem, recall that one contribution of \cite{MW13} was to reduce the question of the strong converse of degradable channels to that of establishing the strong converse for symmetric channels.

\begin{acknowledgments}
We thank Sumeet Khatri, Vishal Katariya, Felix Leditzky, and Stefano Mancini for insightful discussions. SD acknowledges support from the LSU Graduate School Economic Development Assistantship. EK and MMW acknowledge support from the US Office of Naval Research and the National Science Foundation under Grant No.~1350397. Andreas Winter acknowledges support from the ERC
Advanced Grant IRQUAT, the Spanish MINECO, projects FIS2013-40627-P and FIS2016-86681-P, with
the support of FEDER funds, and the Generalitat de Catalunya, CIRIT project 2014-SGR-966.
\end{acknowledgments}

\bibliographystyle{alpha}
\bibliography{kex-prl}

\vspace{0.5in}
\appendix
\section{Class of $k$-extendible channels}\label{sec:subclass-channels}

Before stating the proposition, we state an alternate representation of $1$W-LOCC channels, which is of relavance in the proof. $1$W-LOCC channels can also be represented as 
\begin{equation}\label{eq:1W-LOCC}
\mc{D}_{C'B\to B'}\circ\mc{P}_{\bar{C}\to C'} \circ \mc{M}_{C\to \bar{C}}\circ\mc{E}_{A\to A'C},
\end{equation}
where $\mc{E}_{A\to A'C}$ is an arbitrary channel, $\mc{M}_{C\to \bar{C}}$ is a measurement channel, $\mc{P}_{\bar{C}\to C'}$ is a preparation channel, such that $\bar{C}$ is a classical system, and $\mc{D}_{C'B\to B'}$ is an arbitrary channel. 

\begin{proposition}
The diamond distance of the channel $\mathcal{K}_{AB\rightarrow A'B'}^k$ in \eqref{eq:subclass_def} to a $1$W-LOCC channel is bounded from above as 
\begin{multline}
\inf_{\mc{L}_{AB \to A'B'}\in1\mathrm{W-LOCC}}\left\|\mathcal{K}_{AB\rightarrow A'B'}^k - \mc{L}_{AB \to A'B'}\right\|_{\diamond}\\\leq |C|\frac{2|C|^2}{|C|^2+k},
\end{multline}
where $|C|=|ABA'B'|$, and 1W-LOCC denotes the set of all 1W-LOCC channels acting on input systems $AB$ and with output systems $A'B'$.
\end{proposition}
\begin{proof}

Letting $\mc{S}^k_{C\to C_1'C_2'\cdots C_k'}$ denote an extension channel for $\mc{A}^k_{C\to C'}$,  observe that
\begin{align}
{}&\inf_{\mc{L}_{AB \to A'B'}\in1\mathrm{W-LOCC}}\left\|\mathcal{K}_{AB\rightarrow A'B'}^k\right. -\left. \mc{L}_{AB \to A'B'}\right\|_{\diamond} \notag \\
\begin{split}
    {}&\leq {}\inf_{\mc{P}\circ\mc{M}} \left\|\Tr_{C^{k-1}}\circ\mc{S}^k_{C\to C_1'C_2'\cdots C_k'} \circ\mc{E}_{A\to A'C}\right.\\&\qquad\left.-\mc{P}_{\bar{C}\to C'} \circ \mc{M}_{C\to \bar{C}}\circ\mc{E}_{A\to A'C}\right\|_{\diamond}
\end{split}\\
\begin{split}
   {}&=\inf_{\mc{P}\circ\mc{M}}\max_{\psi_{RA}}\left\|\Tr_{C^{k-1}}\circ\mc{S}^k_{C\to C_1'C_2'\cdots C_k'} \circ\mc{E}_{A\to A'C}(\psi_{RA})\right.\\&\qquad\left.-\mc{P}_{\bar{C}\to C'} \circ \mc{M}_{C\to \bar{C}}\circ\mc{E}_{A\to A'C}\(\psi_{RA}\)\right\|_1 
\end{split}\\
\quad{}&\leq \inf_{\mc{P}\circ\mc{M}}\norm{\Tr_{C^{k-1}}\circ\mc{S}^k_{C\to C_1'C_2'\cdots C_k'} - \mc{P}_{\bar{C}\to C'} \circ \mc{M}_{C\to \bar{C}}}_{\diamond}.
\end{align}
The first inequality follows from \eqref{eq:subclass_def}, by choosing a particular $1$W-LOCC and from the monotonicity of trace norm with respect to quantum channels. The first equality follows from the definition of diamond distance. The second inequality follows from the definition of diamond distance, which has an implicit maximization over all the input states. 
We now observe that 
\begin{align}
&\inf_{\mc{P}\circ\mc{M}}\left\|\Tr_{C'^{k-1}}\circ\mc{S}^k_{C\to C_1'C_2'\cdots C_k'} - \mc{P}_{\bar{C}\to C'} \circ \mc{M}_{C\to \bar{C}}\right\|_{\diamond} \notag \\
&\quad\leq |C| \inf_{\Gamma^{EB}_{R'C'}}\left\|\Gamma^{k,\mathcal{S}}_{R'C'}/|C|-\Gamma^{EB}_{R'C'}/|C|\right\|_1\\
&\quad\leq |C|\frac{2|C'|^2}{|C'|^2+k},
\end{align}
where
\begin{align}
\Gamma^{k,\mathcal{S}}_{R'C'}/|C| & = \Tr_{C'^{k-1}}\circ\mc{S}^k_{C\to C_1'C_2'\cdots C_k'}\(\Phi_{RC}\)\\
&\in\kex(R\!:\!C'), \\
\Gamma^{EB}_{R'C'}/|C| & =\mc{P}_{\bar{C}\to C'}\circ\mc{M}_{C\to\bar{C}}\(\Phi_{RC}\)\\
&\in\SEP(R\!:\!C').
\end{align}
The first inequality follows from bounding the diamond distance between the two channels by the trace norm between the corresponding Choi operators (see, e.g., \cite[Lemma~7]{WF14}). The last inequality follows from \cite[Eq.~(11)]{CJYZ16}, which in turn built on the developments in \cite{CKMR07}.  
\end{proof}


\section{Amortization does not enhance the max-$k$-unextendibility of a channel}

\label{app:amort-collapse}

The amortized entanglement $E_A(\mc{N})$ of a channel $\mc{N}_{A\to B}$ is defined as the following optimization~\cite{KW17a} (see also \cite{LHL03,BHLS03,CM17,DDMW17,RKB+17}):
\begin{equation}\label{eq:ent-arm}
E_A(\mc{N})\coloneqq \sup_{\rho_{R_AAR_B}} \left[E(R_A;BR_B)_{\tau}-E(R_AA;R_B)_{\rho}\right],
\end{equation}
where $E$ is an entanglement measure, $\tau_{R_ABR_B}=\mc{N}_{A\to B}(\rho_{R_AAR_B})$ for a state $\rho_{R_AAR_B}$ and $R_A, R_B$ are reference systems associated with the systems $A,B$, respectively. The supremum is with respect to all input states $\rho_{R_AAR_B}$ and the systems $R_A,R_B$ are finite-dimensional but could be arbitrarily large. Thus, in general, $E_A(\mc{N})$ need not be computable. The amortized entanglement quantifies the net amount of entanglement that can be generated by using the channel $\mc{N}_{A\to B}$, if the sender and the receiver are allowed to begin with some initial entanglement in the form of the state $\rho_{R_AAR_B}$. That is, $E(R_AA;R_B)_\rho$ quantifies the entanglement of the initial state $\rho_{R_AAR_B}$, and $E(R_A;BR_B)_{\tau}$ quantifies the entanglement of the final state produced after the action of the channel. 

The purpose of this appendix is to prove that the unextendible max-relative entropy of a quantum channel does not increase under amortization. Similar results are known for the squashed entanglement of a channel \cite{TGW14}, a channel's  max-relative entropy of entanglement \cite{CM17}, and the max-Rains information of a quantum channel \cite{BW17}. Our proof of this result is strongly based on the approach given in \cite{BW17}, which in turn made use of the developments in \cite{WFD17}.
 
We begin by establishing equivalent forms for the unextendible max-relative entropy of a state and a channel. Let $\overrightarrow{\EXT}_k(A;B)$ denote the cone of all $k$-extendible operators. This set is defined in the same way as the set of $k$-extendible states, but there is no requirement for a $k$-extendible operator to have trace equal to one. Then we have the following alternative expression for the max-relative entropy of unextendibility:

\begin{lemma}\label{lemma-6}
Let $\rho_{AB}\in\mc{D}(\mc{H}_A\otimes\mc{H}_B)$. Then 
\begin{equation}
E^{\max}_{k}(A;B)_{\rho}=\log_2 W_{k}(A;B)_{\rho},
\end{equation}
where 
\begin{equation}
W_k(A;B)_{\rho}\coloneqq \inf_{X_{AB}\in\overrightarrow{\EXT}_k(A;B)}\{\Tr\{X_{AB}\}:\rho_{AB}\leq X_{AB}\}.
\end{equation}
\end{lemma}
\begin{proof}
Employing the definition of $k$-unextendible max-relative entropy, consider that
\begin{align}
&E_{k}^{\max}(A;B)_{\rho}\nonumber \\=& \inf_{\sigma_{AB}\in\kex(A:B)}D_{\max}(\rho_{AB}\Vert \sigma_{AB})\\
=&\log_2 \inf_{\mu,\sigma_{AB}}\{\mu:\rho_{AB}\leq \mu\sigma_{AB},\sigma_{AB}\in\kex(A\!:\!B)\}\\
=& \log_2 \inf_{X_{AB}}\{\Tr\{X_{AB}\}:\rho_{AB}\leq X_{AB}, \notag \\
& \qquad\qquad\qquad\qquad\qquad X_{AB}\in\overrightarrow{\EXT}_k(A;B)\}.
\end{align}
This concludes the proof.
\end{proof}

Let $E^{\max}_{k}(\mc{N})$ denote the unextendible max-relative entropy of a channel $\mc{N}$, as defined in \eqref{eq:unex-channel-def}, but with the generalized divergence $\tf{D}$ replaced
by the max-relative entropy $D_{\max}$.
We can write $E^{\max}_k(\mc{N})$ in an alternate way, by employing similar reasoning as given in the proof of \cite[Lemma 6]{CMW14}:
\begin{multline}\label{equation-67}
E_k^{\max}(\mc{N})\\=\max_{\rho_S\in\mc{D}(\mc{H}_S)}\inf_{\sigma_{SB}\in\kex(S;B)}D_{\max}(\rho_S^{1/2}\Gamma^{\mc{N}}_{SB}\rho^{1/2}_{S}\Vert \sigma_{SB}),
\end{multline}
where $\Gamma^{\mc{N}}_{SB}$ is the Choi operator for the channel $\mc{N}$.

An alternative expression for the unextendible max-relative entropy $E^{\max}_k(\mc{N})$ of the channel $\mc{N}$ is given by the following lemma:

\begin{lemma}\label{lemma-7}
For any quantum channel $\mc{N}_{A\to B}$,
\begin{equation}
E^{\max}_k(\mc{N})=\log_2 \Sigma_k(\mc{N}),
\end{equation}
where 
\begin{equation}
\Sigma_k(\mc{N})=\inf_{Y_{SB}\in\overrightarrow{\EXT}_k(S;B)}\{\norm{\Tr_B\{Y_{SB}\}}_{\infty}:\Gamma^{\mc{N}}_{SB}\leq Y_{SB}\} ,
\end{equation}
and $\Gamma^{\mc{N}}_{SB}$ is the Choi operator for the channel $\mc{N}_{A \to B}$.
\end{lemma}

\begin{proof}
The proof follows by employing \eqref{equation-67} and Lemma~\ref{lemma-6}, and following arguments similar to those needed to prove \cite[Lemma 7]{BW17}, given that $\overrightarrow{\EXT}_k$ is also a cone.
\end{proof}

\begin{proposition}[Amortization inequality]\label{prop:amort}
Let $\rho_{R_AAR_B}$ be a state, and let $\mc{N}_{A\to B}$ be an arbitrary quantum channel. 
Then the following inequality holds for the $k$-unextendible max-relative-entropy of a channel $\mc{N}$:
\begin{equation}
E^{\max}_k(R_A;BR_B)_{\omega} \leq E^{\max}_k(R_AA;R_B)_{\rho}+ E^{\max}_{k}(\mc{N}),
\end{equation}
where $\omega_{R_ABR_B}\coloneqq \mc{N}_{A\to B}(\rho_{R_AAR_B})$.   
\end{proposition}

\begin{proof}
We adapt the proof steps of \cite[Proposition 8]{BW17} to show that amortization does not enhance the unextendible max-relative entropy of an arbitrary channel. 

By removing logarithms and applying Lemmas~\ref{lemma-6} and \ref{lemma-7}, the desired inequality is equivalent to the following one:
\begin{equation}\label{eq:w-omega-ineq}
W_k(R_A;B R_B)_{\omega}\leq W_k(R_A A; R_B)_{\rho}\cdot \Sigma_k(\mc{N}),
\end{equation}
and so we aim to prove this one. Exploiting the identity in Lemma~\ref{lemma-6}, we find that 
\begin{equation}
W_k(R_AA; R_B)_{\rho}=\inf \Tr\{C_{R_AAR_B}\},
\end{equation}
subject to the constraints 
\begin{align}
C_{R_AAR_B} &\in \overrightarrow{\EXT}_k(R_AA;B),\\
C_{R_AAR_B}&\geq \rho_{R_AAR_B},
\end{align}
while the identity in Lemma~\ref{lemma-7} gives that 
\begin{equation}
\Sigma_k(\mc{N})=\inf \norm{\Tr_{B}\{Y_{SB}\}}_\infty,
\end{equation}
subject to the constraints 
\begin{align}
Y_{SB} &\in \overrightarrow{\EXT}_k(S;B),\\
 Y_{SB}&\geq \Gamma^\mc{N}_{SB}.\label{eq:choi-bi-b}
\end{align}
The identity in Lemma~\ref{lemma-6} implies that the left-hand side of \eqref{eq:w-omega-ineq} is equal to 
\begin{equation}
W_k(R_A;BR_B)_\omega=\inf \Tr\{E_{R_ABR_B}\},
\end{equation}
subject to the constraints
\begin{align}
E_{R_ABR_B}&\in \overrightarrow{\EXT}_k(R_A;BR_B),\label{eq:rains-sdp-ef}\\
E_{R_ABR_B}&\geq \mc{N}_{A\to B}(\rho_{R_AAR_B}) \label{eq:rains-sdp-channel-ef}.
\end{align}

Once we have these optimizations, we can now show that the inequality in \eqref{eq:w-omega-ineq} holds by making an appropriate choice for $E_{R_ABR_B}$.  Let $C_{R_AAR_B}$ be optimal for $W_k(R_AA;R_B)_\rho$, and let $Y_{R_ABR_B}$ be optimal for $\Sigma(\mc{N})$. Let $| \Gamma \rangle_{SA}$ be the maximally entangled vector. Choose
\begin{align}
E_{R_ABR_B}&=\langle \Gamma |_{SA}C_{R_AAR_B}\otimes Y_{SB}| \Gamma \rangle_{SA}.
\end{align}
We need to prove that $E_{R_ABR_B}$ is feasible for $W_{k}(R_A;BR_B)_{\omega}$. To this end, we have
\begin{align}
 &\langle \Gamma |_{SA}C_{R_AAR_B}\otimes Y_{SB})| \Gamma \rangle_{SA}\nonumber\\&\quad\geq  \langle \Gamma |_{SA}\rho_{R_AAR_B}\otimes \Gamma^{\mc{N}}_{SB})| \Gamma \rangle_{SA}\nonumber\\&\quad=\mc{N}_{A\to B}(\rho_{R_AAR_B}).
\end{align}

Now, since $C_{R_AAR_B}\in\overrightarrow{\EXT}_k(R_AA;R_B)$
and $Y_{SB}\in\overrightarrow{\EXT}_k(S;B)$, it immediately follows that $\langle \Gamma |_{SA}C_{R_AAR_B}\otimes Y_{SB})| \Gamma \rangle_{SA} 
\in\overrightarrow{\EXT}_k(R_A;R_B B)
$.

Consider that
\begin{align}
\Tr\{E_{R_ABR_B}\} & = \Tr\{\langle \Gamma |_{SA}(C_{R_AAR_B}\otimes Y_{SB})| \Gamma \rangle_{SA}\} \notag \\
& = \Tr\{C_{R_AAR_B} T_{A}( Y_{AB})\} \notag \\
& = \Tr\{C_{R_AAR_B} T_{A}(\Tr_{B}\{ Y_{AB})\}\} \notag \\
& \leq \Tr\{C_{R_AAR_B}\}\norm{ T_{A}(\Tr_{B}\{ Y_{AB}\})}_\infty \notag \\
& = \Tr\{C_{R_AAR_B}\}\norm{ \Tr_{B}\{ Y_{AB}\}}_\infty \notag \\
&=W_k(R_AA;R_B)_\rho\cdot \Sigma(\mc{N}).
\end{align}
The inequality is a consequence of H\"{o}lder's inequality \cite{Bha97}. The final equality follows because the spectrum of a positive semi-definite operator is invariant under the action of a full transpose (note, in this case, $\T_{A}$ is the full transpose as it acts on reduced positive semi-definite operator $Y_{A}$).

Therefore, we can infer that our choice of $E_{R_ABR_B}$ is feasible for $W_k(R_A;BR_B)_\omega$. Since $W_k(R_A;BR_B)_\omega$ involves a minimization over all  $E_{R_ABR_B}$ satisfying \eqref{eq:rains-sdp-ef} and \eqref{eq:rains-sdp-channel-ef}, this concludes our proof of \eqref{eq:w-omega-ineq}.
\end{proof}

\begin{remark}\label{rem:cov-sim}
We briefly remark here that if a channel $\mc{N}_{A\to B}$ can be simulated by the action of a $k$-extendible channel $\mc{K}_{ARB'\to B}$ on the channel input $\rho_A$ as well as a resource state $\omega_{RB'}$ (i.e., $\mc{N}_{A\to B}(\rho_A) = \mc{K}_{ARB'\to B}(\rho_A \otimes \omega_{RB'})$),  then the $k$-unextendible divergence of that channel does not increase under amortization, for divergences that are subadditive with respect to tensor-product states. This is a special case of the more general observation put forward in \cite[Section~7]{KW17a} for general resource theories.
\end{remark}
\onecolumngrid

\section{Exploiting symmetries} \label{appendix:exploiting-symmetries}

In this appendix, we provide the following Lemma~\ref{lemma:covariant_channels}, similar to Proposition~2 of \cite{TWW17}, which is helpful in determining the form of the state that optimizes the unextendible generalized channel divergence of a quantum channel that has some symmetry. 
Its proof is identical to that given for \cite[Proposition~2]{TWW17}, but we give it here for completeness.

\begin{lemma}\label{lemma:covariant_channels}
Let $\mc{N}_{A\to B}$ be a covariant channel with respect to a group $\mc{G}$. Let $\rho_A\in\mc{D}(\mc{H}_A)$, and let $\psi^{\rho}_{RA}$ be a purification for it. Define $\rho_{RB}\coloneqq \mc{N}_{A\to B}(\psi^{\rho}_{RA})$ and $\bar{\rho}_{A}\coloneqq \frac{1}{|\mc{G}|}\sum_{g\in\mc{G}}U_A(g)\rho_A U^\dag_A(g)$. Let $\phi^{\bar{\rho}}_{RA}$ be a purification of $\bar{\rho}_A$ and $\bar{\rho}_{RB}\coloneqq \mc{N}_{A\to B}(\phi^{\bar{\rho}}_{RA})$. Then
\begin{equation}
\tf{E}_k(R;B)_{\bar{\rho}}\geq \tf{E}_k(R;B)_{\rho}.
\end{equation}
\end{lemma}



\begin{proof}
Define
\begin{equation}
\ket{\phi}_{PRA}\coloneqq \frac{1}{\sqrt{|\mc{G}|}}\sum_{g\in\mc{G}}\ket{g}_P[I_R\otimes U_A(g)]\ket{\psi}_{RA},
\end{equation}
so that $\phi_{PRA}$ is a purification of $\bar{\rho}_{A}$. 
Let $\tau_{PRB}\in\kex(PR\!:\!B)$, and, given that a local channel is a $k$-extendible channel, observe that
\begin{align}
\sum_{g\in\mc{G}}| g\rangle \!\langle g |_P\tau_{PRB}| g\rangle \!\langle g |_P=\sum_{g\in\mc{G}}p(g)| g\rangle \!\langle g |_P\otimes\tau^g_{RB}\in\kex(PR\!:\!B),
\end{align}
where $\tau^g_{RB}=\frac{1}{p(g)}\bra{g}\!\tau_{PRB}\ket{g}_P$ and $p(g)=\Tr\{\bra{g}\tau_{PRB}\ket{g}_P\}$. Then
\begingroup
\allowdisplaybreaks
\begin{align}
&\tf{D}\!\(\mc{N}_{A\to B}(\phi_{PRA})\left\Vert \tau_{PRB}\)\right.\nonumber\\
&= \tf{D}\!\left.\left(\mc{N}_{A\to B}\!\left(\sum_{g, g'\in\mc{G}}\frac{1}{|\mc{G}|}| g\rangle \!\langle g^\prime |_P\otimes [I_R\otimes U_A(g)]\psi^\rho_{RA}[I_R\otimes U_A^\dag(g^\prime)]\right)\right\Vert \tau_{PRB}\right)\\
&\geq \tf{D}\!\left.\left(\sum_{g\in\mc{G}}\frac{1}{|\mc{G}|}| g\rangle \!\langle g |_P\otimes\mc{N}_{A\to B}
\!\left(U_A(g)\psi^\rho_{RA}U^\dag_A(g)\right)\right\Vert \sum_{g\in\mc{G}}p(g)| g\rangle \!\langle g |_P\otimes\tau^g_{RB}\right)\\
&=\tf{D}\!\left.\left(\sum_{g\in\mc{G}}\frac{1}{|\mc{G}|}| g\rangle \!\langle g |_P\otimes V_B(g)\mc{N}_{A\to B}
\!\(\psi^\rho_{RA}\right)V^\dag_B(g)\right\Vert \sum_{g\in\mc{G}}p(g)| g\rangle \!\langle g |_P\otimes\tau^g_{RB}\right)\\
&=\tf{D}\!\left.\left(\sum_{g\in\mc{G}}\frac{1}{|\mc{G}|}| g\rangle \!\langle g |_P\otimes \mc{N}_{A\to B}\!\(\psi^\rho_{RA}\right)\right\Vert \sum_{g\in\mc{G}}p(g)| g\rangle \!\langle g |_P\otimes V_B^\dag(g)\tau^g_{RB}V_B(g)\right)\\
&\geq \tf{D}\!\left(\mc{N}_{A\to B}(\psi^\rho_{RA})\left\Vert \sum_{g\in\mc{G}}p(g)V_B^\dag(g)\tau^g_{RB}V_B(g)\right)\right.\\
&\geq \inf_{\tau^\prime_{RB}\in\kex(R;B)}\tf{D}(\mc{N}_{A\to B}(\psi_{RA})\Vert \tau^\prime_{RB})\\
&=\tf{E}_k(R;B)_{\rho}.
\end{align}
\endgroup
The first inequality follows because any general divergence is monotonically non-increasing under the action of a quantum channel, which in this case is the completely dephasing channel $(\cdot) \to \sum_{g\in\mc{G}}| g\rangle \!\langle g |_P(\cdot)| g\rangle \!\langle g |_P$. The second equality follows because the channel $\mc{N}$ is covariant. To arrive at the third equality, we use the fact that any generalized divergence is invariant under the action of isometries. To get the second inequality, we apply the partial trace over the classical register $P$, which is a quantum channel. 
The last inequality follows because the state $\sum_{g\in\mc{G}}p(g)V_B^\dag(g)\tau^g_{RB}V_B(g)$ is $k$-extendible, given that it arises from the action of a 1W-LOCC channel on the $k$-extendible state $\tau_{PRB}$.
Noticing that the chain of inequalities holds for arbitrary $\tau_{PRB}\in\kex(PR;B)$, we can then take an infimum over all possible $\tau_{PRB}\in\kex(PR;B)$, and we arrive at the following inequality:
\begin{equation}
\tf{E}_k(PR;B)_{\mc{N}(\phi)} \geq \tf{E}_k(R;B)_{\rho}
\end{equation}
The desired inequality in the statement of the lemma then follows because all purifications of a given state are related by an isometry acting on the purifying system, and the unextendible generalized divergence is invariant under the action of a local isometry.
\end{proof}

\end{document}